\DeclareMathOperator{\ecc}{ecc}
\DeclareMathOperator*{\E}{\mathbb{E}}
\DeclareMathOperator{\cut}{cut}
\DeclareMathOperator{\GKP}{GKP}
\DeclareMathOperator{\ex}{ex}
\DeclareMathOperator{\poly}{poly}
\DeclareMathOperator{\polylog}{polylog}
\DeclareMathOperator{\define}{\overset{def}{=}}
\DeclareMathOperator{\disj}{\mathsf{DISJ}}
\DeclareMathOperator{\congclique}{\mathsf{CLIQUE}}
\DeclareMathOperator{\congest}{\mathsf{CONGEST}}
\DeclareMathOperator{\ncc}{\mathsf{NCC}}
\DeclareMathOperator{\local}{\mathsf{LOCAL}}
\DeclareMathOperator{\oracle}{\mathsf{Oracle}}
\DeclareMathOperator{\hybrid}{\mathsf{HYBRID}}
\DeclareMathOperator{\bcc}{\mathsf{BCC}}
\title{Deterministic Distributed Algorithms and Lower Bounds in the Hybrid Model} 
\titlerunning{Deterministic Distributed Algorithms and Lower Bounds in the Hybrid Model} 
\author{Ioannis Anagnostides}{Department of Computer Engineering, National Technical University of Athens, Greece}{ioannis.anagnostides@gmail.com}{}{}
\author{Themis Gouleakis}{Max Planck Institute for Informatics, Saarbrucken, Germany}{tgouleak@mpi-inf.mpg.de}{}{}
\authorrunning{I. Anagnostides and T. Gouleakis} 
\keywords{Distributed Computing, Hybrid Model, Sparse Graphs, Deterministic Algorithms, All-Pairs Shortest Paths, Minimum Cut, Radius} 
\begin{document}

\maketitle

\begin{abstract}
The $\hybrid$ model was recently introduced by Augustine et al. \cite{DBLP:conf/soda/AugustineHKSS20} in order to characterize from an algorithmic standpoint the capabilities of networks which combine multiple communication modes. Concretely, it is assumed that the standard $\local$ model of distributed computing is enhanced with the feature of all-to-all communication, but with very limited bandwidth, captured by the node-capacitated clique ($\ncc$). In this work we provide several new insights on the power of hybrid networks for fundamental problems in distributed algorithms.

First, we present a deterministic algorithm which solves any problem on a sparse $n$-node graph in $\widetilde{\mathcal{O}}(\sqrt{n})$ rounds of $\hybrid$, where the notation $\widetilde{\mathcal{O}}(\cdot)$ suppresses polylogarithmic factors of $n$. We combine this primitive with several sparsification techniques to obtain efficient distributed algorithms for general graphs. Most notably, for the all-pairs shortest paths problem we give deterministic $(1 + \epsilon)$- and $\log n/\log \log n$-approximate algorithms for unweighted and weighted graphs respectively with round complexity $\widetilde{\mathcal{O}}(\sqrt{n})$ in $\hybrid$, closely matching the performance of the state of the art randomized algorithm of Kuhn and Schneider \cite{10.1145/3382734.3405719}. Moreover, we\footnote{See the acknowledgments.} make a connection with the Ghaffari-Haeupler framework of low-congestion shortcuts \cite{DBLP:conf/soda/GhaffariH16}, leading---among others---to a $(1 + \epsilon)$-approximate algorithm for Min-Cut after $\mathcal{O}(\polylog (n))$ rounds, with high probability, even if we restrict local edges to transfer $\mathcal{O}(\log n)$-bits per round. Finally, we prove via a reduction from the set disjointness problem that $\widetilde{\Omega}(n^{1/3})$ rounds are required to determine the radius of an unweighted graph, as well as a $(3/2 - \epsilon)$-approximation for weighted graphs. As a byproduct, we show an $\widetilde{\Omega}(n)$ round-complexity lower bound for computing a $(4/3 - \epsilon)$-approximation of the radius in the broadcast variant of the congested clique, even for unweighted graphs.
\end{abstract}

\section{Introduction}

Hybrid networks have found numerous applications in real-life computer systems. Indeed, leveraging different communication modes has substantially reduced the complexity and has improved the efficiency of the system, measured in terms of the energy consumption, the latency, the number of switching links, etc. For instance, hybrid architectures have been extensively employed in data centers, augmenting the traditional electrical switching architecture with optical switches in order to establish direct connections \cite{CHEN201645,10.1145/1851182.1851223,10.1145/1851182.1851222}. Another notable example is the 5G standard, which enhances the traditional cellular infrastructure with device-to-device (D2D) connections in order to guarantee very low latency among communication users (see \cite{6736752,KAR2018203,7147775,7588891}, and references therein).

Despite the central role of hybrid architectures in communication systems, a rigorous investigation of their potential has only recently began to formulate in the realm of distributed algorithms. In particular, Augustine et al. \cite{DBLP:conf/soda/AugustineHKSS20} proposed $\hybrid$, a model which combines the extensively-studied \emph{local} ($\local$) \cite{10.1137/0221015,10.5555/355459} model with the recently introduced \emph{node-capacitated clique} ($\ncc$) \cite{10.1145/3323165.3323195}. The former model captures the \emph{locality} of a given problem---nodes are able to exchange messages of arbitrary size but only with adjacent nodes, while the latter model---which enables all-to-all communication but with severe capacity restrictions for every node---addresses the issue of \emph{congestion}; these constitute the main challenges in distributed computing. From a practical standpoint, the local network captures the capabilities of \emph{physical} networks, wherein dedicated edges (e.g. cables or optical fibers) offer large bandwidth and high efficiency, but lack flexibility as they cannot be dynamically adapted by the nodes. In contrast, the global mode relates to \emph{logical} networks, which are formed as an overlay over a shared physical network \cite{feldmann2020fast}; here the feature of all-to-all communication comes at the cost of providing very limited throughput.

In this work we follow the recent line of research \cite{DBLP:conf/soda/AugustineHKSS20,10.1145/3382734.3405719,feldmann2020fast,censorhillel2020distance,gotte2020timeoptimal,DBLP:conf/spaa/Censor-HillelLP21} which endeavors to explore from a theoretical standpoint the power of hybrid models in distributed computing; specifically, the main issue that arises is whether combining two different communication modes offers a substantial improvement over each mode separately. This question is answered in the affirmative for a series of fundamental problems in distributed algorithms, while we also provide some hardness results mainly based on well-established communication-complexity lower bounds.

\subsection{Contributions \& Techniques}

\subsubsection{Sparsification}

First, we consider the design of $\hybrid$ algorithms in \emph{sparse} graphs---i.e. the average degree is $\polylog n$, where $n$ represents the number of communication entities in the graph. We prove the following general result:

\begin{restatable}{theorem}{deterministic}
\label{theorem:deterministic_sparse}
Consider a graph $G = (V, E, w)$ with $|E| = \widetilde{\mathcal{O}}(n)$. There exists a deterministic distributed algorithm such that every node learns the entire topology of the graph in $\widetilde{\mathcal{O}}(\sqrt{n})$ rounds of $\hybrid$.
\end{restatable}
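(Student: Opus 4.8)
The plan is to play the two resources of the $\hybrid$ model off against each other: local edges carry unbounded volume but advance information only one hop per round, whereas the global network reaches any node in $\widetilde{\mathcal{O}}(1)$ rounds but delivers only $\widetilde{\mathcal{O}}(1)$ words per node per round. Since the whole topology consists of $\widetilde{\mathcal{O}}(n)$ edges, and in $\widetilde{\mathcal{O}}(\sqrt{n})$ rounds a node can \emph{receive} at most $\widetilde{\mathcal{O}}(\sqrt{n})$ of them through the global network, the bulk of every node's knowledge must arrive over local links; the global network is used only to \emph{seed} distant information into a node's vicinity, so that a final burst of local flooding can finish the job. Concretely I would run three stages, each of cost $\widetilde{\mathcal{O}}(\sqrt{n})$, and I may assume $G$ connected (otherwise the argument is applied inside each component, whose nodes only need that component's topology).

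\emph{Stage 1 (local flooding).} Every node floods, for $\Theta(\sqrt{n})$ rounds, everything it currently knows to its neighbours; as local bandwidth is unbounded, each node thereby learns its entire $\sqrt{n}$-ball together with the identities and distances of all nearby nodes. The structural fact I would exploit is that in a connected graph every ball of radius $\sqrt{n}$ contains $\Omega(\sqrt{n})$ nodes (a BFS from the centre gains a fresh node per layer until the graph is exhausted). Using this local knowledge I would fix, deterministically, a partition of $V$ into clusters of \emph{strong} diameter $\widetilde{\mathcal{O}}(\sqrt{n})$, so that each cluster holds $\Omega(\sqrt{n})$ members. \emph{Stage 2 (collaborative global fetch).} Orient the edges with out-degree $\widetilde{\mathcal{O}}(1)$ (possible since sparsity yields arboricity $\widetilde{\mathcal{O}}(1)$) and assign them a consistent global ranking, so each edge has a unique responsible holder storing only $\widetilde{\mathcal{O}}(1)$ edges; inside every cluster the $\Omega(\sqrt{n})$ members split the $\widetilde{\mathcal{O}}(n)$ ranks among themselves, so that each member fetches its own $\widetilde{\mathcal{O}}(\sqrt{n})$-sized share over the global network. \emph{Stage 3 (re-flooding).} Each member floods its freshly fetched edges inside its cluster; since the cluster has strong diameter $\widetilde{\mathcal{O}}(\sqrt{n})$ and its members jointly fetched \emph{all} edges, after $\widetilde{\mathcal{O}}(\sqrt{n})$ more rounds every node knows the entire topology. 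The budget balances: in Stage 2 each node receives only $\widetilde{\mathcal{O}}(\sqrt{n})$ edges and each holder sends each of its $\widetilde{\mathcal{O}}(1)$ edges to at most one representative per cluster, i.e. to $\widetilde{\mathcal{O}}(\sqrt{n})$ destinations, so both sides of the global phase respect the per-round capacity up to an $\widetilde{\mathcal{O}}(\sqrt{n})$ factor, while the flooding stages push only large messages along local links, where bandwidth is free.

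The hard part is to realise Stage 2 \emph{deterministically}. What is required is a gather-and-scatter of $\widetilde{\mathcal{O}}(n)$ tokens in which every node is source or destination of $\widetilde{\mathcal{O}}(\sqrt{n})$ of them, scheduled so that no node ever exceeds its $\widetilde{\mathcal{O}}(1)$-per-round global capacity; the natural randomized solution hashes token indices onto relay nodes to smooth out congestion, and removing this randomness is exactly the obstacle. To this end I would replace the random relays by the deterministic gather-and-broadcast scheduling underlying the Garay--Kutten--Peleg ($\GKP$) algorithm, whose $\widetilde{\mathcal{O}}(\sqrt{n})$ machinery transports $\widetilde{\mathcal{O}}(\sqrt{n})$ items per node without congestion and, in the $\hybrid$ setting, can be run over the global network so that its diameter-dependent component disappears. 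The same deterministic care is needed to produce the low-strong-diameter clustering of Stage 1, which the randomized variant would obtain simply by sampling a $\sqrt{n}$-net.
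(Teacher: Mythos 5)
Your high-level architecture is essentially the paper's: build clusters of size $\Theta(\sqrt{n})$ and diameter $\widetilde{\mathcal{O}}(\sqrt{n})$, let the $\Omega(\sqrt{n})$ members of each cluster share the global-network work so that the cluster jointly acquires all $\widetilde{\mathcal{O}}(n)$ edges, and finish with local flooding inside the cluster. However, Stage 2 as written fails on a concretely false claim: $|E| = \widetilde{\mathcal{O}}(n)$ does \emph{not} imply arboricity $\widetilde{\mathcal{O}}(1)$. A graph with $\widetilde{\mathcal{O}}(n)$ edges may contain a clique on $\Theta(\sqrt{n})$ vertices, and any orientation of that clique gives some node out-degree $\Omega(\sqrt{n})$; in general an $m$-edge graph can have arboricity $\Theta(\sqrt{m})$. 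So there is no orientation in which "each edge has a unique responsible holder storing only $\widetilde{\mathcal{O}}(1)$ edges." A holder with $\Omega(\sqrt{n})$ edges that must serve one representative per cluster has to send $\Omega(n)$ messages, which cannot fit into $\widetilde{\mathcal{O}}(\sqrt{n})$ rounds at $\widetilde{\mathcal{O}}(1)$ words per round on the global network. This is precisely the bottleneck the paper neutralizes by explicit load balancing: in the randomized warm-up, high-degree nodes first spread their adjacency lists over their $\local$ neighborhoods, and in the deterministic algorithm each edge is assigned to a \emph{component}, the component's load is spread uniformly over its $\geq \sqrt{n}$ members, and overloaded components then ship excess load to underloaded ones over the global network, so that every component ends with $\widetilde{\mathcal{O}}(\sqrt{n})$ load regardless of the degree distribution.

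The second gap is that the two genuinely deterministic ingredients remain placeholders, and your $\GKP$ citation points at the wrong one. The first phase of $\GKP$ is exactly what produces the balanced clustering (strong diameter $\mathcal{O}(\sqrt{n})$, size $\geq \sqrt{n}$) that you defer to a closing remark --- it is a controlled merging procedure with maximal matchings on the fragment graph, not a "gather-and-scatter scheduling," and it provides no congestion-free token transport over the global network; consistent deterministic clustering cannot simply be read off each node's $\sqrt{n}$-ball, since overlapping balls require agreement and symmetry breaking, which is why the paper runs $\GKP$ rather than a purely local rule. Separately, your pull-based fetch presumes each fetcher knows which node holds each rank, but the rank-to-holder map depends on the (globally unknown) orientation, so disseminating it is itself an $\widetilde{\Omega}(n)$-bit global task --- circular as stated. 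The paper builds the needed common knowledge explicitly and push-based: leader IDs are broadcast in $\ncc$; the $\textsc{MatchingComponents}$ subroutine gives every node the full composition of every cluster together with a bijective node correspondence between any two clusters; and the final dissemination is a collision-free round-robin in which, at iteration $i$, component $C_j$ pushes its balanced $\widetilde{\mathcal{O}}(\sqrt{n})$ load to $C_{(i+j \bmod N)+1}$, each such transfer costing $\widetilde{\mathcal{O}}(1)$ rounds because matched nodes exchange $\widetilde{\mathcal{O}}(1)$ words. Until you supply deterministic substitutes for these two components --- the clustering and the congestion-free scheduling --- your Stage 2 is an unproven reduction to exactly the problem being solved.
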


As a warm-up, we first provide a \emph{randomized} $\hybrid$ algorithm so that every node learns the topology in $\widetilde{\mathcal{O}}(\sqrt{n})$ rounds. More precisely, observe that it suffices to solve a specific instance of the \emph{gossip problem} wherein every node $u \in V$ has to broadcast $\deg(u)$ number of $\mathcal{O}(\log n)$-bit messages---corresponding to its adjacency list---to all the other nodes in the graph. In this context, directly executing the \emph{token dissemination} protocol of Augustine et al. \cite{DBLP:conf/soda/AugustineHKSS20} requires in the worst-case a linear number of rounds since the high-degree nodes create a substantial communication bottleneck. However, we observe that there is a simple remedy; namely, the nodes with high degree can perform \emph{load balancing} via their local neighborhood. Interestingly, this idea relates to the \emph{density-aware} model proposed by Censor-Hillel et al. \cite{censorhillel2020distance}, which they refer to as the $\oracle$ model, in which the broadcasting capacity of a node depends on its degree.

More importantly, we also present a deterministic communication pattern for sparse networks (\Cref{theorem:deterministic_sparse}). Specifically, we first employ the Garay-Kutten-Peleg algorithm \cite{DBLP:journals/siamcomp/GarayKP98} in order to construct a "balanced" partition of the nodes, so that every cluster has "small" weak diameter. Then, we present several deterministic subroutines which allow to disseminate the composition of the clusters, perform load balancing, and finally broadcast the topology to the entire network within the desired round complexity. Along the way, we derandomize the token dissemination protocol of Augustine et al. \cite{DBLP:conf/soda/AugustineHKSS20}, which is one of their main communication primitives. 

Naturally, our guarantee for sparse graphs has an independent interest given that most communication networks of practical interest are very sparse \cite{snapnets}; the canonical example typically cited is the Internet \cite{JWS-0003}. Nonetheless, we leverage several \emph{sparsification} techniques in order to design distributed algorithms for general graphs. In particular, we first employ a deterministic \emph{multiplicative spanner} algorithm \cite{DBLP:conf/wdag/GhaffariK18,DBLP:conf/stoc/RozhonG20} to obtain a $\log n/ \log \log n$-approximation for the weighted \emph{all-pairs shortest paths} (APSP) problem in $\widetilde{\mathcal{O}}(\sqrt{n})$ rounds. For unweighted graphs we leverage the recent deterministic \emph{near-additive spanner} due to Elkin and Mater \cite{DBLP:conf/podc/ElkinM21}, leading to a $(1 + \epsilon)$-approximate algorithm for APSP, for any constant $\epsilon > 0$. Although this does not quite reach the performance of the state of the art algorithm of Kuhn and Schneider \cite{DBLP:conf/soda/AugustineHKSS20}, which yields an exact solution for weighted graphs with asymptotically the same round-complexity (modulo polylogarithmic factors), we stress that our algorithms are \emph{deterministic}. 

Moreover, we use \emph{cut sparsifiers} in order to provide near-optimal algorithms for any cut-related problem in $\widetilde{\mathcal{O}}(\sqrt{n})$ rounds, while we also observe an $\widetilde{\mathcal{O}}(\sqrt{n})$-round algorithm for determing the girth, i.e. the smallest cycle, using a standard trade-off between the girth and the number of edges. Here it is important to point out that our algorithmic scheme "Sparsify \& Conquer" is primarily meaningful when the output requirement is \emph{global}. For example, for the Min-Cut problem, if we require that every node knows a cut at the end of the distributed algorithm, we show an $\widetilde{\Omega}(\sqrt{n})$ round-complexity lower bound for any non-trivial approximation based on a technical lemma in \cite{DBLP:conf/soda/AugustineHKSS20}. However, in many settings this approach may disseminate an overly amount of information. Indeed, under the usual requirement that each node has to know its "side" on the cut, we establish exponentially faster algorithms.

\subsubsection{Simulating \texorpdfstring{$\congest$}{}-based Algorithms}

This accelerated algorithm for Min-Cut is obtained through a connection with the concept of \emph{low-congestion shortcuts}, due to Ghaffari and Haeupler \cite{DBLP:conf/soda/GhaffariH16}. Specifically, in this framework the performance-guarantee for a problem is parameterized in terms of the number of rounds required to solve the standard part-wise aggregation problem. A fascinating insight of Ghaffari and Haeupler \cite{DBLP:conf/soda/GhaffariH16} is that more "structured" topologies (e.g. planar graphs) enable faster algorithms for solving such problems, bypassing some notorious lower bounds under general graphs. Our observation is that a limited amount of global power, in the form of $\ncc$, interacts particularly well with this line of work since $\ncc$ offers very fast primitives for the part-wise aggregation problem. As a result, this connection leads to the following result:

\begin{theorem}
    There exists an $\mathcal{O}(\polylog(n))$-round algorithm for $(1 + \epsilon)$-approximate Min-Cut in $\congest + \ncc$.
\end{theorem}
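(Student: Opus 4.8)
The plan is to treat the low-congestion shortcut framework of Ghaffari and Haeupler \cite{DBLP:conf/soda/GhaffariH16} as a black box and to supply its bottleneck primitive cheaply using the global power of $\ncc$. Recall that in the \emph{part-wise aggregation} problem the vertex set $V$ is partitioned into connected parts $P_1, \dots, P_N$, every node $v$ holds a value $x_v$ together with the identifier $c_v$ of its part, and each node must learn $\bigoplus_{u \in P(v)} x_u$ for a commutative and associative operator $\oplus$. Ghaffari and Haeupler express their $(1 + \epsilon)$-approximate Min-Cut algorithm as $\mathcal{O}(\polylog n)$ invocations of this primitive interleaved with $\mathcal{O}(\polylog n)$ plain $\congest$ rounds; consequently, it suffices to show that a single part-wise aggregation instance can be resolved in $\mathcal{O}(\polylog n)$ rounds of $\congest + \ncc$, after which the $\congest$ steps of the reduction run for free and the round complexities simply compose.

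First I would observe that, because $\ncc$ offers all-to-all communication, the connectivity of the parts inside $G$ is no longer needed for the aggregation: with the identifiers $c_v$ in hand, the task collapses to a pure \emph{group-by} aggregation of the values $\{x_v\}_{v \in V}$ keyed by part. I would realize this via a balanced, oblivious pipeline on the $\ncc$---(i) sort the pairs $(c_v, x_v)$ by key, (ii) run a segmented scan so that each maximal block of equal keys (i.e. each part) computes its aggregate, and (iii) route the per-part results back to the originating nodes along the inverse permutation---using standard $\ncc$ aggregation and routing primitives \cite{DBLP:conf/soda/AugustineHKSS20}. Each phase is composed of $\mathcal{O}(\polylog n)$ layers of structured communication in which every node injects and forwards only $\mathcal{O}(1)$ data items, so the pipeline terminates in $\mathcal{O}(\polylog n)$ rounds while respecting the $\mathcal{O}(\log n)$ per-node bandwidth of $\ncc$.

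I expect the main obstacle to be reconciling this aggregation with the node-capacity constraint of $\ncc$. The naive strategy of funneling every value toward a designated leader of each part overloads the leaders of large parts and incurs a round complexity linear in the maximum part size---exactly the high-degree bottleneck already confronted in \Cref{theorem:deterministic_sparse}. The remedy is precisely the balanced sorting/scan schedule above, in which load is spread evenly and no node ever handles more than a single value; the crux of the formal argument is to verify this load balance and to confirm that the identifiers demanded by the framework are either supplied by the reduction or, if not, computed in $\mathcal{O}(\polylog n)$ rounds by a Bor\r{u}vka-style contraction that itself only calls the aggregation primitive. Composing the $\mathcal{O}(\polylog n)$ part-wise aggregation calls of the Ghaffari--Haeupler Min-Cut reduction, each costing $\mathcal{O}(\polylog n)$ rounds, then yields the claimed $\mathcal{O}(\polylog n)$-round algorithm, the high-probability guarantee being inherited from the random sampling internal to the approximate Min-Cut procedure.
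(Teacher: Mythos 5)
Your proposal follows essentially the same route as the paper: parameterize the Ghaffari--Haeupler $(1+\epsilon)$-approximate Min-Cut algorithm by the round complexity $Q$ of part-wise aggregation, then use the global network to solve each aggregation instance in $\mathcal{O}(\polylog n)$ rounds, so the complexities compose to $\mathcal{O}(\polylog(n))$. The only difference is that you re-derive the aggregation primitive via a sort/segmented-scan pipeline, whereas the paper simply invokes the existing $\mathcal{O}(\log n)$-round part-wise aggregation algorithm of the $\ncc$ paper (\Cref{lemma:part-wise-ncc}), which already handles the load-balancing concern you raise; your extra machinery is sound in spirit but unnecessary.
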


Note that this guarantees applies even if local edges are restricted to transfer only $\mathcal{O}(\log n)$ bits per round, i.e. the local network is modeled with $\congest$ instead of $\local$. Another notable corollary of this connection is an approximate single-source shortest paths algorithm (\Cref{corollary:low_congestion-shortests_paths}) based on a result by Haeupler and Li \cite{DBLP:conf/wdag/HaeuplerL18}, coming close to the algorithm of Augustine et al. \cite{DBLP:conf/soda/AugustineHKSS20} under the substantially more powerful $\hybrid$. We also present another simulation argument, which in a sense eliminates the dependence of the performance of a $\congest$ algorithm on the hop-diameter through an appropriate augmentation of the graph with global edges (see \Cref{proposition:sim-diameter}).

\subsubsection{Distance Computation Tasks}

Finally, we focus on distance computation tasks, and in particular, the complexity of determining the \emph{radius} and the \emph{diameter} of the underlying graph---the smallest and the largest of the eccentricities respectively. For the former, we show the following result:

\begin{theorem}
    \label{theorem:two}
    For any $\epsilon \in (0, 1/2]$, determining a $(3/2 - \epsilon)$-approximation for the radius of a weighted graph with probability $2/3$ requires $\widetilde{\Omega}(n^{1/3})$ rounds of $\hybrid$. For unweighted graphs, determining the radius requires $\widetilde{\Omega}(n^{1/3})$ rounds of $\hybrid$.
\end{theorem}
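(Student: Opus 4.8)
The plan is to reduce the two-party set-disjointness problem $\disj$ to radius computation in $\hybrid$, using that $\disj$ on $k$-bit inputs has randomized communication complexity $\Omega(k)$ even for error probability $1/3$. Alice holds $X\subseteq[k]$ and Bob holds $Y\subseteq[k]$; they jointly build a graph $G$ on $\mathcal{O}(n)$ nodes whose radius reveals whether $X\cap Y=\emptyset$, and then simulate the purported radius algorithm to obtain a $\disj$ protocol whose total communication matches the number of bits the algorithm sends across a fixed Alice--Bob interface. The gadget introduces a candidate-center node $c_i$ for each universe element $i\in[k]$, wired so that $\ecc(c_i)$ is small exactly when $i\in X\cap Y$: one ``sink'' controlled by Alice is close to $c_i$ iff $i\in X$, a second sink controlled by Bob is close iff $i\in Y$, and all remaining distances from $c_i$ are uniformly controlled. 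Since the radius is the minimum eccentricity, it is small iff some candidate center is a good center, i.e.\ iff $X\cap Y\neq\emptyset$. In the weighted case I would calibrate the two distance regimes to $2$ and $3$, yielding a multiplicative gap of $3/2$ and hence hardness for any $(3/2-\epsilon)$-approximation; in the unweighted case only consecutive integers $R$ and $R+1$ are realizable, which gives hardness for \emph{exact} computation but not for approximation, matching the two regimes of the statement.

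The decisive difficulty is that $\local$ edges have unbounded bandwidth, so a single cut edge incident to an input-bearing node would leak its bit in one round. I would neutralize this by placing every bit \emph{far} from the interface: the connection realizing ``$i\in X$'' is routed through a path of $\Theta(n^{1/3})$ hops whose total weight (weighted case) or length (unweighted case) is chosen by Alice, and symmetrically for Bob. Then each input bit is $\Omega(n^{1/3})$ hops from the interface, so within any $T=o(n^{1/3})$ rounds no $\local$ message can carry information about it across the cut, and every relevant cross-interface bit must traverse the node-capacitated $\ncc$ layer. Crucially, this decoupling of hop-depth from metric distance is automatic in the weighted case (long \emph{light} paths), whereas in the unweighted case the same long paths also fix the scale $R=\Theta(n^{1/3})$ of the radius, which is exactly why only exact hardness survives there.

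Here I invoke the information-flow lemma of Augustine et al.\ \cite{DBLP:conf/soda/AugustineHKSS20} to bound, per round, the number of bits about $X$ (resp.\ $Y$) that the global network can move across the interface; call this capacity $B=\widetilde{\mathcal{O}}(\cdot)$. A correct $T$-round radius algorithm then induces a $\disj$ protocol sending $\mathcal{O}(T\cdot B)$ bits, and comparing with the $\Omega(k)$ lower bound forces $T=\widetilde{\Omega}(k/B)$. The remaining work is a parameter balance: choosing the universe size $k$, the path length $\Theta(n^{1/3})$, and the partition of nodes between the two sides so that $k/B=\widetilde{\Theta}(n^{1/3})$ while the padding paths keep the total node count at $\mathcal{O}(n)$.

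I expect the principal obstacle to be twofold. First is the combined accounting of the two communication layers: one must argue rigorously that deep placement of the inputs renders the unbounded-bandwidth $\local$ edges useless within the round budget, so that the genuine bottleneck is the global node-capacity $B$ rather than the graph cut; this is where the technical lemma of \cite{DBLP:conf/soda/AugustineHKSS20} must be applied with care. Second, and more delicate, is engineering the distance gadget so that the minimum eccentricity is attained at the intended candidate centers: the background connectivity needed to keep all other distances small must not open alternate short routes to the two sinks (which would make an unrelated node a good center irrespective of $X\cap Y$), and resolving this tension—in the spirit of existing eccentricity lower-bound constructions—is the crux of establishing correctness of the reduction for both the weighted $3/2$-gap and the unweighted $R$-versus-$(R+1)$ separation.
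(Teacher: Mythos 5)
Your high-level plan---a disjointness reduction with a gap gadget, long paths to neutralize the unbounded-bandwidth local edges, a two-party simulation whose cross-communication is charged to the global layer, and the weighted $3/2$-gap versus unweighted exact-computation dichotomy---is the same as the paper's. But the gadget you propose cannot deliver the parameters that your own accounting requires, and this is a fatal gap. In $\hybrid$ the per-round cross-interface capacity of the global network is $B = \widetilde{\Theta}(n)$ bits: every one of the $n$ nodes may send $\Theta(\log^2 n)$ bits of global traffic per round, and nothing in the construction can reduce this. Hence your bound $T = \widetilde{\Omega}(k/B)$ yields $\widetilde{\Omega}(n^{1/3})$ only if the disjointness universe has size $k = \widetilde{\Omega}(n^{4/3})$. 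Your gadget, however, dedicates a distinct candidate-center node $c_i$ to each universe element $i \in [k]$, which forces $k \leq n$; worse, after attaching a $\Theta(n^{1/3})$-hop padding path per element you are down to $k = \mathcal{O}(n^{2/3})$, so $k/B = \widetilde{\mathcal{O}}(n^{-1/3})$ and the resulting round lower bound is vacuous. You even write down the required balance $k/B = \widetilde{\Theta}(n^{1/3})$ without noticing that it is unattainable with one node per universe element.

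The paper resolves exactly this tension with a \emph{dense} encoding: the inputs are strings of length $k^2$ (not $k$), realized as the presence or absence of the potential edges of a bipartite gadget---Alice's bits are edges between sets $U$ and $V$, Bob's between $V'$ and $U'$, each set of size $k = \Theta(n^{2/3})$---so $\Theta(n^{4/3})$ input bits live on only $\Theta(n^{2/3})$ nodes, and with $\Theta(n^{2/3})$ padding paths of length $\ell = \Theta(n^{1/3})$ the total node count stays $\mathcal{O}(n)$. The price is that the number of candidate centers ($|U| = k$) is quadratically smaller than the universe, so a center $u_i$ must test an OR of $k$ conjunctions, namely $\exists j : x_{i,j} = y_{i,j} = 1$, and all the ``off-diagonal'' distances $d(u_i, u_p')$ for $i \neq p$ must be short for reasons independent of the input. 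The paper achieves this with the bit-gadget of Abboud et al.: nodes of $U$ and $U'$ are wired to auxiliary sets $F, T, F', T'$ according to the binary expansions of their indices, and the hub nodes $w, w'$ together with the pendant path $z_0, z_1, \dots$ disqualify every node outside $U$ from being a center. Absent this densification idea (or an equivalent one), your reduction cannot beat the $\widetilde{\Theta}(n)$-bits-per-round global bandwidth, so the proposal fails quantitatively even though its qualitative outline is right. A minor additional point: the tool you want for the two-party simulation is the Kuhn--Schneider simulation lemma, which is what the paper invokes; the Augustine et al.\ lemma you cite is tailored to a single node learning a random variable across one path, and the paper uses it for the Min-Cut lower bound, not for the radius.
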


This limitation applies for any randomized distributed algorithm even if we allow a substantial probability of failure (i.e. Monte Carlo algorithms), and/or public (common) randomness. We should point out that our lower bound for unweighted graphs matches the known upper bound for \emph{approximate} radius, as the authors in \cite{censorhillel2020distance} provide a $(1 + \epsilon)$-approximation for all the unweighted eccentricities in $\widetilde{\mathcal{O}}(n^{1/3})$ rounds of $\hybrid$, for any constant $\epsilon > 0$. Our theorem also supplements the hardness result of Kuhn and Schneider \cite{10.1145/3382734.3405719} who established analogous lower bounds for the diameter. 

More precisely, we give a suitable \emph{dense} gadget graph whose edges correspond to the input-strings of two players endeavoring to solve the set disjointness problem. Then, we show that there is a gap in the value of the radius depending on whether the input of the two players is \emph{disjoint}. Our construction uses a \emph{bit-gadget}, a component introduced in \cite{DBLP:conf/wdag/AbboudCK16} (see also \cite{10.5555/2722129.2722241}) in order to show a linear lower bound for determining the radius in $\congest$, even for sparse graphs. Nonetheless, our reduction has several differences given that the source of the communication bottleneck is quite different in $\congest$ (where it suffices to induce a bottleneck in the \emph{communication cut} between the two players) compared to a model with all-to-all communication. As a result, we first prove an $\widetilde{\Omega}(n)$ round-complexity lower bound for determining a $(4/3-\epsilon)$-approximation of the radius in the \emph{broadcast} variant of the \emph{congested clique} ($\bcc$), for any $\epsilon \in (0, 1/3]$, even for unweighted graphs; we consider this result to be of independent interest. Next, with minor modifications in the construction we show \Cref{theorem:two}. These results require simulation arguments, establishing that Alice and Bob can indeed employ (or simulate) the communication pattern of the distributed algorithm in order to solve the set disjointness problem. In this context, for the $\hybrid$ model we make use of the simulation argument of Kuhn and Schneider \cite{10.1145/3382734.3405719}.

Finally, for the weighted diameter the state of the art algorithm in $\hybrid$ simply performs a Dijkstra search from an arbitrary source node and returns as the estimation the eccentricity (i.e. the largest distance) of the source node \cite{censorhillel2020distance}; an application of the triangle inequality implies that this algorithm yields a $2$-approximation of the actual diameter. We make a step towards improving this approximation ratio. Specifically, we show that for graphs with small degrees ($\Delta = \mathcal{O}(\polylog n)$) we can obtain a $3/2$-approximation of the diameter with asymptotically the same round-complexity, namely $\widetilde{\mathcal{O}}(n^{1/3})$ rounds. This result is based on the sequential algorithm of Roditty and Vassilevska W. \cite{DBLP:conf/stoc/RodittyW13}. Our contribution is to establish that their algorithm can be substantially parallelized in $\hybrid$; this is shown by employing some machinery developed in \cite{censorhillel2020distance} for solving in parallel multiple single-source shortest paths problems. 

\begin{table}[ht]
\tiny
\caption{An overview of our main results. Here it is assumed that $\epsilon > 0$ is an arbitrarily small constant.}
\centering 
\resizebox{\textwidth}{!}{\begin{tabular}{c | c | c | c | c | c }
\toprule
 Problem & Variant & Approximation & Model & Complexity & Technique
\\ \midrule
\multicolumn{1}{c|}{\multirow{2}{*}{Deterministic APSP}} & Unweighted & $1 + \epsilon$ &
\multicolumn{1}{c|}{\multirow{2}{*}{$\hybrid$}} & \multicolumn{1}{c|}{\multirow{2}{*}{$\widetilde{\mathcal{O}}(\sqrt{n})$}} & \multicolumn{1}{c}{\multirow{2}{*}{Sparsification: \cite{DBLP:conf/stoc/RozhonG20,DBLP:conf/podc/ElkinM21}}} \\
& Weighted & $\log n /\log \log n$ & & & \\ \midrule
MST & Weighted & Exact & \multicolumn{1}{c|}{\multirow{3}{*}{$\congest + \ncc$}} & $\mathcal{O}(\log^2 n)$ & \multicolumn{1}{c}{\multirow{3}{*}{Shortcuts: \cite{DBLP:conf/soda/GhaffariH16,DBLP:conf/wdag/HaeuplerL18}}} \\
Min-Cut & Weighted & $1 + \epsilon$ & & $\mathcal{O}(\polylog (n))$ & \\
SSSP & Weighted & $\polylog(n)$ & & $\widetilde{\mathcal{O}}(n^{\epsilon})$ & \\ \midrule
\multicolumn{1}{c|}{\multirow{3}{*}{Radius}} & Unweighted & $4/3 - \epsilon$ & $\bcc$ & $\widetilde{\Omega}(n)$ & \multicolumn{1}{c}{\multirow{3}{*}{Set Disjointness: \cite{DBLP:journals/siamdm/KalyanasundaramS92,DBLP:conf/wdag/AbboudCK16,10.1145/3382734.3405719}}} \\
& Unweighted & Exact & \multicolumn{1}{c|}{\multirow{2}{*}{$\hybrid$}} & \multicolumn{1}{c|}{\multirow{2}{*}{$\widetilde{\Omega}(n^{1/3})$}} & \\
& Weighted & $3/2 - \epsilon$ & & & \\ \bottomrule
\end{tabular}}
\label{table:results}
\end{table}

\subsection{Related Work}

As we explained in our introduction, the $\hybrid$ model was only recently introduced by Augustine, Hinnenthal, Kuhn, Scheideler, and Schneider \cite{DBLP:conf/soda/AugustineHKSS20}. Specifically, they developed several useful communication primitives in order to tackle distance computation tasks; most notably, for the SSSP problem they established a $(1 + o(1))$-approximate solution in $\widetilde{\mathcal{O}}(n^{1/3})$ rounds, while they also presented an algorithm with round complexity $\widetilde{\mathcal{O}}(\sqrt{n})$ for approximately solving the weighted APSP problem with high probability.\footnote{We will say that an event holds \emph{with high probability} if it occurs with probability at least $1 - 1/n^c$ for some constant $c > 0$.} Their lower bound for the APSP problem was matched in a subsequent work by Kuhn and Schneider \cite{10.1145/3382734.3405719}, showing that $\widetilde{\mathcal{O}}(\sqrt{n})$ rounds suffice in order to \emph{exactly} solve APSP. They also presented an $\widetilde{\Omega}(n^{1/3})$ lower bound for determining the diameter based on a reduction from the two-party set disjointness problem. 

Moreover, Censor-Hillel et al. \cite{censorhillel2020distance} improved several aspects of the approach in \cite{DBLP:conf/soda/AugustineHKSS20}, showing how to exactly solve multiple SSSP problems in $\widetilde{\mathcal{O}}(n^{1/3})$ rounds; they also presented near-optimal algorithms for approximating all the eccentricities in the graph. For the approximate SSSP problem an improvement over the result in \cite{DBLP:conf/soda/AugustineHKSS20} was recently achieved by Censor-Hillel et al. \cite{DBLP:conf/spaa/Censor-HillelLP21}, obtaining a $(1 + \epsilon)$-approximate algorithm in $\widetilde{O}(n^{5/17})$ rounds of $\hybrid$, for a sufficiently small constant $\epsilon > 0$. More restricted families of graphs (e.g. very sparse graphs or \emph{cactus} graphs) were considered by Feldmann et al. \cite{feldmann2020fast}, establishing an exponential speedup over some of the previous results even though they modeled the local network via $\congest$, which is of course substantially weaker than $\local$. Finally, Götte et al. \cite{gotte2020timeoptimal} provided several fast hybrid algorithms for problems such as connected components, spanning tree, and the maximal independent set.

The \emph{node-capacitated clique} model ($\ncc$) was recently introduced in \cite{10.1145/3323165.3323195}; it constitutes a much weaker---and subsequently much more realistic---model than the \emph{congested clique} ($\congclique$) of Lotker et al. \cite{10.1145/777412.777428} in which every node can communicate with \emph{any} other node (instead of only $\mathcal{O}(\log n)$ other nodes in $\ncc$) with $\mathcal{O}(\log n)$-bit messages. Indeed, in $\congclique$ a total of $\widetilde{\Theta}(n^2)$ bits can be transmitted in each round, whereas in $\ncc$ the cumulative broadcasting capacity is only $\widetilde{\Theta}(n)$ bits; as evidence for the power of $\congclique$ we note that even slightly super-constant lower bounds would give new lower bounds in circuit complexity, as implied by a simulation argument in \cite{DBLP:conf/podc/DruckerKO13}. 

Reductions from communication complexity to distributed computing are by now fairly standard in the literature; see \cite{10.5555/795665.796477,10.1145/1993636.1993686,10.5555/2095116.2095207} and references therein. We also refer to \cite{10.5555/795665.796477,10.1145/1993636.1993686,10.5555/2095116.2095207} for reductions in the \emph{broadcast} variant of $\congclique$ where in each round every node can send the \emph{same} $\mathcal{O}(\log n)$-bit message to all the nodes. In particular, we should mention that the authors in \cite{10.5555/2095116.2095207} present several lower bounds for subgraph detection (see \cite{DBLP:conf/wdag/DolevLP12,DBLP:journals/dc/Censor-HillelKK19}), a problem which is studied in the present work as well; naturally, these lower bounds directly apply for $\ncc$. Our construction for the radius is inspired by the gadget in \cite{DBLP:conf/wdag/AbboudCK16}, wherein the authors showed near-linear lower bounds for determining the radius in $\congest$, even for sparse networks. Finally, we refer to \cite{10.1007/978-3-642-41527-2_1,10.1007/978-3-662-45174-8_30,10.1145/3212734.3212750} for some of the state of the art technology for the Min-Cut problem.

\section{Preliminaries}

We assume that the network consists of a set of $n$ communication entities (e.g. processors) with $[n] \define \{1, 2, \dots, n\}$ the set of IDs, and a local communication \emph{topology} given by a graph $G = (V, E, w)$. We will tacitly posit that $G$ is \emph{undirected}, unless explicitly stated otherwise; we also assume that for all $e \in E, w(e) \in \{1, 2, \dots, W\}$, for some $W = \poly (n)$. At the beginning each node knows the identifiers of each node in its  neighborhood, but has no further knowledge about the topology of the graph. Communication occurs in \emph{synchronous rounds}; in every round nodes have unlimited computational power\footnote{Nonetheless, we remark that most of our algorithms use a reasonable amount of computation.} to process the information they posses. The \emph{local} communication mode will be modeled with $\local$, for which in each round every node can exchange a message of \emph{arbitrary} size with its neighbors in $G$ via the \emph{local} edges. The \emph{global} communication mode uses $\ncc$ for which in each round every node can exchange $\mathcal{O}(\log n)$-bit\footnote{Our results could be parameterized by the size of the message $B$, but for concreteness we assume throughout this paper that $B = \mathcal{O}(\log n)$.} messages with up to $\mathcal{O}(\log n)$ arbitrary nodes via \emph{global} edges. More broadly, one can parameterize hybrid networks by the number of bits $\lambda$ that can be exchanged via local edges, and the number of bits $\gamma$ that can be exchanged via the global mode. Interestingly, all standard models can be seen as instances of this general parameterization; namely, $\local: \lambda = \infty, \gamma = 0, \congest: \lambda = \mathcal{O}(\log n), \gamma = 0, \congclique: \lambda = 0, \gamma = \mathcal{O}(n \log n),\footnote{This follows from Lenzen's routing \cite{DBLP:conf/podc/Lenzen13}.} \ncc: \lambda = 0, \gamma = \mathcal{O}(\log^2 n)$. 

If the capacity of some channel is exceeded the corresponding nodes will only receive an \emph{arbitrary} (potentially adversarially selected) subset of the information according to the capacity of the network, while the rest of the messages are dropped. The performance of a distributed algorithm is measured in terms of its \emph{round-complexity}---the number of rounds required so that every node knows its part of the output; for randomized protocols it will suffice to reach the desired state with high probability. Finally, all of the derived round-complexity upper bounds in $\hybrid$ should be thought of as having a minimum with the (hop) diameter of the network.

\subsection{Useful Communication Primitives}

A \emph{distributive aggregate function} $f$ maps a multiset $S = \{x_1, \dots, x_N\}$ of input values to some value $f(S)$, such that there exists an aggregate function $g$ so that for any multiset $S$ and any partition $S_1, \dots S_{\ell}, f(S) = g(f(S_1), \dots, f(S_{\ell}))$; typical examples that we will use include $\textsc{Max}, \textsc{Min}$, and $\textsc{Sum}$. Now consider that we are given a distributive aggregate function $f$ and a set $A \subseteq V$, so that every member of $A$ stores \emph{exactly} one input value. The \emph{aggregate-and-broadcast} problem consists of letting every node in the graph learn the value of $f$ evaluated at the corresponding input.

\begin{lemma}[\cite{10.1145/3323165.3323195}, Theorem 2.2]
    \label{lemma:AB}
    There exists an algorithm in $\ncc$ which solves the aggregate-and-broadcast problem in $\mathcal{O}(\log n)$ rounds. 
\end{lemma}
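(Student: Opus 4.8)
The plan is to realize the aggregate-and-broadcast task on a virtual balanced binary tree whose leaves are the $n$ nodes, exploiting the fact that every node knows $n$ and its own identifier in $[n]$, so that this tree can be fixed with no communication. The computation then proceeds in two phases: a bottom-up \emph{aggregation} phase that funnels the global value of $f$ to a single root, followed by a top-down \emph{broadcast} phase that disseminates it to all of $V$. Each phase touches the $\Theta(\log n)$ levels of the tree exactly once, so the overall round complexity is $\mathcal{O}(\log n)$.

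For the aggregation phase I would use a pairing-based reduction along this tree. Listing the nodes in order of their identifiers as the leaves, in round $t$ the partial aggregates held by the current survivors are combined pairwise: the right member of each pair forwards its value to the left member, who updates its own value by applying $g$ to the two and remains a survivor for the next round, so that the surviving set roughly halves each round. Nodes not in $A$ simply enter the process holding the identity element of $f$ (e.g.\ $0$ for $\textsc{Sum}$, $-\infty$ for $\textsc{Max}$, $+\infty$ for $\textsc{Min}$), which $g$ leaves unchanged; the distributive property of $f$ then guarantees that after $\lceil \log n \rceil$ rounds the single remaining survivor (say node $1$) holds $f(S)$ exactly. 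Crucially, in every round each node sends at most one and receives at most one message, so the $\ncc$ capacity of $\mathcal{O}(\log n)$ channels per node is respected with room to spare.

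The broadcast phase reverses this schedule: starting from node $1$, which now knows $f(S)$, the informed set doubles at every round as each currently-informed node relays the value to the partner it interacted with during aggregation, until after another $\lceil \log n \rceil$ rounds every node of $V$ holds $f(S)$; again each node transmits at most one message per round. The main point requiring care---and the only genuine obstacle---is confirming that the communication pattern never violates the node capacity of $\ncc$: unlike $\congclique$, a node here may exchange messages with only $\mathcal{O}(\log n)$ partners per round and will drop any excess if overloaded. Because both the pairing reduction and its reversal induce a matching (in-degree and out-degree at most one) in each round, this constraint is satisfied throughout, and the two phases together yield the claimed $\mathcal{O}(\log n)$-round algorithm.
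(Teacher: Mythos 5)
Your proof is correct in substance, but note that the paper does not actually prove this lemma: it is imported verbatim from \cite{10.1145/3323165.3323195} (Theorem 2.2 there), so there is no in-paper argument to compare against, and what you have written is a self-contained replacement. As such it works, with one important caveat about scope. Your construction silently uses the convention of \emph{this} paper's preliminaries that the ID space is exactly $[n]$ and known to everyone, which is what makes your virtual binomial tree communication-free and your algorithm deterministic; in the original $\ncc$ paper a node initially knows only the identifiers of its neighbors in the input graph, so the proof there must first build a global communication structure before any tree-based aggregation can run, and several of those primitives hold only with high probability. Your argument therefore does not reprove the cited theorem in its original generality, but it does suffice for every invocation of \Cref{lemma:AB} in this paper, and the core of it---pairwise halving in the aggregation phase, the reversed schedule doubling the informed set in the broadcast phase, and the observation that both phases induce a matching in every round so the per-node capacity of $\ncc$ is respected with a single send and a single receive---is exactly right and gives $2\lceil \log n \rceil = \mathcal{O}(\log n)$ rounds.

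One genuine (though easily patched) gap: you let nodes outside $A$ enter the reduction holding "the identity element of $f$," but a general distributive aggregate function need not possess an identity; the definition only guarantees $f(S) = g(f(S_1), \dots, f(S_\ell))$ for partitions into nonempty parts. The fix is to have valueless nodes carry an explicit \emph{empty} marker, with the combining rule applying $g$ only when both arguments are genuine partial aggregates and otherwise passing the nonempty one through unchanged; correctness then follows since $f(S) = g(f(S_1), f(S_2))$ for nonempty $S_1, S_2$ and $f(S) = f(S_1)$ when $S_2$ contributes nothing. For the functions actually used in this paper ($\textsc{Max}$, $\textsc{Min}$, $\textsc{Sum}$) identities exist and your shortcut is harmless. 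You should also state the standing assumption that every partial aggregate fits in a single $\mathcal{O}(\log n)$-bit message, which holds for the uses here (for instance, a $\textsc{Sum}$ of $n$ polynomially bounded values still has $\mathcal{O}(\log n)$ bits, and the paper explicitly chunks larger sums in \Cref{proposition:counting_cycles-hybrid}).
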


In the $(k, \ell)$-token dissemination problem (henceforth abbreviated as $(k,\ell)$-TD) there are $k$ (distinct) tokens (or messages), each of size $\mathcal{O}(\log n)$ bits, with every node initially having at most $\ell$ tokens. The goal is to guarantee that every node in the graph has collected all of the tokens.

\begin{lemma}[\cite{DBLP:conf/soda/AugustineHKSS20}, Theorem 2.1]
    \label{lemma:TD}
    There exists a randomized algorithm in $\hybrid$ which solves the $(k, \ell)$-TD problem on connected graphs in $\widetilde{\mathcal{O}}(\sqrt{k} + \ell)$ rounds with high probability.
\end{lemma}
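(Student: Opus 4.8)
\begin{sproof}
The plan is to decouple the two resources that $\hybrid$ offers: use $\ncc$ only to place, for every token, a sparse but \emph{well-spread} set of copies across the network, and then rely on the unbounded bandwidth of $\local$ to let each node harvest all tokens from nearby copies by plain flooding. The point is that neither mode alone suffices---pure $\ncc$ needs $\widetilde{\mathcal{O}}(k)$ rounds because each node must \emph{receive} all $k$ tokens at a rate of only $\widetilde{\mathcal{O}}(1)$ tokens per round, while pure $\local$ is governed by the (possibly linear) hop-diameter---so the argument must exhibit a sweet spot in which a $\sqrt{k}$-radius flooding combined with a density-$1/\sqrt{k}$ seeding balances these two costs.

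Concretely, fix a flooding radius $r \define \lceil \sqrt{k} \rceil$ and, for each token $t$, include every node independently in a seed set $C_t$ with probability $p \define \Theta(\log n / \sqrt{k})$; a node that ends up in $C_t$ will store a copy of $t$. The first key step is a \emph{covering} claim: assuming $r < n$ (otherwise flooding alone already finishes within the diameter bound), connectivity of $G$ guarantees that every radius-$r$ ball $B_r(v)$ contains at least $r+1$ distinct nodes, since the BFS frontier stays non-empty until the whole graph is reached. Hence $\Pr[C_t \cap B_r(v) = \emptyset] \le (1-p)^{r+1} \le n^{-c}$, and a union bound over all $n$ nodes and all $k$ tokens shows that with high probability $C_t$ is a hop-$r$ net for every $t$ simultaneously. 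Consequently, once the copies are in place, $r = \widetilde{\mathcal{O}}(\sqrt{k})$ rounds of $\local$ flooding---each node forwarding to its neighbours the union of everything it has heard, which is congestion-free because $\local$ messages are unbounded---let every node collect a copy of every token.

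It remains to seed the copies within the stated budget over $\ncc$, and this is exactly where the two cost terms appear. A node initially holds at most $\ell$ tokens, so even emitting each of them once already forces the $+\ell$ term; meanwhile the total number of copies to be created is $\sum_t |C_t| = \widetilde{\mathcal{O}}(k \cdot n/\sqrt{k}) = \widetilde{\mathcal{O}}(n\sqrt{k})$, which against the $\widetilde{\mathcal{O}}(n)$ per-round throughput of $\ncc$ accounts for the $\sqrt{k}$ term. The delivery itself would be carried out by a doubling/broadcast-tree process: in each phase every current holder of a token forwards it to a fresh uniformly random node, so after $\mathcal{O}(\log(pn))$ phases token $t$ resides at $\Theta(pn) = |C_t|$ essentially uniform nodes. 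Because both senders and targets are (near-)uniform, the per-node send- and receive-loads concentrate---each node is the source and the destination of only $\widetilde{\mathcal{O}}(\sqrt{k})$ transmissions over the whole process---so the induced routing instances can be resolved with a balanced $\ncc$ routing primitive in $\widetilde{\mathcal{O}}(\sqrt{k} + \ell)$ rounds in total, matching the claim.

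The main obstacle I expect is precisely this last step, namely controlling congestion in the seeding phase. One has to argue via a Chernoff bound over the doubling tree that the random fan-out never overloads a single node in any phase, that collisions among random targets shrink $|C_t|$ by at most a constant factor (so that the covering claim still applies), and that the geometrically growing per-phase workloads sum to $\widetilde{\mathcal{O}}(\sqrt{k})$ rather than to the work of the final phase alone. By contrast, the $\local$ flooding step and the covering claim are comparatively routine once connectivity is used to lower-bound the ball sizes; the delicate part is entirely in making the $\ncc$ multicast simultaneously sparse enough to route in $\widetilde{\mathcal{O}}(\sqrt{k})$ rounds and dense enough to cover every node within radius $r$.
\end{sproof}
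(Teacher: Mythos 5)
This lemma is not proved in the paper at all --- it is imported verbatim from Augustine et al.\ \cite{DBLP:conf/soda/AugustineHKSS20} (their Theorem 2.1) --- so the only meaningful comparison is against that reference's argument, and your sketch reconstructs essentially that argument: seed each token at $\Theta(n \log n/\sqrt{k})$ uniformly random nodes via the global network (with the $+\ell$ term coming from the initial injection and the $\sqrt{k}$ term from the $\widetilde{\mathcal{O}}(n)$ per-round global throughput), then flood for $\lceil \sqrt{k} \rceil$ rounds of $\local$, using connectivity to lower-bound the size of every radius-$\sqrt{k}$ ball so that, w.h.p., each such ball contains a copy of every token. The decomposition, the parameter balancing, and your identification of the seeding-phase congestion analysis as the delicate step all match the original proof.
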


Note that this round-complexity scales very favorably compared to the use of only one of the two communication modes comprising $\hybrid$. Indeed, even the \emph{gossip problem}---which corresponds to the $(n, 1)$-TD---requires $\widetilde{\Omega}(n)$ rounds in $\ncc$ (\cite{10.1145/3323165.3323195}), while in the $\local$ model we clearly require $\Omega(D)$ rounds. We will sometimes employ the following special case of \Cref{lemma:TD}, where recall that $\bcc$ stands for the \emph{broadcast} variant of $\congclique$.

\begin{corollary}
    \label{corollary:bcc}
We can simulate with high probability one round of $\bcc$ with $\widetilde{\mathcal{O}}(\sqrt{n})$ rounds of $\hybrid$.
\end{corollary}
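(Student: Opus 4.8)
The plan is to recognize that simulating a single round of $\bcc$ is nothing more than the \emph{gossip problem}, which is exactly the special case $(n,1)$ of token dissemination, and then to invoke \Cref{lemma:TD} as a black box. The whole argument is a reduction together with a parameter substitution; there is no genuinely new protocol to design.

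First I would pin down the pre- and post-conditions. In one round of $\bcc$ every node $u \in V$ holds a single $\mathcal{O}(\log n)$-bit message $m_u$ which it wishes to deliver, unchanged, to \emph{all} other nodes; after the round each node must know the multiset $\{m_u : u \in V\}$. Treating each $m_u$ as a distinct token, this is precisely an instance of $(k,\ell)$-TD with $k = n$ (there are $n$ tokens, one originating at each node) and $\ell = 1$ (each node starts holding at most one token). The message-size budget matches, since tokens are $\mathcal{O}(\log n)$ bits by definition.

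Next I would apply \Cref{lemma:TD} to this instance. It guarantees, with high probability, that every node collects all $n$ tokens within $\widetilde{\mathcal{O}}(\sqrt{k} + \ell) = \widetilde{\mathcal{O}}(\sqrt{n} + 1) = \widetilde{\mathcal{O}}(\sqrt{n})$ rounds of $\hybrid$. Since ``collecting all tokens'' coincides exactly with the required post-condition of a $\bcc$ round, the simulation succeeds within the claimed round-complexity, and the high-probability guarantee carries over directly from the lemma.

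The only subtlety, and the step I expect to require the most care, is the connectivity hypothesis of \Cref{lemma:TD}, which is stated only for connected graphs. When $G$ is connected the invocation is immediate. Otherwise one should note that the global $\ncc$ mode supplies sufficient virtual connectivity for the dissemination protocol to operate on each connected component in parallel without exceeding the bandwidth, so that the bound is unaffected; in any case, in the regimes where we apply this corollary the underlying graph is connected, so the concern is benign.
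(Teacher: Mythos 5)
Your proof is correct and is exactly the paper's (implicit) argument: the paper gives no separate proof, stating \Cref{corollary:bcc} directly as the special case $k = n$, $\ell = 1$ of \Cref{lemma:TD}, i.e., the gossip problem, which is precisely your reduction. One caution on your final side-remark: the claim that the global mode rescues disconnected graphs at the same bound is false in general---an isolated node can receive only $\mathcal{O}(\log^2 n)$ bits per round via $\ncc$ and hence needs $\widetilde{\Omega}(n)$ rounds to collect $n$ tokens---but this is moot here since, as you note, the setting assumes a connected graph.
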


\subsection{Communication Complexity}

Most of our lower bounds are established based on the communication complexity of \emph{set disjointness}, arguably the most well-studied problem in communication complexity (e.g., see \cite{DBLP:journals/toc/HastadW07,Nisan06thecommunication,10.1145/146637.146684}). More precisely, consider two communication parties---namely Alice and Bob---with infinite computational power. Every player is given a binary string of $k$-bits, represented with $x, y \in \{0,1\}^k$ respectively, and their goal is to determine the value of a function $f(x, y)$ by interchanging messages between each other. The players are allowed to use randomization, and the complexity is measured by the expected number of communication in the worst case \cite{10.1145/800135.804414}. For probabilistic protocols the players are required to give the right answer with some probability bounded away from $1/2$, i.e. to outperform random guessing; for concreteness, we assume that the probability of being correct should be $2/3$. It is also interesting to point out that common (public) randomness is allowed, with Alice and Bob sharing an infinite string of independent coin tosses.

In the set disjointness problem ($\disj_k$) the two parties have to determine whether there exists $i \in [k]$ such that $x_i = y_i = 1$; in other words, if the inputs $x$ and $y$ correspond to subsets of a universe $\Omega$, the problem asks whether the two subsets are disjoint---with a slight abuse of notation this will be represented with $x \cap y = \emptyset$. We will use the following celebrated result due to Kalyanasundaram and Schnitger \cite{DBLP:journals/siamdm/KalyanasundaramS92}.

\begin{theorem}[\cite{DBLP:journals/siamdm/KalyanasundaramS92}]
    \label{theorem:disj}
    The randomized communication complexity of $\disj_k$ is $\Omega(k)$.
\end{theorem}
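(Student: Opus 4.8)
The plan is to lower-bound the \emph{information complexity} of the protocol and then invoke the standard fact that expected communication is at least information. Write $\disj_k(x,y) = \bigvee_{i=1}^k (x_i \wedge y_i)$, so that $\disj_k$ is the OR of $k$ independent copies of the two-bit $\mathrm{AND}$ function. Fix a ``hard'' input distribution $\mu$ supported on \emph{disjoint} pairs: independently for each coordinate $i$, draw a partition variable $D_i \in \{A, B\}$ uniformly, set $y_i = 0$ and $x_i$ uniform in $\{0,1\}$ when $D_i = A$, and symmetrically when $D_i = B$. Thus under $\mu$ we always have $x_i \wedge y_i = 0$, the coordinates are mutually independent, and crucially, conditioned on $D = (D_1, \dots, D_k)$ each pair $(X_i, Y_i)$ is a product distribution. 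For a randomized protocol $\Pi$ computing $\disj_k$ correctly with probability $2/3$, define the conditional information cost $\mathrm{IC}_\mu(\Pi) = I(X, Y ; \Pi \mid D)$, where $\Pi$ denotes the transcript.

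First I would establish a \emph{direct sum} in the information measure. Because the coordinates are independent conditioned on $D$ and the protocol communicates over a single channel, a super-additivity argument gives $I(X, Y ; \Pi \mid D) \ge \sum_{i=1}^k I(X_i, Y_i ; \Pi \mid D)$; the conditioning on $D$ is exactly what decouples the coordinates and lets each term be read as the information cost of $\Pi$ on one embedded copy of $\mathrm{AND}$. It therefore suffices to prove a \emph{constant} lower bound on the single-coordinate term, i.e. that any protocol correctly distinguishing the AND-value on one coordinate must leak $\Omega(1)$ bits about that coordinate under the one-coordinate marginal of $\mu$.

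The heart of the argument---and the step I expect to be the main obstacle---is this single-coordinate bound, which I would carry out through the geometry of transcript distributions. Let $\Pi_{ab}$ denote the distribution of the transcript on the fixed input $(a,b) \in \{0,1\}^2$; note the single-coordinate marginal of $\mu$ is supported on $\{(0,0),(0,1),(1,0)\}$. On one hand, the single-coordinate information cost is bounded below by a constant times $h^2(\Pi_{00}, \Pi_{01}) + h^2(\Pi_{00}, \Pi_{10})$, where $h(\cdot,\cdot)$ is the Hellinger distance, since conditioned on $D_i$ revealing the input merely toggles one player's bit and perturbs the transcript. The key structural fact is the \emph{cut-and-paste} property of protocols: because the transcripts consistent with any fixed message sequence form a combinatorial rectangle, one obtains the identity $h(\Pi_{00}, \Pi_{11}) = h(\Pi_{01}, \Pi_{10})$. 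On the other hand, a protocol correct on $\mathrm{AND}$ must keep $\Pi_{11}$ (value $1$) far from $\Pi_{00}$ (value $0$) in statistical, hence Hellinger, distance, so $h(\Pi_{00}, \Pi_{11}) = \Omega(1)$. Combining with the identity gives $h(\Pi_{01}, \Pi_{10}) = \Omega(1)$, and the triangle inequality then forces at least one of $h(\Pi_{00}, \Pi_{01})$ and $h(\Pi_{00}, \Pi_{10})$ to be $\Omega(1)$, yielding the constant lower bound per coordinate.

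Putting the pieces together gives $\mathrm{IC}_\mu(\Pi) = \Omega(k)$, and since the expected communication of any protocol is at least its information cost, the randomized communication complexity of $\disj_k$ is $\Omega(k)$. I note that an alternative route, closer to the original argument of Kalyanasundaram and Schnitger and later streamlined by Razborov, avoids information theory entirely: one fixes a hard distribution placing comparable mass on disjoint pairs and on pairs intersecting in exactly one coordinate, and argues via the \emph{corruption bound} that every large nearly-monochromatic rectangle misclassifies a constant fraction of inputs, so that covering the $1$-inputs requires $2^{\Omega(k)}$ rectangles. Either way, the decisive difficulty is the same: ruling out that a small amount of communication can separate the single critical intersecting coordinate from the overwhelming mass of disjoint inputs.
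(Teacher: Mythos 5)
The paper does not actually prove this statement: \Cref{theorem:disj} is imported as a black box, cited directly to Kalyanasundaram and Schnitger, and used only as the source of hardness in the reductions. Your proposal, by contrast, supplies a genuine proof sketch, and it is a correct outline of a known argument --- not the original Kalyanasundaram--Schnitger proof, but the later information-complexity proof of Bar-Yossef, Jayram, Kumar, and Sivakumar: the conditioned hard distribution supported on disjoint pairs, the direct-sum inequality $I(X,Y;\Pi \mid D) \ge \sum_i I(X_i,Y_i;\Pi \mid D)$, the reduction of the single-coordinate bound to Hellinger geometry, the cut-and-paste identity $h(\Pi_{00},\Pi_{11}) = h(\Pi_{01},\Pi_{10})$, and soundness on the (measure-zero under $\mu$) input $(1,1)$ --- which is legitimate precisely because correctness is required in the worst case --- are all the right ingredients, and the final step from information cost to expected communication is standard. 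Two caveats: your four named lemmas (super-additivity, information versus squared Hellinger distance, cut-and-paste, soundness implies Hellinger separation) carry essentially all the technical weight and are asserted rather than proven, and the cut-and-paste step implicitly assumes the public randomness has been fixed so that the rectangle property of transcripts applies; both are handled in the standard treatment but would need to be spelled out in a complete write-up. What the two routes buy is clear: the paper's citation costs nothing and suffices for its purposes, since only the $\Omega(k)$ bound is ever invoked; your route makes the bound self-contained modulo standard information-theoretic lemmas, and it generalizes (the same machinery gives multiparty and unique-disjointness lower bounds), which the original combinatorial argument does not do as cleanly.
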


\section{Sparsification in Hybrid Networks}

As a warm-up, we commence this section by presenting an $\widetilde{\mathcal{O}}(\sqrt{n})$ \emph{randomized} protocol for solving \emph{any} problem on \emph{sparse} graphs in the $\hybrid$ model. More importantly, we also present a \emph{deterministic} algorithm with asymptotically the same round complexity, up to polylogarithmic factors. Next, we present several applications of this result in general graphs via distributed \emph{sparsification} techniques.

\subsection{Randomized Protocol}

\begin{proposition}[Randomized Hybrid Algorithm for Sparse Networks] 
    \label{proposition:sparse}
Consider an $n$-node (connected) graph $G = (V, E, w)$. There exists a randomized algorithm so that every node in $V$ can learn the entire topology of the graph in $\widetilde{\mathcal{O}}(\sqrt{|E|})$ rounds of $\hybrid$ with high probability.
\end{proposition}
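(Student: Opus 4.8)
The plan is to reduce the task to a single instance of token dissemination and then eliminate the bottleneck caused by high-degree nodes by a cheap load-balancing step over the local edges. First I would observe that for every node to learn the topology it suffices to disseminate the $|E|$ edges of $G$, each encoded together with its weight in $\mathcal{O}(\log n)$ bits: once every node holds all edge-tokens it knows $G$ in full, as the vertex set $[n]$ is known a priori. Casting this as a $(k,\ell)$-TD instance with one token per edge we have $k = |E|$, and every edge is initially known to both of its endpoints, so the only parameter we cannot immediately control is $\ell$, the maximum number of tokens a single node is responsible for. Naively $\ell = \Delta$, the maximum degree, which may be as large as $\Theta(n)$; feeding this into \Cref{lemma:TD} would cost $\widetilde{\mathcal{O}}(\sqrt{|E|} + \Delta)$ rounds, i.e. linear in the worst case.

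To control $\ell$ I would classify a node as \emph{heavy} if its degree exceeds $\sqrt{|E|}$ and as \emph{light} otherwise. The crucial counting observation is that, since the degrees sum to $2|E|$, there are at most $2\sqrt{|E|}$ heavy nodes. After a single round of local communication in which every node announces its class to its neighbours, I would assign to each edge a unique \emph{responsible} broadcaster according to the rule: pick a light endpoint whenever one exists, and otherwise the lower-ID endpoint. This is precisely the ``load balancing via the local neighbourhood'' alluded to earlier, since a heavy node hands the responsibility for each of its edges to a light neighbour whenever possible; note that the responsible node already knows the edge, so no tokens need actually be moved.

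With this rule a light node is responsible for a subset of its incident edges, hence for at most $\deg(\cdot) \le \sqrt{|E|}$ tokens, whereas a heavy node is responsible only for edges joining it to another heavy node, of which there are at most $2\sqrt{|E|}$ (the total number of heavy nodes). Thus the redistributed instance has $\ell = \mathcal{O}(\sqrt{|E|})$ while still $k = |E|$, and I would finally invoke \Cref{lemma:TD}, obtaining a round complexity of $\widetilde{\mathcal{O}}(\sqrt{k} + \ell) = \widetilde{\mathcal{O}}(\sqrt{|E|})$ with high probability; the single class-announcement round leaves the bound unchanged. The randomness, and hence the high-probability guarantee, is inherited entirely from the token-dissemination primitive, while the balancing step is deterministic.

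The main obstacle I anticipate is precisely the balancing analysis: one must rule out that pushing responsibility away from heavy nodes simply recreates a bottleneck elsewhere, in particular at a node adjacent to many heavy nodes. The point that makes the argument go through is that responsibility is transferred to a node only across an edge incident to a \emph{heavy} endpoint, and there are merely $\mathcal{O}(\sqrt{|E|})$ heavy nodes in total; hence no node accumulates more than $\mathcal{O}(\sqrt{|E|})$ responsibilities regardless of how adversarial the degree sequence is. A secondary point to treat carefully is deduplication—each edge is initially known to two nodes—so that the responsibility rule yields a genuine $(k,\ell)$-TD instance with $k = |E|$ \emph{distinct} tokens rather than counting multiplicities.
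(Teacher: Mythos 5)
Your proposal is correct, and while it shares the paper's skeleton---reduce topology learning to $(k,\ell)$-TD with $k=|E|$, use the handshake lemma to bound the number of heavy nodes (degree exceeding $\sqrt{|E|}$) by $2\sqrt{|E|}$, force $\ell = \mathcal{O}(\sqrt{|E|})$, and invoke \Cref{lemma:TD}---your balancing mechanism is genuinely different. The paper physically moves tokens: it processes the heavy nodes sequentially, and each heavy node $u$ collects its neighborhood's tokens and re-spreads them uniformly over $N(u)\cup\{u\}$ in two $\local$ rounds; there the handshake bound caps the \emph{number of iterations} at $2\sqrt{|E|}$, and the per-node load bound follows because a heavy node has more than $\sqrt{|E|}$ neighbors over which to spread a total load of at most $|E|$. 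You never move a token: you exploit the fact that every edge-token is already known to both endpoints and statically reassign responsibility (a light endpoint if one exists, otherwise the lower-ID endpoint), so the handshake bound instead caps the \emph{per-node load}---a heavy node retains only its heavy--heavy edges, of which there are at most $2\sqrt{|E|}$, and a light node retains at most $\deg(\cdot)\leq\sqrt{|E|}$ edges. Your route is simpler and cheaper ($\mathcal{O}(1)$ local rounds for the assignment versus $\Theta(\sqrt{|E|})$ sequential redistribution rounds), though it is tailored to edge-tokens, whereas the paper's redistribution applies to any initial placement of $k$ tokens. Two small points you should make explicit. First, evaluating the threshold $\sqrt{|E|}$ requires every node to know $|E|$, which is a global quantity; this is a one-line fix, e.g.\ one aggregate-and-broadcast of $\textsc{Sum}$ over the degrees via \Cref{lemma:AB} in $\mathcal{O}(\log n)$ rounds of $\ncc$ (the paper instead broadcasts the full degree sequence via \Cref{corollary:bcc} in $\widetilde{\mathcal{O}}(\sqrt{n})$ rounds, which also fits since connectivity gives $|E|\geq n-1$). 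Second, your responsibility rule is ambiguous when both endpoints are light; any consistent tie-break (say, the lower-ID endpoint) works, and after the one-round exchange of heavy/light labels both endpoints of every edge can evaluate the rule identically, so each token indeed has a unique owner.
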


\begin{proof}
First, note that it suffices to solve an instance of the $(k,\ell)$-TD problem with $k = |E|$ and $\ell = \Delta$, where $\Delta$ denotes the maximum degree; indeed, every node $u \in V$ has to disseminate its adjacency list, consisting of $\deg(u)$ number of tokens, each of size $\mathcal{O}(\log n)$ bits. Yet, the token dissemination protocol of \Cref{lemma:TD} can only yield a round complexity of $\widetilde{\mathcal{O}}(\sqrt{|E|} + \Delta)$. We will show how to substantially accelerate this process and obtain the desired round-complexity.

As part of the first step, every node $u \in V$ has to transmit to the rest of the network its degree $\deg(u)$; this can be solved in $\widetilde{\mathcal{O}}(\sqrt{n})$ rounds by virtue of \Cref{corollary:bcc}. Next, we distinguish between the following two cases:

First, if $\Delta \leq \sqrt{|E|}$ the adjacency list of every node fits into at most $\sqrt{|E|}$ number of messages of $\mathcal{O}(\log n)$ bits. Thus, it suffices to employ the $(k, \ell)$-TD protocol of \Cref{lemma:TD} with $k = |E|$ and $\ell = \sqrt{|E|}$. It should be noted that every node can check that $\Delta \leq \sqrt{|E|}$ given that every degree was broadcast during the previous step.

Otherwise, assume that $\Delta > \sqrt{|E|}$. Let us denote with $Q = \{ u \in V : \deg(u) > \sqrt{|E|} \}$; again, note that every node in the graph knows this set by virtue of our previous step. Moreover, the handshaking lemma implies that $|Q| \leq 2 \sqrt{|E|}$. Our proposed algorithm proceeds in rounds, where in every iteration a \emph{single} node from $Q$ interacts with its neighbors in $G$; the order in which we process the nodes from $Q$ is assumed to be some fixed (predetermined) rule based on their IDs (e.g. ascending order), and importantly, we can guarantee synchronization as the IDs of the nodes in $Q$ are known to all the nodes. Now consider some iteration in which we process a node $u \in Q$. The main idea is to balance the \emph{load} in the neighborhood of $u$ via the local network. Specifically, given that $\deg(u) > \sqrt{|E|}$ and that the total load is $|E|$ messages, $u$ can redistribute this load among the nodes in $N(u) \cup \{u\}$, so that every node has at most $\sqrt{|E|}$ messages to broadcast. This can be performed in $2$ rounds of $\local$ (\Cref{fig:load_balancing}). After every such iteration, the number of nodes with load more than $\sqrt{|E|}$ decreases by at least one, and given that initially $|Q| \leq 2 \sqrt{|E|}$, it follows that after at most $2 \sqrt{|E|}$ rounds every node will have to broadcast at most $\sqrt{|E|}$ tokens. Finally, after balancing the load we can employ the $(k, \ell)$-TD protocol of \Cref{lemma:TD} with $k = |E|$ and $\ell =  \mathcal{O}(\sqrt{k})$, concluding the proof.
\end{proof}

\begin{figure}[!ht]
    \centering
    \includegraphics[scale=0.5]{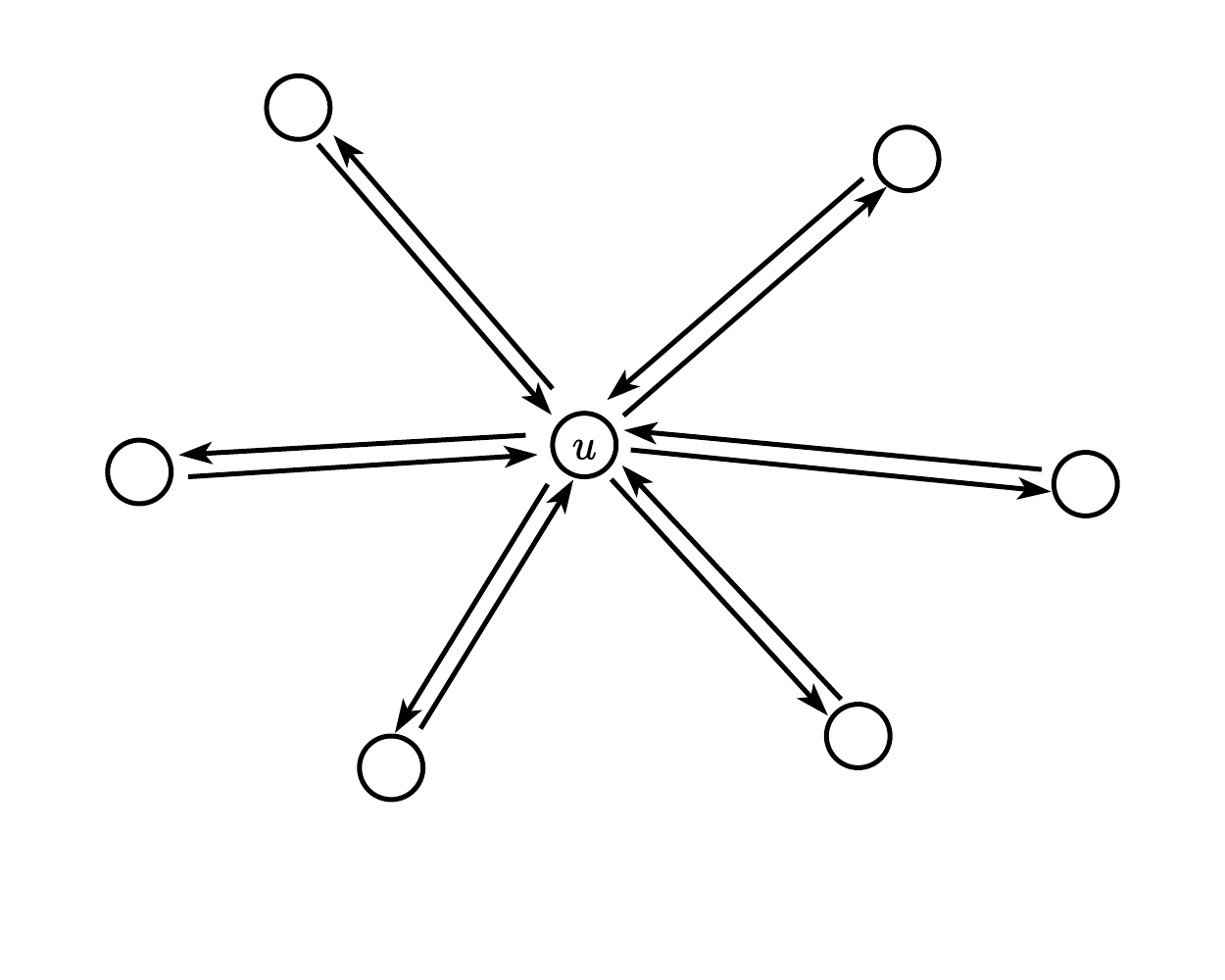}
    \caption{Load balancing in the neighborhood $N(u)$ of node $u$. In the first round, every node $v \in N(u)$ sends all of its tokens to $u$ via the local network; then, $u$ splits the total load \emph{uniformly} (but otherwise arbitrarily), and in the second round it transmits to every node $v \in N(u)$ its corresponding load (set of tokens) again via the local network.}
    \label{fig:load_balancing}
\end{figure}

\subsection{Deterministic Protocol}

Before we proceed with our deterministic algorithm let us first recall that the \emph{strong diameter} of a subset $C \subset V$ is the diameter of the subgraph induced by $C$; in contrast, the \emph{weak diameter} of $C$ is measured in the original graph. We will analyze and explain every step of the algorithm separately. We stress that the round-complexity in some steps has not been optimized since it would not alter the asymptotic running time of the protocol. Also note that in the sequel we use the words component and cluster interchangeably.

\begin{algorithm}[H]
\SetAlgoLined
\textbf{Input}: An $n$-node graph $G = (V, E)$ such that $|E| = \widetilde{\mathcal{O}}(n)$.\\
\textbf{Output Requirement}: Every node knows the entire topology of $G$.
\begin{enumerate}
    \item[\textit{1.}] Determine a partition of the nodes $V$ into $\mathcal{C}_1, \mathcal{C}_2, \dots, \mathcal{C}_k$ via the \\Garay-Kutten-Peleg algorithm such that for all $i$,
 \begin{itemize}
     \item[(i)] the strong diameter of $\mathcal{C}_i$ is $\mathcal{O}(\sqrt{n})$;
     \item[(ii)] $|\mathcal{C}_i| \geq \sqrt{n}$.
 \end{itemize}
    \item[\textit{2.}] Let $C_1, C_2, \dots, C_N = \textsc{Fragment}(\mathcal{C}_1, \dots, \mathcal{C}_k)$ such that for all $i$,
\begin{itemize}
    \item[(i)] the weak diameter of $C_i$ is $\mathcal{O}(\sqrt{n})$;
    \item[(ii)] $ \sqrt{n} \leq |C_i| < 2 \sqrt{n}$. 
\end{itemize}
    \item[\textit{3.}] Broadcast the IDs of the components' leaders.
    \item[\textit{4.}] Distribute all the $C_i$'s via the $\textsc{MatchingComponents}$ subroutine.
    \item[\textit{5.}] Assign every edge on a component, and perform $\textsc{LoadBalancing}$.
    \item[\textit{6.}] Disseminate all the information.
\end{enumerate}
 \caption{Deterministic $\hybrid$ Algorithm for Sparse Networks}
\end{algorithm}

\textbf{Step} $1$. The first step of the algorithm partitions the set of nodes into a collection of connected components $\mathcal{C}_1, \dots, \mathcal{C}_k$, so that the minimum size is at least $\sqrt{n}$ and the strong diameter in every component is $\mathcal{O}(\sqrt{n})$; for simplicity we will assume that $\sqrt{n}$ is an integer. This step will be implemented with the standard Garay-Kutten-Peleg ($\GKP$) algorithm \cite{DBLP:journals/siamcomp/GarayKP98,Lenzen}. Specifically, $\GKP$ is an MST algorithm which operates in two phases; we will only need the first phase. The main idea is to gradually perform merges but in a "balanced" manner. More precisely, $\GKP$ maintains a set of components. In each iteration $i$ every component with diameter at most $2^i$ determines the minimum-weight outgoing edge, which is subsequently added to a set of "candidates" edges. Then, the algorithm determines a maximal matching on this set, updating the components accordingly. If a component with diameter smaller than $2^i$ did not participate in the maximal matching, the algorithm automatically incorporates the edge that was selected by it. This process is repeated for $i = 0, 1, \dots, \lceil \log \sqrt{n} \rceil$, leading to a partition of $V$ into $\mathcal{C}_1, \dots, \mathcal{C}_k$.

\begin{lemma}[\cite{Lenzen}]
    \label{lemma:GKP}
    At the end of the first phase of the $\GKP$ algorithm every component has strong diameter $\mathcal{O}(\sqrt{n})$, while every component has at least $\sqrt{n}$ nodes.
\end{lemma}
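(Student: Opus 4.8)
The plan is to establish the two guarantees separately, each by an induction on the iteration index $i$ that exploits the geometric growth of the threshold $2^i$ together with the anatomy of a single merge step. The crucial structural observation I would isolate first is that, in iteration $i$, a newly formed cluster is a \emph{double star}: its core is a matched pair of active clusters $A, B$ joined by the edge selected in the maximal matching, and every other cluster that joins does so by attaching, through its own selected edge, \emph{directly} to $A$ or to $B$. This is exactly where maximality of the matching is used: if an active cluster $C$ is left unmatched, its minimum-weight outgoing edge cannot connect two unmatched clusters (otherwise it could be added to the matching), so the cluster it points to must already be matched, and $C$ attaches to that merging core rather than extending a chain.

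For the diameter bound I would maintain the invariant that after iteration $i$ every cluster has strong diameter $\mathcal{O}(2^i)$. Only \emph{active} clusters---those of diameter at most $2^i$---ever merge, so in the double star above each participating cluster has diameter at most $2^i$; traversing a leaf, its connecting edge, one core, the core edge, the other core, and a second leaf shows the merged cluster has diameter at most $4 \cdot 2^i + 3 = \mathcal{O}(2^i)$. A cluster that does not participate simply retains a diameter already bounded from an earlier merge. Since the last iteration has $2^i \le 2^{\lceil \log \sqrt{n}\,\rceil} = \mathcal{O}(\sqrt{n})$, every final cluster has strong diameter $\mathcal{O}(\sqrt{n})$.

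The size bound is the delicate part, and I expect it to be the main obstacle, because clusters do \emph{not} grow monotonically every round: right after a merge a cluster's diameter may exceed the current threshold, so it sits out several iterations without gaining any node. The way around this is to notice that sitting out is itself informative---an \emph{inactive} cluster has diameter strictly greater than $2^i$, hence contains a path on more than $2^i$ edges and therefore more than $2^i$ nodes. I would let $m_i$ denote the minimum cluster size at the start of iteration $i$ and argue that every cluster present at the start of iteration $i+1$ either (i) is a fresh double star, which fuses at least two clusters each of size $\ge m_i$ and so has size $\ge 2 m_i$, or (ii) sat out iteration $i$ while inactive, and so has size $> 2^i$. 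This yields the recursion $m_{i+1} \ge \min\{2 m_i,\ 2^i + 1\}$ with $m_0 = 1$, which solves to $m_i \ge 2^{i-1} + 1$, so that after the final iteration $T = \lceil \log \sqrt{n}\,\rceil$ the minimum size is $m_{T+1} \ge 2^{T} + 1 \ge \sqrt{n}$. The two points that need care---and that constitute the real work---are verifying that \emph{every} active cluster possessing an outgoing edge actually merges (so that case (i) always fuses at least two clusters), which again rests on maximality of the matching and on connectivity of $G$, and confirming that inactive clusters are correctly accounted for as ``large'' rather than lost, which is precisely the diameter-to-size conversion above.
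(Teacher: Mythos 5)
First, a point of comparison: the paper never proves this lemma at all---it is imported by citation from Lenzen's lecture notes \cite{Lenzen}, and the text only supplies the algorithm description. So your proposal has to be judged against that description and against the standard controlled-GHS analysis, which is indeed what you are reconstructing. Most of it is sound: the double-star anatomy of a merge (maximality of the matching forces every unmatched selector to hook directly onto a matched cluster, so merge trees have depth one from the core), the diameter-to-size conversion for clusters that sit out an iteration, and the size recursion $m_{i+1} \ge \min\{2m_i,\ 2^i+1\}$ with solution $m_i \ge 2^{i-1}+1$ are exactly the right ingredients. The one corner case you gloss over---an active cluster with no outgoing edge---can only be all of $V$, where the size claim is trivial.

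The genuine flaw is the sentence your diameter induction rests on: ``Only active clusters---those of diameter at most $2^i$---ever merge.'' That is false for the algorithm as the paper describes it. Active components choose minimum-weight \emph{outgoing} edges, but nothing constrains the other endpoint of such an edge: an inactive component (diameter $> 2^i$) can be the target of a candidate edge, hence can end up matched, or hooked into, and so it does participate in merges. With this, your per-iteration bound $4 \cdot 2^i + 3$ does not follow: one of the two cores of the double star may only satisfy the inductive bound $D_{i-1} \le C \cdot 2^{i-1}$ rather than the bound $2^i$, and if you then bound all four constituents by $C \cdot 2^{i-1}$ the induction does not close---the constant doubles every iteration, giving $\mathcal{O}(4^i)$, i.e., only a trivial linear bound after $\lceil \log \sqrt{n}\,\rceil$ iterations. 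The repair is to observe that each merged cluster contains \emph{at most one} inactive constituent: every candidate edge, including the matching edge at the core, has an active selector, and all leaves are active because only active clusters hook. This yields the additive recursion $D_i \le D_{i-1} + 3 \cdot 2^i + 3$, which still solves to $D_i = \mathcal{O}(2^i)$ and hence $\mathcal{O}(\sqrt{n})$ after the last iteration. Your size recursion is untouched by this issue, since it never uses activity of the merge participants, only that merging clusters are distinct and all have size at least $m_i$.
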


This lemma verifies our initial claim for step $1$. Moreover, note that the first phase of $\GKP$ can be implemented in $\mathcal{O}(\sqrt{n} \log^* n)$ rounds in $\congest$; naturally, in $\hybrid$ we can substantially reduce the number of rounds, but this would not affect the overall asymptotic complexity as there is an inherent bottleneck in subsequent steps of the algorithm. 

\textbf{Step} $2$. The $\textsc{Fragment}$ subroutine of the second step is particularly simple. If a component $\mathcal{C}_i$ is such that $|\mathcal{C}_i| < 2 \sqrt{n}$ it remains intact. Otherwise, the component $\mathcal{C}_i$ is decomposed arbitrarily into disjoint fragments each of size between $\sqrt{n}$ and $2 \sqrt{n}$. Let $C_1, \dots, C_N$ be the induced partition of $V$. By virtue of \Cref{lemma:GKP} we know that the \emph{weak} diameter of every $C_i$ is $\mathcal{O}(\sqrt{n})$, although note that the induced graph on $C_i$ is potentially disconnected. This step is made to ensure that the components have roughly the same size, while it can be trivially implemented in $\mathcal{O}(\sqrt{n})$ rounds of $\local$.

\textbf{Step} $3$. We assume that every component has elected a leader, e.g. the node with the smallest ID. There are overall $N \leq \sqrt{n}$ IDs to be broadcast to the entire graph. This can be implemented with $N$ (deterministic) broadcasts in $\ncc$, which requires $\mathcal{O}(\sqrt{n} \log n)$ rounds.

\textbf{Step} $4$. The purpose of this step is to ensure that every node knows the composition---i.e. the set of IDs---of every other cluster. To this end, the leader of every component $C_i$ selects arbitrarily $N-1 \leq |C_i| - 1$ \emph{representative} nodes from $C_i$, and devises a (bijective) mapping from these nodes to all the other components; the leader also informs via the local network the corresponding nodes. Then, the protocol proceeds in rounds: In every iteration a single component interacts with all the others, and specifically, every representative node sends its ID to the leader of its assigned component. This is repeated for all the components, and after $\mathcal{O}(\sqrt{n})$ rounds every representative node will be matched with some node on its corresponding component; see \Cref{fig:sub1}. Having established this matching every node can disseminate through the global network the IDs of all the nodes in its own component to its assigned node. This process requires at most $2 \sqrt{n}$ rounds since every component has size less than $2 \sqrt{n}$, and every node participates in at most one matching. Finally, the composition (the set of IDs) of every component is revealed to each node after $\mathcal{O}(\sqrt{n})$ additional rounds of the local network. 

\begin{figure}[!ht]
\centering
\begin{subfigure}{.5\textwidth}
  \centering
  \includegraphics[scale=0.35]{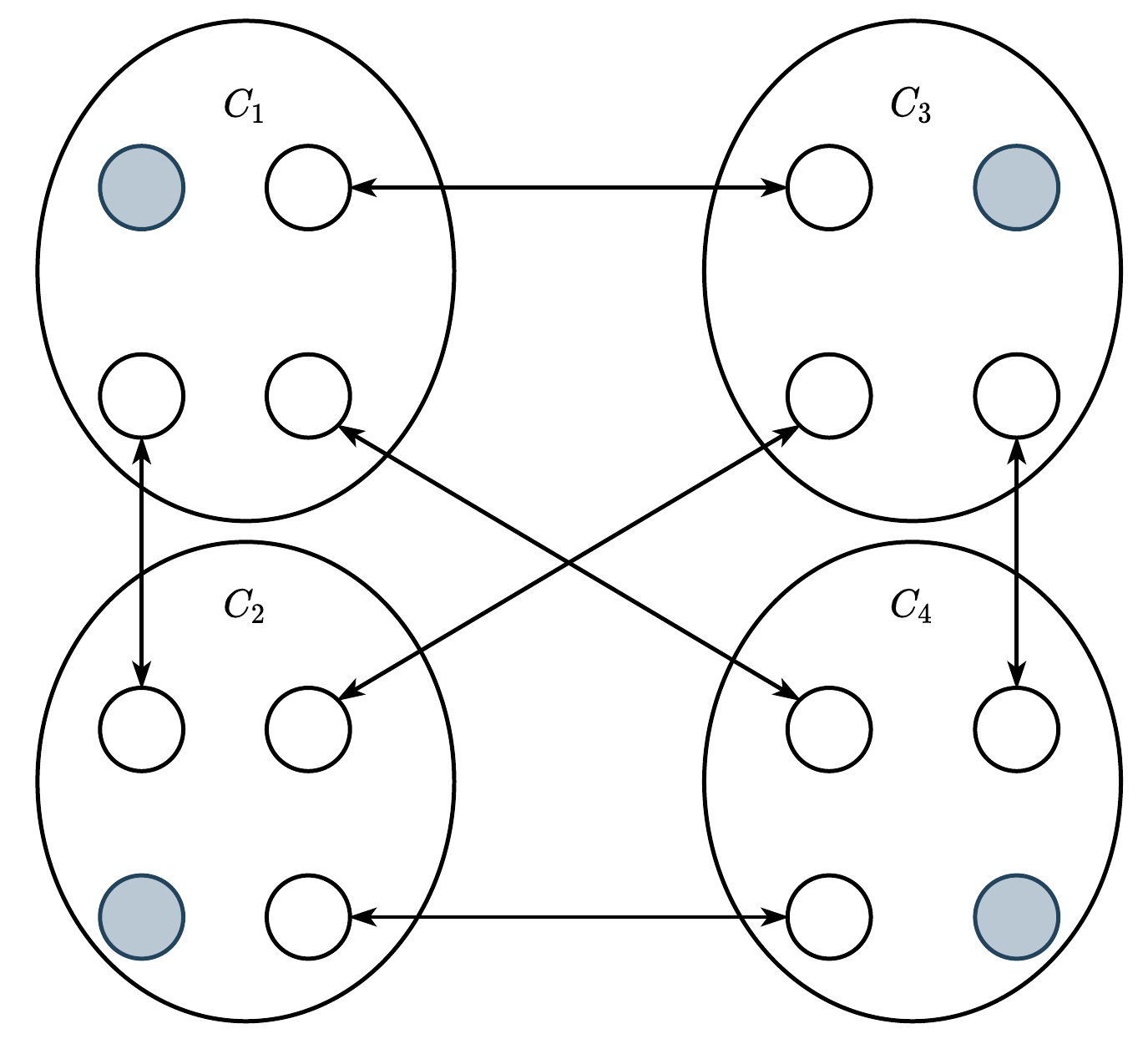}
  \caption{An example of $\textsc{MatchingComponents}$.}
  \label{fig:sub1}
\end{subfigure}%
\begin{subfigure}{.5\textwidth}
  \centering
  \includegraphics[scale=0.4]{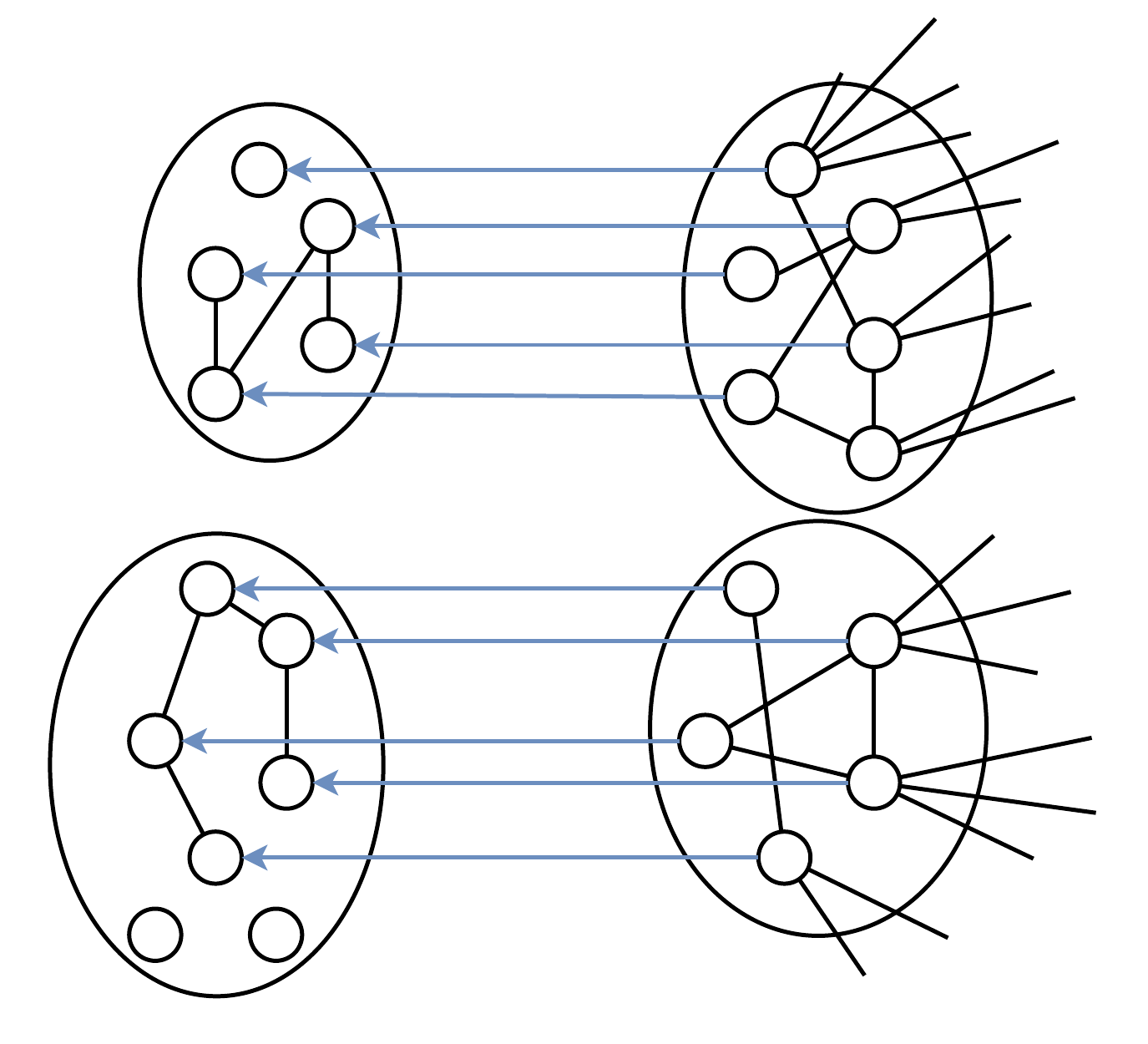}
  \caption{$\textsc{LoadBalancing}$: Transferring load from overloaded to underloaded components. We have highlighted with blue "global" edges.}
  \label{fig:sub2}
\end{subfigure}
\end{figure}

\textbf{Step} $5$. First of all, every edge with incident nodes residing on the same component is assigned to the component of its endpoints. Otherwise, the edge is assigned to one of the components according to some deterministic rule; e.g. the component with the smaller ID. In this context, the \emph{load} of every component is the number of edges it has to disseminate. Notice that the nodes of each component can learn every component's load in $\mathcal{O}(\sqrt{n})$ rounds. Initially, the load of each component is distributed uniformly within the nodes of the component, which requires $\mathcal{O}(\sqrt{n})$ rounds. The $\textsc{LoadBalancing}$ mechanism works as follows: It splits the components into a set of "overloaded" components with load more than $2 |E|/N$, and a set of "underloaded" components with load less than $|E|/N$; every other component does not need any further processing. In every iteration we map arbitrarily (e.g. the overloaded component with the smallest ID is mapped to the underloaded component with the smallest ID, and so on) overloaded components to underloaded ones, so that the mapping is one-to-one and maximal. Then, every assigned overloaded component transmits as much load as is required to its corresponding component via the global network (see \Cref{fig:sub2}) until one of the two becomes balanced---according to the previous notion. This can be performed in $\widetilde{\mathcal{O}}(1)$ by virtue of Step $4$ (recall that $|E| = \widetilde{O}(n)$). Then, we remove any components that have been balanced and we proceed recursively for the remaining ones. It is easy to see that this process requires at most $N$ iterations since in every iteration we eliminate at least one component from requiring further balancing, while at the end of this step every component will have $\widetilde{\mathcal{O}}(\sqrt{n})$ load.

\textbf{Step} $6$. The final step is fairly straightforward. First, observe that a single component can transfer its entire load to another component in $\widetilde{\mathcal{O}}(1)$ rounds via the global network; this follows because (i) every component has $\widetilde{\mathcal{O}}(\sqrt{n})$ load due to the load balancing step, and (ii) every component has by construction roughly $\sqrt{n}$ nodes. Assume that the components $C_1, \dots, C_N$ are sorted in ascending order with respect to their IDs. Then, at iteration $i$ component $C_j$ transfers its load to $C_{r+1}$, where $r = i + j \mod N$. This is repeated for $i = 0, 1, \dots, N-2$. It is easy to see that this deterministic protocol guarantees that (i) no \emph{collisions} occur, and (ii) every component eventually receives the load from all other components. As we previously argued every such iteration requires $\widetilde{\mathcal{O}}(1)$ rounds in $\hybrid$. Thus, overall this step requires $\widetilde{\mathcal{O}}(\sqrt{n})$ rounds, leading to the following conclusion:

\deterministic*

More broadly, our deterministic protocol can be used for any $m$-edge (connected) graph by forming clusters of size $\Theta(\sqrt{m})$ nodes, so that every node learns the topology after $\widetilde{\mathcal{O}}(\sqrt{m})$ rounds. Consequently, this leads to a derandomization of the token dissemination algorithm of Augustine et al. \cite{DBLP:conf/soda/AugustineHKSS20} in the regime $k \geq n$:

\begin{proposition}
    \label{proposition:derandomization}
    There exists a deterministic algorithm which solves the $(k, \ell)$-TD problem in $\widetilde{\mathcal{O}}(\sqrt{k})$ rounds of $\hybrid$, assuming that $k \geq n$.
\end{proposition}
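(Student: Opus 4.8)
The plan is to recognize that, for $k \geq n$, the $(k,\ell)$-TD problem is precisely the kind of dissemination task already solved by \Cref{theorem:deterministic_sparse}: in place of the $\widetilde{\mathcal{O}}(n)$ adjacency-list tokens (the edges of $G$) we must now spread $k$ arbitrary $\mathcal{O}(\log n)$-bit tokens across the network. Every step of the deterministic protocol---the $\GKP$ partition, the $\textsc{Fragment}$, $\textsc{MatchingComponents}$, and $\textsc{LoadBalancing}$ subroutines, and the final dissemination---is oblivious to what the tokens actually encode, so the entire machinery can be reused with a single change: the cluster size is scaled up from $\Theta(\sqrt n)$ to $\Theta(\sqrt k)$.

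Concretely, I would run the first phase of $\GKP$ for $\lceil \log \sqrt{k}\rceil$ iterations rather than $\lceil \log \sqrt{n}\rceil$, so that the analysis behind \Cref{lemma:GKP} yields components of strong diameter $\mathcal{O}(\sqrt k)$ and size at least $\sqrt k$, which $\textsc{Fragment}$ then cuts down to size between $\sqrt k$ and $2\sqrt k$. This rescaling is feasible exactly because $n \leq k \leq n^2$: the hypothesis $k \geq n$ gives $\sqrt k \geq \sqrt n$ (so the components are no smaller than before, and their number is $N = \Theta(n/\sqrt k) \leq \sqrt k$, which keeps Steps $3$--$4$ within $\widetilde{\mathcal{O}}(\sqrt k)$), while $k \leq n^2$ gives $\sqrt k \leq n$ (so clusters of this size fit inside the $n$-node graph). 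Each token is assigned to the component of a node currently holding it, and after $\textsc{LoadBalancing}$ every component carries $\Theta(k/N) = \Theta(k^{3/2}/n)$ tokens, i.e. $\Theta(k/n)$ per node. The only quantity that changes in the final analysis is the cost of one inter-component transfer in Step $6$: a destination cluster of $\Theta(\sqrt k)$ nodes must absorb $\Theta(k/N)$ tokens, so each node receives $\Theta(k/n)$ of them over the global network, costing $\widetilde{\mathcal{O}}(k/n)$ rounds; chaining the $N = \Theta(n/\sqrt k)$ transfers then costs $N \cdot \widetilde{\mathcal{O}}(k/n) = \widetilde{\mathcal{O}}(\sqrt k)$, and an identical recomputation shows every remaining step also stays within $\widetilde{\mathcal{O}}(\sqrt k)$.

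Two points remain. First, unlike the topology setting, the initial distribution may be arbitrarily skewed (a single node could hold all $\ell = k$ tokens), yet the stated bound carries no $\ell$ term; this is exactly the role of $\textsc{LoadBalancing}$, which evens the per-component load no matter how concentrated it starts, while the preliminary within-component spreading is cheap because the $\local$ edges transmit messages of unbounded size and routing through the connected parent cluster of weak diameter $\mathcal{O}(\sqrt k)$ costs only $\widetilde{\mathcal{O}}(\sqrt k)$. Second, the regime $k > n^2$ (where $\sqrt k > n$ and clusters of size $\sqrt k$ cannot exist) is handled separately by plain flooding over $\local$: every node repeatedly forwards all tokens it has seen, so after $\mathcal{O}(D)$ rounds every node knows all $k$ tokens, and since $D \leq n \leq \sqrt k$ this is deterministic and within budget, consistent with the convention that $\hybrid$ bounds are taken as a minimum with the hop-diameter. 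The main obstacle I anticipate is the bookkeeping of the second paragraph: verifying that, after rescaling, \emph{every} subroutine---not merely the final dissemination---continues to cost $\widetilde{\mathcal{O}}(\sqrt k)$, and confirming that the elimination of the $\ell$-dependence survives the worst-case initial placement.
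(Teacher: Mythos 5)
Your proposal takes essentially the same route as the paper: the paper obtains \Cref{proposition:derandomization} precisely by observing that the deterministic protocol of \Cref{theorem:deterministic_sparse} works for any token volume once the clusters are rescaled to size $\Theta(\sqrt{k})$, which is exactly your plan, and your recomputed costs for Steps $3$--$6$ (using $N = \Theta(n/\sqrt{k}) \leq \sqrt{k}$ when $k \geq n$) match the intended analysis. Your extra care---the $\local$-flooding fallback for $k > n^2$ and the explicit argument that $\textsc{LoadBalancing}$ removes any dependence on a skewed initial placement (i.e.\ on $\ell$)---only spells out details the paper leaves implicit, and is correct.
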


\subsection{Distributed Sparsification Techniques}

Before we present applications of our protocol for general graphs, we first review some basic sparsification techniques. The goal is to efficiently sparsify the graph in a distributed fashion, while approximately preserving some \emph{structure} in the graph. We will use two fundamental notions of sparsifiers.

\subsubsection{Spanners}

The first structure one may wish to retain with sparsification is every pairwise distance in the graph. To this end, we will employ the notion of a graph \emph{spanner}, a fundamental object in graph theory with numerous applications in distributed computing \cite{DBLP:journals/siamcomp/PelegU89a}. To be more precise, for a graph $G = (V, E)$, a subgraph $H$ is an $\alpha$-\emph{stretch spanner} if every pairwise distance in $H$ is at most an $\alpha \geq 1$ factor larger than the distance in $G$, without ever underestimating; i.e., for all $u, v \in G, d_G(u, v) \leq d_H(u, v) \leq \alpha \cdot d_G(u, v)$. Naturally, we desire spanners with small stretch and a limited number of edges. It is well-known that any $n$-node graph admits a $(2k-1)$-stretch spanner with $\mathcal{O}(n^{1 + 1/k})$ number of edges, while this trade-off is optimal conditioned on Erd\H{o}s girth conjecture. In the distributed context, we will use the following result of Rozhon and Ghaffari:

\begin{theorem}[\cite{DBLP:conf/stoc/RozhonG20}]
    \label{theorem:spanners}
    Consider an $n$-node weighted graph $G = (V, E)$. There exists a deterministic distributed algorithm in $\congest$ which computes a $(2k-1)$-stretch spanner of size $\widetilde{\mathcal{O}}(k n^{1 + 1/k})$ in $\polylog(n)$ rounds.
\end{theorem}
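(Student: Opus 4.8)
\begin{sproof}
The plan is to derandomize the classical Baswana--Sen clustering construction. Recall that Baswana--Sen builds a $(2k-1)$-stretch spanner in $k$ phases: it maintains a collection of vertex-disjoint clusters (initially singletons), and in each phase every surviving cluster is independently \emph{sampled} with probability $n^{-1/k}$. Each node adjacent to a sampled cluster joins one such cluster---adding the connecting edge to the spanner and thereby growing the cluster radius by one---while every node not adjacent to any sampled cluster adds to the spanner exactly one edge per distinct adjacent cluster and then retires. The only randomness lies in the per-phase sampling, and the standard analysis gives (i) stretch at most $2k-1$, since each retired node reaches its cluster center along a path of length $\mathcal{O}(k)$, and (ii) an expected total of $\widetilde{\mathcal{O}}(k n^{1+1/k})$ spanner edges. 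As cluster radii never exceed $k$, each phase can be implemented by intra-cluster aggregation along BFS trees in $\mathcal{O}(k)$ rounds of $\congest$, so the randomized algorithm runs in $\polylog(n)$ rounds whenever $k = \mathcal{O}(\polylog n)$---the relevant regime (e.g.\ $k = \mathcal{O}(\log n / \log\log n)$ gives size $\widetilde{\mathcal{O}}(n)$).

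First I would isolate the single source of randomness in each phase and recast it as a derandomization target. Writing $X_C \in \{0,1\}$ for the indicator that cluster $C$ is sampled, the number of spanner edges charged in a phase is a sum of \emph{local} cost terms---the contribution of each node depends only on the sampling bits of the clusters in its immediate neighborhood---and under the independent product distribution this sum has expectation $\widetilde{\mathcal{O}}(n^{1+1/k})$. It therefore suffices to choose the bits $\{X_C\}$ deterministically so that the realized cost does not exceed this expectation. I would achieve this by the method of conditional expectations applied to a pessimistic estimator of the cost: replace the fully independent distribution by one supported on a short seed of $\mathcal{O}(\polylog n)$ bits, and then fix the seed bits in batches, never allowing the conditional expectation to increase.

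The main obstacle is carrying out this conditional-expectation argument in $\polylog(n)$ rounds of $\congest$, given that each node sees only its own neighborhood. The essential ingredient is a deterministic network decomposition that partitions the conflict graph---whose vertices are the clusters and whose edges join clusters sharing a cost term---into $\mathcal{O}(\log n)$ color classes, each consisting of clusters of weak diameter $\mathcal{O}(\log n)$. Processing one color at a time, the nodes of each small-diameter cluster aggregate along an intra-cluster BFS tree the conditional expectation of the cost restricted to their own undecided bits, agree on the assignment that does not increase it, and broadcast the outcome, all within $\mathcal{O}(\log n)$ rounds and using only $\mathcal{O}(\log n)$-bit messages because the estimator decomposes into per-edge contributions. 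The genuinely hard part, which is the technical heart of the result, is constructing this network decomposition deterministically and bandwidth-efficiently; this is obtained by an iterative scheme that refines $\mathcal{O}(\log n)$-bit cluster identifiers one bit at a time, where each bit-step either absorbs a constant fraction of the still-undecided boundary nodes into existing clusters or lets a cluster freeze with its diameter still $\mathcal{O}(\log n)$, so that after $\polylog(n)$ steps every node is assigned. Combining the $\mathcal{O}(\log n)$ derandomization rounds per color, the $\mathcal{O}(\log n)$ colors, the $k = \mathcal{O}(\polylog n)$ phases, and the $\polylog(n)$ cost of the decomposition yields the claimed $\polylog(n)$ round complexity.
\end{sproof}
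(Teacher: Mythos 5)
This theorem is not proved in the paper at all---it is imported as a black box from Rozhon--Ghaffari \cite{DBLP:conf/stoc/RozhonG20} (building on the Ghaffari--Kuhn derandomization framework \cite{DBLP:conf/wdag/GhaffariK18}), so there is no internal proof to compare against. Your sketch faithfully reconstructs the proof strategy of those cited works: derandomizing the Baswana--Sen clustering via the method of conditional expectations over a short bounded-independence seed, with a deterministic polylogarithmic-round network decomposition coordinating which bits are fixed when---and you correctly identify the network decomposition as the technical heart of the result.
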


Besides \emph{multiplicative} spanners, we will also use \emph{near-additive} spanners. More precisely, a subgraph $H$ of $G$ is an $(\alpha, \beta)$-stretch spanner if for all $u, v \in G, d_G(u, v) \leq d_H(u, v) \leq \alpha \cdot d_G(u, v) + \beta$; for $\beta = 0$ this recovers the previous notion of a multiplicative spanner. Moreover, for $\alpha = 1 + \epsilon$, for an arbitrarily small $\epsilon > 0$, the spanner is called near-additive. In this context, we will leverage the following recent result due to Elkin and Matar:

\begin{theorem}[\cite{DBLP:conf/podc/ElkinM21}]
    \label{theorem:near_additive}
    Consider an $n$-node unweighted graph $G = (V, E)$. For any constants $\epsilon \in (0, 1)$ and $\rho \in (0, 1/2)$, there is an algorithm in $\congest$ which computes a $(1 + \epsilon, \beta)$-stretch spanner of $G$ with $\widetilde{\mathcal{O}}(n)$ number of edges after $\mathcal{O}(\beta n^{\rho})$ rounds, where $\beta = \mathcal{O}((\log \log n/\rho + 1/\rho^2)^{\log \log n + 1/\rho})$.
\end{theorem}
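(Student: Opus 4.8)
The plan is to instantiate the classical superclustering-and-interconnection framework for near-additive spanners (due to Elkin and Peleg, and later refined by Elkin and Neiman) in a deterministic fashion. I would construct a hierarchy of partial clusterings $\mathcal{P}_0 \supseteq \mathcal{P}_1 \supseteq \dots \supseteq \mathcal{P}_\ell$ of $V$, where $\mathcal{P}_0$ is the set of singletons and the number of phases is $\ell = \mathcal{O}(\log\log n / \rho + 1/\rho)$. Each cluster $C \in \mathcal{P}_i$ carries a designated center and a BFS tree of radius $\delta_i$, with the radii $\delta_i$ growing geometrically across phases; the spanner $H$ is then the union of all cluster trees together with the interconnection edges introduced in each phase. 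The $(1+\epsilon, \beta)$ stretch is argued inductively: a shortest path in $G$ is covered level by level, and each time it is rerouted through a nearby cluster it incurs a $(1+\epsilon)$ multiplicative slack plus a bounded additive error; these additive errors accumulate across the $\ell$ phases to the stated value of $\beta$, which is governed by the per-phase radius blow-up.

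Each phase decomposes into two parts. In the \emph{superclustering} step, a subset of the level-$i$ clusters is promoted to centers of level-$(i+1)$ superclusters, and every cluster lying within distance $\delta_i$ of such a center is absorbed, with the connecting BFS paths added to $H$. In the \emph{interconnection} step, every cluster that was \emph{not} superclustered is joined by a short path to each nearby cluster; since such clusters are guaranteed to have few clusters in their neighborhood, the number of interconnection edges stays within the $\widetilde{\mathcal{O}}(n)$ budget. The threshold separating ``popular'' (to be superclustered) from ``sparse'' (to be interconnected) clusters is tuned through $\rho$ and governs the trade-off between the edge count and the number of phases, hence $\beta$. Combined with the fact that the number of clusters shrinks geometrically from phase to phase, this yields the overall $\widetilde{\mathcal{O}}(n)$ edge bound.

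The main obstacle---and the entire reason the construction is nontrivial to make deterministic---is the superclustering step, which in the randomized version selects supercluster centers by independent sampling so that, in expectation, only few clusters remain unsuperclustered with dense neighborhoods. To derandomize this in $\congest$, I would replace the random sample by a deterministic hitting/ruling-set computation producing a set of centers that simultaneously (i) dominates every cluster with many nearby clusters and (ii) is itself sparse. This is exactly where the deterministic low-diameter clustering and network-decomposition machinery of Rozhon and Ghaffari (the same work underlying \Cref{theorem:spanners}) is invoked: one runs it on an auxiliary cluster graph whose vertices are the level-$i$ clusters, and since each cluster has radius $\delta_i$, a single round of communication in the cluster graph is simulated in $\mathcal{O}(\delta_i)$ rounds of $\congest$, adding only a $\polylog(n)$ overhead per phase.

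Putting the pieces together, the round complexity is dominated by the BFS explorations of radius up to the final cluster radius across the $\ell$ phases, plus the $\polylog(n)$ per-phase cost of the deterministic clustering, which sums to the stated $\mathcal{O}(\beta n^{\rho})$. The most delicate part of the argument is verifying that the deterministic substitute for random sampling preserves \emph{both} the sparsity of unsuperclustered regions (needed for the $\widetilde{\mathcal{O}}(n)$ edge bound) and the domination property (needed for the stretch guarantee); this is precisely the technical contribution that distinguishes the deterministic construction from its randomized predecessors.
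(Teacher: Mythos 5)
The paper does not prove this statement at all: \Cref{theorem:near_additive} is imported verbatim, as a black box, from the cited work of Elkin and Matar, so there is no in-paper proof to compare against and the honest benchmark is the original construction. Measured against that, your sketch correctly reconstructs its architecture: the Elkin--Peleg/Elkin--Neiman superclustering-and-interconnection hierarchy, geometrically growing radii, the popularity threshold tuned by $\rho$ trading edge count against the number of phases, and the observation that derandomizing the center-sampling step is the crux. Two points of divergence are worth flagging. First, Elkin and Matar do not route the derandomization through Rozhon--Ghaffari network decomposition (their work predates it); they replace the random sampling directly by deterministic \emph{ruling sets} computed on the auxiliary cluster graph, which natively give exactly your properties (i) domination of popular clusters and (ii) sparsity of the center set. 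Your Rozhon--Ghaffari variant is plausible in spirit---the paper itself uses that machinery, but only for the multiplicative spanners of \Cref{theorem:spanners}---yet it leaves unaddressed the congestion of simulating polylogarithmically many cluster-graph rounds in $\congest$: a boundary vertex can be adjacent to many level-$i$ clusters, so the per-round simulation cost is not simply $\mathcal{O}(\delta_i)$ unless the exchanged messages are aggregatable. Second, your complexity accounting attributes the running time to BFS radii plus per-phase clustering overhead, but the $n^{\rho}$ factor in the stated bound $\mathcal{O}(\beta n^{\rho})$ arises specifically from the popularity threshold: an unsuperclustered cluster interconnects with up to roughly $n^{\rho}$ nearby clusters, and the resulting message load during the distance-$\mathcal{O}(\delta_i)$ explorations is what multiplies the $\beta$-scale radius by $n^{\rho}$. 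Neither issue breaks the approach, but both are exactly the places where a full writeup would have to do real work.
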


\subsubsection{Cut Sparsifiers}

Another fundamental class of sparsifiers endeavors to approximately preserve the weight of every \emph{cut} in the graph. Recall that for a subset of vertices $S \subset V$ we define 

\begin{equation}
    \cut_G(S) = \sum_{u \in S, v \in V \setminus S} w(u, v).
\end{equation}

To this end, we will employ the sparsification algorithm developed by Koutis \cite{koutis2014simple}. We should remark that the algorithm of Koutis actually returns a \emph{spectral} sparsifier, which is a strictly stronger notion than a cut sparsifier \cite{DBLP:journals/cacm/BatsonSST13}, but we will not use this property here.

\begin{theorem}[\cite{koutis2014simple}, Theorem 5]
    \label{theorem:koutis}
    Consider a graph $G = (V, E, w)$. There exists a distributed algorithm in $\congest$ such that for any $\epsilon > 0$ outputs a graph $H = (V, \widehat{E}, \widehat{w})$ after $\widetilde{\mathcal{O}}(1/\epsilon^2)$ rounds such that (i) $(1 - \epsilon) \cut_H(S) \leq \cut_G(S) \leq (1 + \epsilon) \cut_H(S)$ for any $S \subset V$, and (ii) the expected number of edges in $H$ is $\widetilde{\mathcal{O}}(n/\epsilon^2)$.
\end{theorem}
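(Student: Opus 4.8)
The plan is to prove this by the now-standard \emph{spanner-based} iterative sparsification scheme: repeatedly reduce the edge count by a constant factor while preserving every cut---indeed the entire Laplacian quadratic form---up to a small multiplicative error, exploiting the fact that an edge lying outside a sufficiently dense spanner has very small effective resistance and can therefore be subsampled uniformly. Since for any $S \subset V$ we have $\cut_G(S) = \mathbf{1}_S^\top L_G \mathbf{1}_S$, where $L_G$ is the weighted Laplacian, it suffices to produce $H$ with $(1-\epsilon) L_G \preceq L_H \preceq (1+\epsilon) L_G$; claim (i) then follows by evaluating this quadratic form at indicator vectors. To handle the weights $w(e) \in \{1, \dots, W\}$ with $W = \poly(n)$, I would first bucket the edges into $O(\log n)$ weight classes, within each of which all weights agree up to a factor of two, and sparsify each class separately (treating it as essentially unweighted); this costs only an $O(\log n)$ overhead hidden in the $\widetilde{\mathcal{O}}$.

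The key building block is a \emph{$t$-bundle spanner}: a union of $t$ edge-disjoint spanners, which I would construct by invoking the deterministic $\congest$ spanner algorithm of \Cref{theorem:spanners} with stretch $k = O(\log n)$ (hence $\widetilde{\mathcal{O}}(n)$ edges per spanner), deleting the chosen edges, and repeating $t$ times. The crucial observation is that any edge $e = (u,v)$ \emph{not} in the bundle admits $t$ edge-disjoint $u$--$v$ paths, one in each spanner, each of length at most $k$. By Rayleigh monotonicity and the series/parallel rules, its effective resistance is at most $k/t$, so its leverage score $\tau_e = w_e \cdot r_{\mathrm{eff}}(u,v)$ is $O(k/t)$. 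Setting $t = \Theta(k \log n / \epsilon'^2) = \widetilde{\mathcal{O}}(1/\epsilon'^2)$ thus forces every non-bundle leverage score below the threshold $\epsilon'^2/(C \log n)$ required by a matrix Chernoff bound.

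With this in hand, one round of sparsification keeps all bundle edges and includes each non-bundle edge independently with probability $1/2$, doubling its weight on inclusion. Since every sampled edge has leverage score below the Tropp/matrix-Chernoff threshold, the resulting Laplacian is a $(1 \pm \epsilon')$ spectral approximation of the current one with high probability, while the number of non-bundle edges shrinks by (roughly) a constant factor. Iterating $O(\log n)$ times drives the total edge count down to the bundle ``floor'' of $\widetilde{\mathcal{O}}(t n) = \widetilde{\mathcal{O}}(n/\epsilon^2)$, giving claim (ii). Each round rebuilds a fresh bundle in $\widetilde{\mathcal{O}}(1/\epsilon'^2)$ rounds of $\congest$ and otherwise performs only local coin flips and reweightings, so the overall round complexity is $\widetilde{\mathcal{O}}(1/\epsilon^2)$.

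The main obstacle I anticipate is controlling the \emph{accumulation of error} across the $O(\log n)$ iterations while simultaneously keeping the per-round error budget large enough that $t$ stays polylogarithmic. Composing the guarantees yields a total distortion of $\prod_i (1 \pm \epsilon')$, so one must set $\epsilon' = \Theta(\epsilon/\log n)$ (or argue more carefully via a potential- or variance-based accounting) to land at $(1 \pm \epsilon)$; this is exactly what pins down the final $\widetilde{\mathcal{O}}(1/\epsilon^2)$ dependence. A secondary subtlety is verifying that the leverage-score bound genuinely holds at \emph{every} iteration---which it does, because the bundle is recomputed from scratch on the current (already partially sparsified) graph, so the required edge-disjoint short paths always exist within that round's graph.
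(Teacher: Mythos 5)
This statement is not proven in the paper at all: it is imported verbatim by citation as Theorem~5 of Koutis \cite{koutis2014simple}, so there is no in-paper proof to compare against. Your reconstruction---$t$-bundle spanners built by $t$ repeated (edge-deleting) invocations of a polylogarithmic-round $\congest$ spanner algorithm, the effective-resistance/leverage-score bound $\mathcal{O}(k/t)$ for off-bundle edges via edge-disjoint short paths, uniform subsampling with weight rescaling justified by matrix Chernoff, and $\mathcal{O}(\log n)$ iterations with per-round budget $\epsilon' = \Theta(\epsilon/\log n)$---is essentially the argument of Koutis's original paper, so your proposal is correct and follows the same approach as the cited source.
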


\subsection{Applications}

\subsubsection{Deterministic APSP}
\label{subsubsubsection:apsp}

In the $\alpha$-approximate \emph{all-pairs shortest paths} problem every node $u \in V$ has to learn a value $d'(u, v)$ such that $d(u, v) \leq d'(u, v) \leq \alpha \cdot d(u, v)$, for all $v \in V$. In this context, we establish the following result:  

\begin{proposition}[Weighted APSP]
    \label{proposition:apsp}
    Consider an $n$-node weighted graph $G = (V, E)$. There exists a deterministic $\log n/\log \log n$-approximation algorithm for the APSP problem which runs in $\widetilde{\mathcal{O}}(\sqrt{n})$ rounds of $\hybrid$.
\end{proposition}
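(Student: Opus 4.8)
The plan is to combine the deterministic spanner construction of \Cref{theorem:spanners} with the sparse topology-learning primitive of \Cref{theorem:deterministic_sparse}. The guiding observation is that a graph spanner reduces weighted APSP on an arbitrary graph to the problem of learning the topology of a \emph{sparse} graph, at the cost of a multiplicative distortion equal to the stretch of the spanner. So the whole difficulty is funneled into choosing the spanner parameters correctly.

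Concretely, I would first invoke \Cref{theorem:spanners} with $k$ set to the largest integer satisfying $2k - 1 \le \log n/\log\log n$, i.e.\ $k = \Theta(\log n/\log\log n)$, producing a deterministic $(2k-1)$-stretch spanner $H = (V, E_H)$ of $G$ in $\polylog(n)$ rounds. Since this algorithm operates in $\congest$ and $\hybrid$ subsumes $\congest$ through its local edges, the construction runs within the same round budget in $\hybrid$. The crucial calibration is that this choice of $k$ simultaneously meets both constraints we need: the stretch is $2k-1 \le \log n/\log\log n$, and, because $n^{1/k} = 2^{\log n / k} = 2^{\mathcal{O}(\log\log n)} = \polylog(n)$, the spanner has $\widetilde{\mathcal{O}}(k\, n^{1+1/k}) = \widetilde{\mathcal{O}}(n)$ edges; that is, $H$ is sparse in the sense required by \Cref{theorem:deterministic_sparse}.

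Next, since $E_H \subseteq E$, every edge of $H$ is available as a local edge of the underlying $\hybrid$ network, so I can run the deterministic algorithm of \Cref{theorem:deterministic_sparse} on $H$. After $\widetilde{\mathcal{O}}(\sqrt{n})$ rounds every node holds the entire topology of $H$, and can then locally compute $d_H(u,v)$ for all pairs (e.g.\ by running Dijkstra from every source) with no further communication. Outputting $d'(u,v) \define d_H(u,v)$ gives the claimed guarantee: the spanner property yields $d_G(u,v) \le d_H(u,v) \le (2k-1)\, d_G(u,v) \le (\log n/\log\log n)\, d_G(u,v)$, so $d'$ is a valid $\log n/\log\log n$-approximation of every pairwise distance. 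The total round complexity is $\polylog(n) + \widetilde{\mathcal{O}}(\sqrt{n}) = \widetilde{\mathcal{O}}(\sqrt{n})$.

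The reduction itself is routine; the one point I would verify most carefully is the parameter balancing, which is the conceptual heart of the statement. The sparsity constraint forces $k = \Omega(\log n/\log\log n)$ (otherwise $n^{1/k}$ is super-polylogarithmic and $H$ fails to be sparse enough for \Cref{theorem:deterministic_sparse}), while the desired approximation forces $k = \mathcal{O}(\log n/\log\log n)$; the two requirements meet precisely at $k = \Theta(\log n/\log\log n)$, which explains why this particular ratio is the natural target for a $\widetilde{\mathcal{O}}(\sqrt{n})$-round deterministic algorithm. I would also confirm that $H$ inherits connectivity from $G$, so that the $\GKP$-based primitive underlying \Cref{theorem:deterministic_sparse} applies; this holds because a spanner never underestimates distances and therefore preserves connectivity.
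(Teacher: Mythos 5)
Your proposal is correct and follows essentially the same route as the paper's own proof: construct a $(2k-1)$-stretch spanner via \Cref{theorem:spanners} with $k = \Theta(\log n/\log\log n)$ so that the spanner has $\widetilde{\mathcal{O}}(n)$ edges, then apply \Cref{theorem:deterministic_sparse} so every node learns $H$ in $\widetilde{\mathcal{O}}(\sqrt{n})$ rounds and computes all spanner distances locally. Your explicit verification of the parameter calibration ($n^{1/k} = \polylog(n)$ versus stretch $\le \log n/\log\log n$) is a sound elaboration of a step the paper leaves implicit, but it does not change the argument.
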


\begin{proof}
First, we use the local network in order to implement the algorithm of Rozhon and Ghaffari with $k = \mathcal{O}(\log n/\log \log n)$, yielding a $\log n/\log \log n$-stretch spanner $H$ such that $|E_H| = \widetilde{\mathcal{O}}(n)$; notice that \Cref{theorem:spanners} implies that this step can be implemented in $\polylog(n)$ rounds. Next, we use \Cref{theorem:deterministic_sparse} so that every node learns the subgraph $H$ in $\widetilde{\mathcal{O}}(\sqrt{n})$ rounds of $\hybrid$, and the theorem follows given that $H$ is a $\log n/\log \log n$-stretch spanner.  
\end{proof}

For unweighted graphs we will use near-additive spanners (\Cref{theorem:near_additive}) to improve upon the approximation ratio established for weighted graphs.

\begin{proposition}[Unweighted APSP]
    Consider an $n$-node unweighted graph $G = (V, E)$. For any constant $\epsilon \in (0, 1)$, there exists a deterministic $(1 + \epsilon)$-approximation algorithm for the APSP problem which runs in $\widetilde{\mathcal{O}}(\sqrt{n})$ rounds of $\hybrid$.
\end{proposition}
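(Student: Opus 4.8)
The plan is to mirror the proof of \Cref{proposition:apsp} for the weighted case, replacing the multiplicative spanner with a near-additive spanner from \Cref{theorem:near_additive}, and then to deal with the resulting additive error by handling short distances exactly. Concretely, I would fix a constant $\rho \in (0, 1/2)$ (say $\rho = 1/4$) and invoke \Cref{theorem:near_additive} with parameter $\epsilon/2$ in place of $\epsilon$, using the local network to simulate the underlying $\congest$ algorithm, in order to build a $(1 + \epsilon/2, \beta)$-stretch spanner $H$ with $|E_H| = \widetilde{\mathcal{O}}(n)$ edges. Since $\beta = \mathcal{O}((\log\log n/\rho + 1/\rho^2)^{\log\log n + 1/\rho}) = n^{o(1)}$ for constant $\rho$, this construction runs in $\mathcal{O}(\beta n^{\rho}) = n^{\rho + o(1)} = \widetilde{\mathcal{O}}(\sqrt{n})$ rounds.

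Having built $H$, I would next apply \Cref{theorem:deterministic_sparse} to the sparse graph $H$ so that every node learns its entire topology; as $|E_H| = \widetilde{\mathcal{O}}(n)$ this costs $\widetilde{\mathcal{O}}(\sqrt{n})$ rounds of $\hybrid$. Each node $u$ can then locally compute $d_H(u, v)$ for every $v \in V$. By the spanner guarantee these estimates satisfy $d_G(u,v) \le d_H(u,v) \le (1 + \epsilon/2) d_G(u,v) + \beta$, which constitutes a genuine $(1+\epsilon)$-approximation only once $d_G(u,v)$ is large enough to absorb the additive term $\beta$.

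To correct the short-range distances, I would additionally have each node $u$ perform a BFS to depth $h \define \lceil 2\beta/\epsilon \rceil$ using only the local network; since the graph is unweighted this lets $u$ collect the topology of its $h$-hop ball in $h$ rounds and hence compute $d_G(u,v)$ \emph{exactly} for every $v$ with $d_G(u,v) \le h$ (the entire shortest path to such a $v$ stays inside the ball). Because $\beta = n^{o(1)}$ and $\epsilon$ is constant, we have $h = n^{o(1)} = \widetilde{\mathcal{O}}(\sqrt{n})$, so this phase fits within the round budget. The final estimate is $d'(u,v) = d_G(u,v)$ for every $v$ discovered by the BFS, and $d'(u,v) = d_H(u,v)$ otherwise. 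Correctness splits into two cases: if $v$ lies in the $h$-ball the answer is exact, whereas if $v$ is not discovered then $d_G(u,v) > h \ge 2\beta/\epsilon$, so $\beta \le (\epsilon/2) d_G(u,v)$ and therefore $d_H(u,v) \le (1 + \epsilon/2) d_G(u,v) + \beta \le (1+\epsilon) d_G(u,v)$, as required.

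The main obstacle is precisely this additive slack $\beta$, which prevents the near-additive spanner from directly yielding a multiplicative approximation; the crux of the argument is the observation that $\beta$ is subpolynomial for constant $\rho$, so the threshold $h = \Theta(\beta/\epsilon)$ below which we fall back on exact BFS is itself $n^{o(1)}$ and thus negligible compared to $\sqrt{n}$. It remains to check the minor points that simulating the $\congest$ spanner construction on the local edges is free in $\hybrid$ (since $\local$ dominates $\congest$) and that $H$ inherits connectivity from $G$, so that \Cref{theorem:deterministic_sparse} indeed applies.
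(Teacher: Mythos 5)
Your proposal is correct and follows essentially the same route as the paper's own proof: build the near-additive spanner of \Cref{theorem:near_additive}, disseminate it via \Cref{theorem:deterministic_sparse}, and handle distances below the threshold $\Theta(\beta/\epsilon)$ exactly via a local BFS, using that $\beta = n^{o(1)}$ for constant $\rho$. The only cosmetic differences are that you bake the rescaling into the spanner parameter ($\epsilon/2$ and threshold $2\beta/\epsilon$) whereas the paper runs with $\epsilon$ and rescales a $(1+2\epsilon)$-approximation at the end, and the order of the local and global phases is swapped.
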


\begin{proof}
Let $\beta$ be defined as in \Cref{theorem:near_additive}, where we take $\rho < 1/2$. First, each node in the network will determine all the nodes which lie within $\beta/\epsilon$ distance. This can be trivially performed in $\lceil \beta/\epsilon \rceil$ rounds of $\local$ since the graph is unweighted. Next, we leverage the deterministic algorithm of \Cref{theorem:near_additive} to compute a $(1 + \epsilon, \beta)$-stretch spanner $H$ in $o(\sqrt{n})$ rounds. Afterwards, we use our deterministic protocol of \Cref{theorem:deterministic_sparse} so that every node in the graph learns the spanner $H$ after $\widetilde{O}(\sqrt{n})$ rounds; here we used that the number of edges in the spanner is $\widetilde{O}(n)$. Now consider two nodes $u, v \in V$ such that $d_G(u,v) \geq \beta/\epsilon$.
Then, since $H$ constitutes a $(1 + \epsilon, \beta)$-stretch spanner it follows that $d_G(u, v) \leq d_H(u, v) \leq (1 + \epsilon) d_G(u, v) + \beta \leq (1 + \epsilon) d_G(u, v) + \epsilon d_G(u, v) \leq (1 + 2\epsilon) d_G(u, v)$. Otherwise, if $d_G(u, v) < \beta/\epsilon$, then both nodes know the exact distance from each other by virtue of our previous local step. Thus, we have recovered a $(1 + 2\epsilon)$-approximation for the APSP problem, as desired. 
\end{proof}

\subsubsection{Cut Problems}

Moreover, we will leverage the distributed algorithm of Koutis in order to obtain efficient algorithms for cut-related problems in the $\hybrid$ model. We wish to convey the robustness of our approach by presenting a guarantee for a series of cut problems. First, we recall the following: The \emph{minimum cut} problem consists of identifying a partition of the vertices $V$ into $S$ and $V \setminus S$ in order to minimize the weight of $\cut_G(S)$; it admits an efficient centralized solution, for example, via Karger's celebrated algorithm \cite{10.5555/313559.313605}. Note that for an unweighted graph the minimum cut coincides with the \emph{edge connectivity}. The \emph{$s-t$ minimum cut} problem is similar to the minimum cut problem, but the nodes $s$ and $t$ are restricted to reside on different sets of the partition; see \cite{DBLP:conf/stoc/BenczurK96,10.1145/1993636.1993674}. Finally, in the \emph{sparsest cut} problem we are searching for a partition $S, V \setminus S$ that minimizes the quantity $\cut_G(S)/(|S| \cdot | V \setminus S|)$; it is known that the sparsest cut problem is ${\mathcal{NP}}$-hard \cite{10.1145/1502793.1502794,10.1145/100216.100257}.

\begin{proposition}[$\hybrid$ Algorithms for Cut Problems]
    \label{proposition:cuts}
    For any $n$-node graph $G = (V, E, w)$ and for any $\epsilon \in (0, 2)$, we can compute with high probability a $(1 + \epsilon)$-approximation in expected $\widetilde{\mathcal{O}}(\sqrt{n}/\epsilon + 1/\epsilon^2)$ rounds of $\hybrid$ for the following problems: (i) the minimum $s-t$ cut, (ii) the minimum cut, and (iii) the sparsest cut.
\end{proposition}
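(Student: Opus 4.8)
The plan is to reduce all three problems to a single dissemination task on a sparsifier. First I would invoke the distributed sparsification algorithm of \Cref{theorem:koutis}, run over the local edges (recall that $\local$ subsumes $\congest$), to compute in $\widetilde{\mathcal{O}}(1/\epsilon^2)$ rounds a cut sparsifier $H = (V, \widehat{E}, \widehat{w})$ satisfying $(1-\epsilon)\cut_H(S) \le \cut_G(S) \le (1+\epsilon)\cut_H(S)$ for every $S \subset V$, with $|\widehat{E}| = \widetilde{\mathcal{O}}(n/\epsilon^2)$ in expectation. At this point each surviving edge, together with its new weight $\widehat{w}(e)$, is encoded in $\mathcal{O}(\log n)$ bits and is known to its two endpoints.

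Next I would broadcast the entire sparsifier to every node. Treating the edges of $H$ as tokens gives a $(k,\ell)$-TD instance over the \emph{connected} graph $G$ with $k = |\widehat{E}|$; since $|\widehat{E}| = \widetilde{\mathcal{O}}(n/\epsilon^2) \ge n$ whenever $\epsilon < 1$, \Cref{proposition:derandomization} disseminates all tokens in $\widetilde{\mathcal{O}}(\sqrt{|\widehat{E}|}) = \widetilde{\mathcal{O}}(\sqrt{n}/\epsilon)$ rounds. When $\epsilon$ is so large that $|\widehat{E}| < n$, the sparsifier already has $\widetilde{\mathcal{O}}(n)$ edges and the dissemination primitive of \Cref{theorem:deterministic_sparse} gives $\widetilde{\mathcal{O}}(\sqrt{n}) = \widetilde{\mathcal{O}}(\sqrt{n}/\epsilon)$, using $\epsilon < 2$. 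Because the bound on $|\widehat{E}|$ holds only in expectation, the round complexity is likewise in expectation; summing the two phases yields the claimed $\widetilde{\mathcal{O}}(\sqrt{n}/\epsilon + 1/\epsilon^2)$.

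Once every node holds all of $H$, it solves the desired cut problem on $H$ locally, exploiting the unlimited local computation of the model---this is what lets us handle even the $\mathcal{NP}$-hard sparsest cut. The approximation then follows purely from the sparsifier guarantee. For minimum cut and minimum $s-t$ cut, if $S_H^\star$ minimizes $\cut_H$ over the relevant family of cuts and $S_G^\star$ minimizes $\cut_G$, then
\[
\cut_G(S_H^\star) \le (1+\epsilon)\,\cut_H(S_H^\star) \le (1+\epsilon)\,\cut_H(S_G^\star) \le \frac{1+\epsilon}{1-\epsilon}\,\cut_G(S_G^\star),
\]
and $\frac{1+\epsilon}{1-\epsilon} = 1 + \mathcal{O}(\epsilon)$, so rescaling $\epsilon$ by a constant recovers a genuine $(1+\epsilon)$-approximation. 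For sparsest cut the same chain works verbatim after observing that the denominator $|S|\cdot|V\setminus S|$ depends only on the bipartition and is identical in $G$ and $H$, hence $\cut_G(S)/(|S|\,|V\setminus S|)$ and its $H$-analogue differ by at most a $(1\pm\epsilon)$ factor for every $S$.

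The main thing to get right is the two-sidedness of the sparsifier inequality: the guarantee of \Cref{theorem:koutis} bounds $\cut_G$ in terms of $\cut_H$ rather than the reverse, so one must chain the $(1+\epsilon)$ and $(1-\epsilon)$ bounds carefully and absorb the resulting $\frac{1+\epsilon}{1-\epsilon}$ factor into the $\epsilon$-rescaling; everything else is a routine composition of \Cref{theorem:koutis} with the dissemination primitive. A secondary point is to disseminate over $G$ (which is connected) rather than over $H$, since the sparsifier may itself be disconnected, and to note that the high-probability cut guarantee combined with the in-expectation edge bound is exactly what produces a high-probability approximation within an \emph{expected} round bound.
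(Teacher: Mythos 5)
Your proposal is correct and follows essentially the same route as the paper: run the Koutis sparsification of \Cref{theorem:koutis} over the local edges, disseminate the $\widetilde{\mathcal{O}}(n/\epsilon^2)$-edge sparsifier to all nodes via the sparse-dissemination primitive in expected $\widetilde{\mathcal{O}}(\sqrt{n}/\epsilon)$ rounds, solve each cut problem locally, and absorb the $\frac{1+\epsilon}{1-\epsilon}$ loss by rescaling $\epsilon$. Your explicit chaining of the two-sided sparsifier inequality, the observation that the sparsest-cut denominator is bipartition-only, and the handling of the in-expectation edge bound (via Jensen/Cauchy--Schwarz, which the paper invokes explicitly) match the paper's argument in substance, with the immaterial difference that you invoke the deterministic dissemination primitive where the paper cites its randomized counterpart.
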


\begin{proof}
First, we apply the sparsification algorithm of Koutis \cite{koutis2014simple}, employing only the local network for $\widetilde{\mathcal{O}}(1/\epsilon^2)$ rounds in order to identify a subgraph $H$. \Cref{theorem:koutis} implies that with high probability $(1 - \epsilon) \cut_H(S) \leq \cut_G(S) \leq (1 + \epsilon) \cut_H(S)$, for all $S \subset V$. Then, we leverage our algorithm of \Cref{proposition:sparse} so that every node knows the entire topology of $H$ after expected $\widetilde{\mathcal{O}}(\sqrt{|E_H|})$ rounds of the $\hybrid$ model; $|E_H|$ represents the number of edges of the cut sparsifier $H$, and \Cref{theorem:koutis} implies that $\E[|E_H|] = \widetilde{\mathcal{O}}(n/\epsilon^2)$. Thus, the expected number of rounds for executing our algorithm from \Cref{theorem:deterministic_sparse} is $\widetilde{\mathcal{O}}(\E[\sqrt{|E_H|}]) = \widetilde{\mathcal{O}}(\sqrt{\E[|E_H|]}) = \widetilde{\mathcal{O}}(\sqrt{n}/\epsilon)$ (Cauchy-Schwarz). Then, every node can determine locally the solution to the corresponding problem in the sparsified graph $H$, which also yields a $(1 + 4\epsilon)$-approximation if $\epsilon \in (0,1/2)$; finally, rescaling $\epsilon$ concludes the proof.
\end{proof}

Naturally, our approach yields results for other cut-related problems, such as computing a $(1 - \epsilon)$-approximate maximum cut, or determining an approximate \emph{Gomory-Hu} tree \cite{10.2307/2098881}. It should be noted that the local computation required to compute \emph{exactly} the sparest cut---even in the spectral sparsifier---is most likely exponential; one could employ the $\mathcal{O}(\sqrt{\log n})$-approximation algorithm of Arora et al. \cite{10.1145/1502793.1502794} in order to reduce the local computation, sacrificing analogously the approximation ratio guarantee.

\paragraph{Lower Bound}

This approach is meaningful for cut-related problems once we impose a stronger output requirement. Namely, we guarantee that every node will know at the end of the distributed algorithm the entire composition of an approximate cut. In fact, for such an output requirement we can establish an almost-matching lower bound:

\begin{proposition}
    \label{proposition:min_cut-lower_bound}
    Determining a $W/n$-approximation for the minimum cut problem requires $\widetilde{\Omega}(\sqrt{n})$ rounds of $\hybrid$, where $W \geq n$ is the maximum edge-weight, assuming that every node has to know a cut at the end of the distributed algorithm.
\end{proposition}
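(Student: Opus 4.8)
The plan is to reduce a hard information\nobreakdash-dissemination task to approximate Min\nobreakdash-Cut under the strong output requirement, and then invoke the technical lemma of Augustine et al.\ \cite{DBLP:conf/soda/AugustineHKSS20}. Concretely, that lemma (the one underlying the tightness of \Cref{lemma:TD}) exhibits a configuration in which $\Theta(n)$ bits are held \emph{privately}, one or a few per node, on a local topology of hop\nobreakdash-diameter $\Theta(\sqrt{n})$, such that any algorithm after which \emph{every} node knows all $\Theta(n)$ bits requires $\widetilde{\Omega}(\sqrt{n})$ rounds of $\hybrid$. The intuition is a receiving\nobreakdash-capacity (light\nobreakdash-cone) argument: after $T$ rounds a node can have learned at most the $\widetilde{O}(T^2)$ bits initially located within its radius\nobreakdash-$T$ local ball, plus the $\widetilde{O}(T)$ bits delivered over the capacity\nobreakdash-$\widetilde{O}(1)$ global channel, so $\widetilde{O}(T^2)+\widetilde{O}(T)\ge\Theta(n)$ forces $T=\widetilde{\Omega}(\sqrt{n})$. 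Our goal is to design a weighted graph whose unique approximate minimum cut literally \emph{spells out} these bits, so that the Min\nobreakdash-Cut output requirement is at least as hard as the dissemination task.

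For a hidden string $z\in\{0,1\}^{\Theta(n)}$ I would build $G_z$ as follows. Introduce $\Theta(n)$ \emph{free} nodes $u_1,\dots,u_m$ and two heavy \emph{anchor} regions $A$ and $B$ that are \emph{not} joined to each other directly. Each $u_i$ is attached to a private node of $A$ and to a private node of $B$ by two edges whose weights form the pair $\{1,H\}$, where $H=nW$, and the \emph{order} of the pair encodes $z_i$ (the super\nobreakdash-heavy edge points to $A$ iff $z_i=1$). The anchors are given edge weights $W$ and wired as high\nobreakdash-edge\nobreakdash-connectivity yet high\nobreakdash-diameter structures---for instance a $\sqrt{n}\times\sqrt{n}$ cylinder whose slices are heavy cliques---so as to simultaneously realise the bottleneck topology required by the dissemination lemma. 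Since each anchor node is attached to only one free node, no node is incident to more than $\widetilde{O}(1)$ informative edges, and hence no node initially knows more than $\widetilde{O}(1)$ bits of $z$; this is precisely what makes disseminating the whole cut genuinely expensive.

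The analysis then rests on a weight\nobreakdash-gap computation. The intended cut places each $u_i$ on the side of its super\nobreakdash-heavy edge, so the only crossing edges are the $\Theta(n)$ weight\nobreakdash-$1$ edges and $\mathrm{OPT}=\Theta(n)$. Any $W/n$\nobreakdash-approximate cut therefore has weight at most $(W/n)\cdot\Theta(n)=\Theta(W)$. On the other hand, every deviation from the intended partition is far more expensive: mis\nobreakdash-placing (or isolating) a free node severs a super\nobreakdash-heavy edge of weight $H=nW$, while carving the interior of an anchor costs $\Omega(\sqrt{n}\,W)$ by its edge\nobreakdash-connectivity; both exceed the budget $\Theta(W)$ for $W\ge n$ and large $n$. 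Consequently the \emph{unique} $W/n$\nobreakdash-approximate cut is exactly the intended one, and its side assignment of the free nodes reproduces $z$ bit for bit. Thus any algorithm meeting the output requirement lets every node recover all of $z$, and the technical lemma forces $\widetilde{\Omega}(\sqrt{n})$ rounds.

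The main obstacle is the structural tension between the two roles the anchors must play. Rigidity against a loose $W/n$\nobreakdash-approximation demands that the anchors have large internal cuts, i.e.\ high edge\nobreakdash-connectivity, whereas the dissemination lower bound demands that information spread slowly inside them, i.e.\ large hop\nobreakdash-diameter; these pull in opposite directions, since high connectivity usually entails small diameter. Reconciling them---by using heavy, high\nobreakdash-diameter, high\nobreakdash-edge\nobreakdash-connectivity gadgets that match the hard local topology of \cite{DBLP:conf/soda/AugustineHKSS20} while keeping each node's local view of $z$ down to $\widetilde{O}(1)$ bits---and then verifying that no unanticipated cheap cut can exploit the deliberately weak long\nobreakdash-range connectivity, is the delicate part of the argument. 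Everything else reduces to the gap computation above and a direct appeal to the technical lemma.
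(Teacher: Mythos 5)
Your high-level strategy---encode a random string in the \emph{unique} $W/n$-approximate cut so that the output requirement forces learning $\Omega(n)$ bits, then appeal to the information bottleneck of Augustine et al.---is exactly the paper's strategy, and your weight-gap computation (budget $\Theta(W)$ versus $nW$ for misplacing a free node and $\Omega(\sqrt{n}\,W)$ for carving an anchor) is sound. The gap is in the information-theoretic step. The lemma you invoke is not the one the paper uses, and the version you state is false as stated: in $\hybrid$, a node's radius-$T$ ball does not receive only ``$\widetilde{O}(T)$ bits over the global channel,'' because global messages delivered to \emph{any} node of the ball can be relayed to the center over the unbounded-bandwidth local edges. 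The correct count is (bits initially in the ball) $+$ (ball size) $\times T \times \widetilde{O}(1)$. In your anchors---heavy cliques arranged in a cylinder---a radius-$T$ ball contains $\Omega(T\sqrt{n})$ nodes, so a node inside an anchor can absorb $\widetilde{O}(T^2\sqrt{n})$ bits in $T$ rounds, and your argument only yields $T=\widetilde{\Omega}(n^{1/4})$, not $\widetilde{\Omega}(\sqrt{n})$. This is precisely the ``structural tension'' you flag at the end but do not resolve; it is not a technicality, it is fatal to placing the receivers inside high-connectivity gadgets, since the $\sqrt{n}$ bound requires the receiver's neighborhood to grow only linearly (path-like), which contradicts high edge-connectivity.

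The paper's construction dissolves the tension rather than resolving it. The actual lemma (\Cref{lemma:hybrid-lower_bound}) concerns a \emph{single} receiver $b$ at the end of a path of $L$ edges hanging off the subgraph $V'$ that holds the random variable $X$; the bound is $\Omega(\min\{L,\, H(X)/(L\log^2 n)\})$. Accordingly, the paper puts the entire cut structure inside a dense, small-diameter cluster: a hub $u$ joined by unit-weight edges to a uniformly random subset $V_1'$ (this is the unique min cut, with $H(X)=\Omega(n)$), with everything else wired by weight-$W$ paths. The receiver $b$ sits at the end of a separate path of $L=\lfloor\sqrt{n}\rfloor$ edges, each of weight $W$, so the path can never participate in any $W/n$-approximate cut---it is pure lower-bound infrastructure, not part of the cut gadget. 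Since $b$ must nevertheless output the cut, it must learn $X$, and the lemma gives $\widetilde{\Omega}(\sqrt{n})$ directly. In short: you do not need every node to be a hard receiver, you do not need the bit-holding region to have large diameter or to be information-sparse, and you do not need anchors that are simultaneously well-connected and slow; one thin, heavy, never-cut path with a single receiver suffices. To repair your write-up, replace your anchors-as-receivers scheme with such a path appendage and invoke \Cref{lemma:hybrid-lower_bound} in its actual form.
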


The approximation ratio here measures the (multiplicative) discrepancy between the derived solution and the minimum (weighted) cut in the graph. Our approach is based on a technique developed in \cite{DBLP:conf/soda/AugustineHKSS20} for establishing lower bounds in the $\hybrid$ model. Specifically, they showed how to induce an information bottleneck for a certain class of graphs. Formally, we paraphrase their main technical lemma:

\begin{lemma}[\cite{DBLP:conf/soda/AugustineHKSS20}, Lemma 4.4]
    \label{lemma:hybrid-lower_bound}
Consider an $n$-node graph $G = (V, E)$ consisting of a subgraph $G'=(V',E')$, and a path of length $L$ edges from some node $a \in V'$ to $b\in V \setminus V'$ such that $a$ is the unique node from $V'$ in the path. If the nodes in $V'$ are given by the state of some random variable $X$, and node $b$ needs to learn the realization of $X$, then every randomized algorithm which solves the problem in the $\hybrid$ model requires $\Omega\left(\min\{L, H(X)/(L \log^2 n) \} \right)$ rounds.
\end{lemma}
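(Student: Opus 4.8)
The plan is to prove the statement through a purely information-theoretic argument that bounds how much node $b$ can know about $X$ after $T$ rounds, and then to invoke Fano's inequality to turn "learning $X$" into a lower bound on $T$. Throughout, let $p_0 = a, p_1, \dots, p_L = b$ denote the nodes of the path, let $\text{view}_S(t)$ stand for the joint state of a set $S$ of nodes at the end of round $t$, and write $I(\cdot\,;\cdot)$ and $H(\cdot)$ for mutual information and entropy. Since the path nodes' inputs and IDs are independent of the internal state of $V'$, we have $I(\text{view}_{\{p_1,\dots,p_L\}}(0); X) = 0$. The high-level dichotomy is immediate: if $T \ge L$ then $T \ge \min\{L, H(X)/(L\log^2 n)\}$ and there is nothing to prove, so I will assume $T < L$ and show $T = \Omega(H(X)/(L \log^2 n))$.

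The core is a \emph{sweeping frontier} argument. For a round $t$ define the beyond-frontier set $B_t = \{p_j : j > t\}$; these are exactly the path nodes that information originating in $V'$ cannot yet have reached using \emph{local} edges alone, since the only local gateway out of $V'$ is $a = p_0$ and local communication advances one hop per round. Let $G(t)$ be the total number of bits received via the global ($\ncc$) network by all path nodes during rounds $1, \dots, t$. Because in $\ncc$ every node receives at most $O(\log^2 n)$ bits per round (at most $O(\log n)$ messages of $O(\log n)$ bits) and there are $L$ path nodes, $G(t) = O(t L \log^2 n)$. The key claim I will establish by induction on $t$ is
\[
  I\big(\text{view}_{B_t}(t); X\big) \;\le\; G(t).
\]
The base case $t = 0$ holds since $B_0$ excludes $a$ and the remaining path nodes know nothing of $X$. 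For the inductive step I observe that $\text{view}_{B_{t+1}}(t+1)$ is a function of (i) $\text{view}_{B_t}(t)$ and (ii) the global messages delivered to $B_{t+1}$ in round $t+1$: every node $p_j$ with $j \ge t+2$ updates its state from its own time-$t$ state and from the local messages of $p_{j-1},p_{j+1}$; both neighbors have index $> t$, hence lie in $B_t$, and by synchronous timing their round-$(t+1)$ messages depend only on their time-$t$ states. Thus the only genuinely new information about $X$ enters through the global channel, and the chain rule together with subadditivity of entropy give $I(\text{view}_{B_{t+1}}(t+1); X) \le I(\text{view}_{B_t}(t); X) + O(L \log^2 n) \le G(t+1)$.

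Applying the claim at $t = T$ and using $b = p_L \in B_T$ (as $T < L$) yields $I(\text{view}_b(T); X) \le G(T) = O(T L \log^2 n)$. On the other hand, since $b$ must output $X$ correctly with probability at least $2/3$, Fano's inequality forces $I(\text{view}_b(T); X) = \Omega(H(X))$; public randomness is handled by conditioning on the shared string $R$, which is independent of $X$, so the frontier bound holds for each fixed $R$ while $b$'s output remains a function of $(\text{view}_b(T), R)$. Combining the two inequalities gives $T = \Omega(H(X)/(L \log^2 n))$, which together with the trivial case $T \ge L$ establishes the lemma.

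I expect the main obstacle to be precisely the treatment of the unbounded-bandwidth local edges: a naive cut in the middle of the path fails, because the single crossing local edge can in principle transfer all of $X$ in one round once its endpoint has learned $X$ (which it may do quickly via the global network). The resolution — and the delicate point to argue carefully — is that by excluding the current frontier node from $B_t$ and exploiting the within-round ordering of sends and receives, every local message entering the counted set carries only information already accounted for at the previous step, so that the only new information is charged against the cumulative global budget $G(t)$ of the entire path. Getting this bookkeeping right (the exact synchronous timing, and the reduction to $T < L$) is where the real work lies; the Fano step and the per-round bandwidth count are routine.
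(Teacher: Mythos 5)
You should first note a mismatch in the premise of the exercise: the paper contains \emph{no proof} of \Cref{lemma:hybrid-lower_bound} --- it is explicitly imported ("we paraphrase their main technical lemma") from Lemma~4.4 of Augustine et al.~\cite{DBLP:conf/soda/AugustineHKSS20}. So there is nothing in-paper to compare against; your argument can only be judged on its own merits and against the original source, whose proof is indeed an information-bottleneck argument of the kind you reconstruct. On those terms your proof is essentially correct, and you correctly identified the one genuinely delicate point: a naive cut across the path fails because $\local$ edges have unbounded bandwidth, and the fix is the shrinking frontier $B_t = \{p_j : j > t\}$, which works because every local message entering $B_{t+1}$ in round $t+1$ originates at a node of index at least $t+1$, i.e.\ in $B_t$, whose round-$(t+1)$ message is a function of its time-$t$ state. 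Your inductive step is valid via $I(f(A,M,R);X) \le I(A,M;X) \le I(A;X) + H(M)$ with $H(M) = \mathcal{O}(L\log^2 n)$ per round (this also silently and correctly absorbs sender identities and receipt patterns in $\ncc$, each bounded by $\mathcal{O}(\log^2 n)$ bits per node per round), and the dichotomy $T \ge L$ versus $T < L$ disposes of the trivial case. Note also that your claim that the path nodes' only local edges are path edges is justified because $G$ \emph{consists of} $G'$ plus the path, with $a$ the unique path node in $V'$.

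One caveat deserves flagging. Your final step asserts that Fano's inequality "forces $I(\mathrm{view}_b(T);X) = \Omega(H(X))$" from success probability $2/3$. Fano gives $H(X \mid \mathrm{view}_b(T)) \le 1 + p_e \log\lvert\mathrm{supp}(X)\rvert$, so the conclusion $I = \Omega(H(X))$ with constant error $p_e = 1/3$ holds only when $H(X) = \Theta(\log\lvert\mathrm{supp}(X)\rvert)$, i.e.\ for (near-)uniform $X$; for a general random variable, $p_e\log\lvert\mathrm{supp}(X)\rvert$ can swamp $H(X)$ entirely. This is harmless in the paper's actual use of the lemma --- in \Cref{proposition:min_cut-lower_bound} the variable $X$ is uniform over the subsets of $V' \setminus \{a,u\}$ --- and it also disappears under the paper's standing convention that randomized algorithms succeed with high probability (then $p_e\log\lvert\mathrm{supp}(X)\rvert = o(1)$ whenever $\log\lvert\mathrm{supp}(X)\rvert = \poly(n)$). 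But as a freestanding proof of the lemma exactly as stated, you should either add the uniformity/high-probability hypothesis or phrase the final step as bounding $H(X \mid \mathrm{view}_b(T))$ directly from the required learning guarantee, rather than claiming $\Omega(H(X))$ mutual information for arbitrary $X$ at constant error.
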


Note that $H(X)$ represents the (Shannon) entropy of random variable $X$. Armed with this lemma, we are ready to construct a "hard" instance for the Min-Cut problem. 

\begin{proof}[Proof of \Cref{proposition:min_cut-lower_bound}]
Consider two nodes $a, b$ and a path of length $L = \lfloor \sqrt{n} \rfloor$ edges connecting them, such that the weight of every edge is $W \geq n$. Moreover, we let $V' = V_1' \cup V_2'$ such that $V_1' \cap V_2' = \emptyset$, with $a \in V_1'$ and some node $u \in V_2'$. Now every other node in $V'$ is assigned to one of $V_1'$ and $V_2'$ (exclusively) based on the outcome of an unbiased random coin. Finally, we connect node $u$ to every node in $V_1'$ with edges of unit-weight, while each of $V_1'$ and $V_2'$ are interconnected via simple paths with edge-weights $W$, as illustrated in the \Cref{fig:min_cut}. Observe that for $W \geq n$, the minimum cut of the induced graph is (independently from the random realization) the set $\{ \{u, v\} : v \in V_1' \}$, while every other cut of the graph yields an approximation ratio of at least $W/n$. As a result, $b$ has to know the entire set $V_1'$, with the exception of node $a$, in order to determine a reasonable approximation. Let $X$ be a random variable that encodes the IDs of the nodes in $V_1'$, excluding node $a$. It follows that $X$ is uniformly distributed over the subsets of $V' \setminus \{a, u\}$, implying that $H(X) = \log 2^{|V'| - 2} = \Omega(n)$ bits. Thus, given that node $b$ has to know the state of $X$ in order to determine the minimum cut, \Cref{lemma:hybrid-lower_bound} implies an $\widetilde{\Omega}(\sqrt{n})$ round-complexity lower bound. To be more precise, if $b$ does not know the state of $X$, it has to be assumed that the value of the minimum cut is $W$, leading to an approximation ratio larger than $W/n$.
\end{proof}

\begin{figure}[!ht]
    \centering
    \includegraphics[scale=0.5]{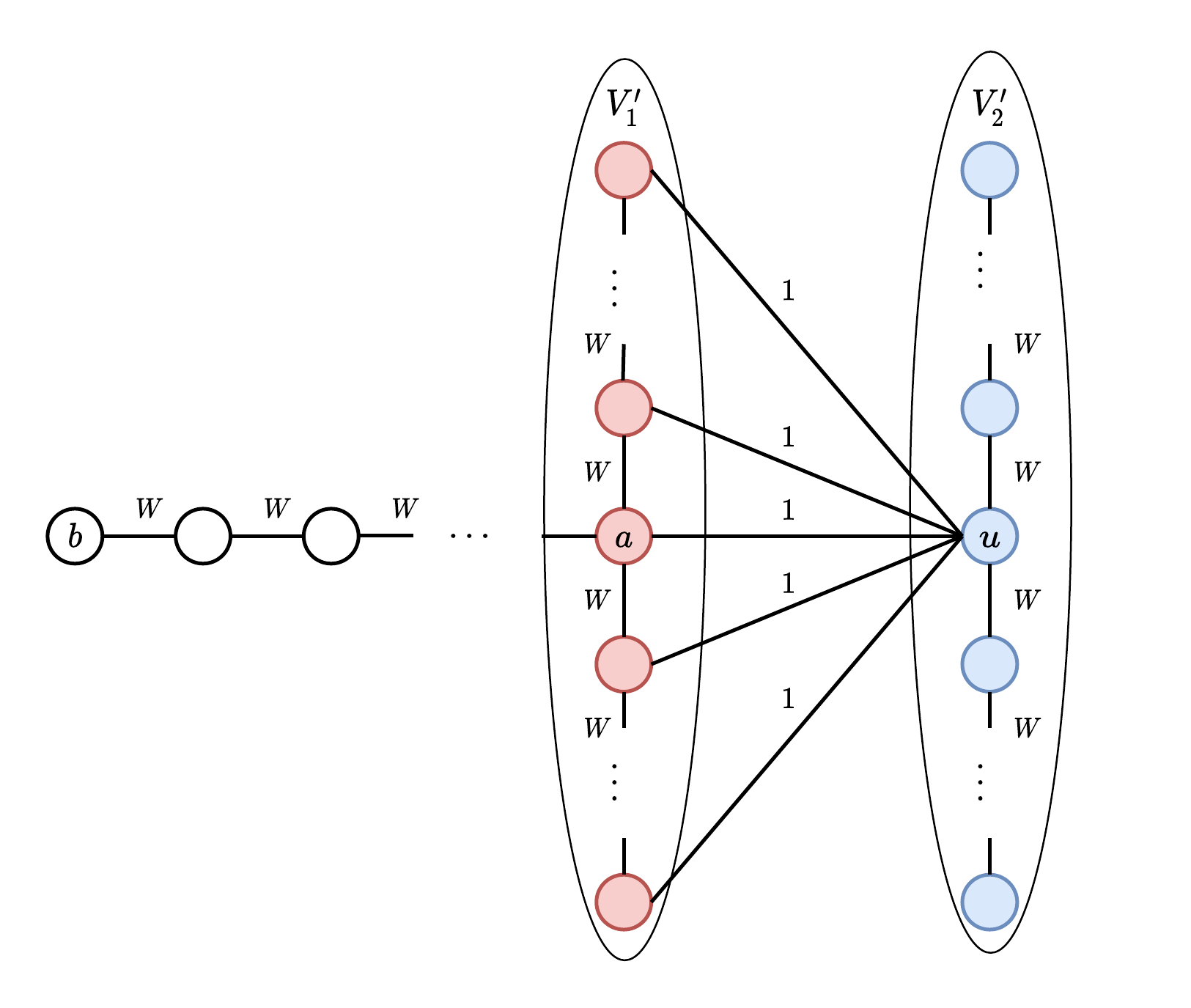}
    \caption{A hard Min-Cut instance in the $\hybrid$ model.}
    \label{fig:min_cut}
\end{figure}

\subsubsection{Girth}

Here we present an algorithm for determining the \emph{girth} of the graph; recall that the girth is defined as the length of the smallest cycle. We commence with the following standard lemma, establishing a trade-off between the girth and the number of edges in the graph.

\begin{lemma}[\cite{10.5555/581165}]
    \label{lemma:girth}
A graph of girth $g$ has at most $n^{1 + 1/\lfloor (g-1)/2 \rfloor} + n$ edges.
\end{lemma}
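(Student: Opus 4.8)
The plan is to prove the bound through the classical \emph{Moore bound}, after first reducing to a subgraph of large minimum degree. Write $d = 2|E|/n$ for the average degree and set $k = \lfloor (g-1)/2 \rfloor$. The first step is a standard extraction lemma: among all nonempty subgraphs of $G$, I would pick a subgraph $H$ maximizing the ratio $|E(H)|/|V(H)|$. A short calculation shows that in such an extremal $H$ every vertex has degree at least $|E(H)|/|V(H)| \ge |E|/n$; indeed, deleting any vertex cannot increase the ratio, and cross-multiplying this inequality forces $\deg_H(v) \ge |E(H)|/|V(H)|$ for every $v$. Hence $\delta(H) \ge |E|/n$. Crucially, $H$ inherits the girth bound: every cycle of $H$ is a cycle of $G$, so the girth of $H$ is at least $g \ge 2k+1$, i.e. $H$ contains no cycle of length at most $2k$.

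The second step is the Moore bound applied to $H$. Fix any vertex $v$ and run a breadth-first search, letting $L_i$ denote the vertices at distance exactly $i$ from $v$. Because $H$ has no cycle of length at most $2k$, the BFS tree up to depth $k$ contains no repeated vertex and no non-tree edge among the first $k$ levels; consequently every vertex at level $i \le k-1$ contributes at least $\delta(H)-1$ fresh neighbors at level $i+1$. This yields $|V(H)| \ge 1 + \sum_{i=1}^{k} \delta(H)(\delta(H)-1)^{i-1} \ge (\delta(H)-1)^k$. Since $|V(H)| \le n$, we obtain $(\delta(H)-1)^k \le n$, i.e. $\delta(H) \le n^{1/k} + 1$.

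Combining the two steps gives $|E|/n \le \delta(H) \le n^{1/k} + 1$, and multiplying through by $n$ yields $|E| \le n^{1 + 1/k} + n$, which is exactly the claimed bound (the degenerate case of a forest, where $g = \infty$, is immediate since then $|E| \le n - 1 \le n$). The one step demanding genuine care is the collision-free analysis of the BFS underlying the Moore bound: one must verify that two distinct branches of the BFS tree cannot meet, and that no non-tree edge can appear, within the first $k$ levels, for otherwise a cycle of length at most $2k$ would arise, contradicting the girth. This is precisely where the choice $k = \lfloor (g-1)/2 \rfloor$ is calibrated, since for both odd and even $g$ it guarantees girth at least $2k+1$ and hence the absence of any cycle of length $\le 2k$. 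Everything else reduces to a routine geometric-series estimate.
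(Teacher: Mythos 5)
Your proof is correct: the extremal-ratio extraction argument does force $\delta(H) \ge |E(H)|/|V(H)| \ge |E|/n$, and the collision-free BFS count under girth at least $2k+1$ (with $k = \lfloor (g-1)/2 \rfloor$ correctly calibrated for both parities of $g$) gives $(\delta(H)-1)^k \le n$, from which the stated bound $|E| \le n^{1+1/k} + n$ follows immediately. The paper itself offers no proof of this lemma---it is cited directly from the textbook reference---and your argument (minimum-degree subgraph extraction plus the Moore bound) is precisely the standard proof found there, so there is nothing to reconcile.
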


In particular, this lemma implies that if $g = \Omega(\log n)$, the underlying graph should be sparse. We leverage this observation with \Cref{theorem:deterministic_sparse} to establish the following result:

\begin{proposition}
    \label{proposition:girth}
There exists a deterministic algorithm for determining the girth of any graph $G$ in $\widetilde{\mathcal{O}}(\sqrt{n})$ rounds of $\hybrid$.
\end{proposition}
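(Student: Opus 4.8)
The plan is to exploit the dichotomy implied by \Cref{lemma:girth}: either the girth is small, in which case a shortest cycle can be found purely locally, or the girth is large, in which case the graph must be sparse and \Cref{theorem:deterministic_sparse} applies. Concretely, I would fix a threshold $T = \lceil 2 \log n \rceil$. If the girth $g$ satisfies $g > T$, then $\lfloor (g-1)/2 \rfloor \geq \log n$, so \Cref{lemma:girth} gives $|E| \leq n^{1 + 1/\log n} + n = \mathcal{O}(n)$ since $n^{1/\log n} = \mathcal{O}(1)$; hence the graph is sparse. Thus the two regimes $g \leq T$ and $g > T$ are complementary, and the algorithm only has to determine which regime it is in and act accordingly.

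First I would run a local short-cycle search. Since every vertex of a cycle of length $L$ through a node $v$ lies within distance $\lfloor L/2 \rfloor$ of $v$, collecting the radius-$\lfloor T/2 \rfloor$ ball around each node---together with the adjacency lists of the nodes it contains---suffices for $v$ to reconstruct, and hence to compute the length of, the shortest cycle through $v$ of length at most $T$. This gathering is performed in $\mathcal{O}(T) = \mathcal{O}(\log n)$ rounds of $\local$, as messages have unbounded size. Each node outputs the length of the shortest cycle through it (capped at $T$, or $\infty$ if none is found), and we then invoke the aggregate-and-broadcast primitive of \Cref{lemma:AB} with the $\textsc{Min}$ function in $\mathcal{O}(\log n)$ rounds of $\ncc$ to make the global minimum $g_{\min}$ known to every node. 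Since the girth equals the minimum over all nodes of the length of the shortest cycle through that node, whenever $g \leq T$ the node(s) lying on a shortest cycle detect it, so $g_{\min} = g$ and every node outputs the correct value.

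It remains to handle the case $g_{\min} > T$, which signals that no cycle of length at most $T$ exists, i.e. $g > T$. By the dichotomy above the graph is then guaranteed to be sparse, so I would invoke \Cref{theorem:deterministic_sparse} to let every node learn the entire topology of $G$ in $\widetilde{\mathcal{O}}(\sqrt{n})$ rounds of $\hybrid$; each node then computes the girth (or reports that $G$ is acyclic) from its local copy of the graph. The total round complexity is dominated by this last step, namely $\widetilde{\mathcal{O}}(\sqrt{n})$, since the local search and the aggregation together cost only $\mathcal{O}(\log n)$ rounds.

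The step I expect to require the most care is the correctness of the case distinction rather than any single computation: one must verify that the radius-$\lfloor T/2 \rfloor$ neighborhoods indeed capture every cycle of length at most $T$ (so that no short cycle is missed and the regime is identified correctly), and that \Cref{theorem:deterministic_sparse} is only ever invoked once sparsity has been certified by the failure of the local search. Everything else---the edge bound from \Cref{lemma:girth}, the $\textsc{Min}$-aggregation, and the local reconstruction of the girth---is routine.
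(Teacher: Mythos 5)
Your proposal is correct and follows essentially the same route as the paper's proof: a logarithmic-depth local search for short cycles, a $\textsc{Min}$ aggregate-and-broadcast via \Cref{lemma:AB}, and, if no short cycle is found, an appeal to \Cref{lemma:girth} to certify sparsity followed by \Cref{theorem:deterministic_sparse}. Your write-up is in fact slightly more careful than the paper's about the threshold and about why the radius-$\lfloor T/2 \rfloor$ balls capture all cycles of length at most $T$, but the argument is the same.
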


\begin{proof}
First, for $\log n$ rounds we let the nodes perform flooding through the local network; in this way, every node can determine the length of the smallest cycle it participates in, or $+\infty$ if no such cycle exists. Then, we employ the \emph{aggregate-and-broadcast} protocol of \Cref{lemma:AB} for the distributive aggregate function $\textsc{Min}$, where the input of every node corresponds to the number previously determined via flooding. If the result is not $+ \infty$ the algorithm terminates with every node knowing the girth of the graph. Otherwise, it follows that $g \geq 2 \log n$, where $g$ represents the girth. Thus, \Cref{lemma:girth} implies that $|E| = \mathcal{O}(n)$, and we can employ \Cref{theorem:deterministic_sparse} so that every node can determine the girth after $\widetilde{\mathcal{O}}(\sqrt{n})$ rounds.
\end{proof}

Again, we actually establish a much stronger result: every node can learn the entire composition of the minimum length cycle in $\widetilde{\mathcal{O}}(\sqrt{n})$, not just its length. We should note that for such an output requirement we can establish a matching lower bound similarly to the communication bottleneck induced for the Min-Cut problem (\Cref{proposition:min_cut-lower_bound}). However, if it suffices to let every node know the length of the minimum cycle it is unclear how to provide a meaningful lower bound. Indeed, when $g = \Omega(\log n)$ the underlying graph is sparse\footnote{Interestingly most of the upper bounds in the $\ncc$ \cite{10.1145/3323165.3323195} depend on the \emph{arboricity} of the graph, which roughly speaking is a measure of its sparsity; see the Nash-Williams theorem \cite{https://doi.org/10.1112/jlms/s1-39.1.12}.} and the standard approach---which is based on reducing the set disjointness problem on a suitably constructed instance---appears to fail in this case. In light of this we only give a lower bound of $\widetilde{\Omega}(n^{1/3})$ rounds for approximating the girth for \emph{directed} graphs; the following construction serves as a warm-up for our result in \Cref{section:radius}.

\paragraph{Lower Bound for Directed Girth}

We commence by providing a lower bound for $\bcc$; then, we will explain how to modify our construction for $\hybrid$. Specifically, we will present a reduction from the two-party set disjointness problem to approximating the directed girth. To this end, consider a set of nodes $U \cup V \cup V' \cup U'$, where we let $U = \{u_0, u_1, \dots, u_{k-1}\}, V = \{v_0, v_1, \dots, v_{k-1}\}, V' = \{v_0', v_1', \dots, v_{k-1}'\}$, and $U' = \{u_0', u_1', \dots, u_{k-1}'\}$. Moreover, we add the set of edges $\{ (v_i, v_i') : i \in [k]^* \} \cup \{ (u_i', u_i) : i \in [k]^* \}$, where $[k]^* \define \{0, 1, \dots, k-1\}$. Now assume that $x, y \in \{0,1\}^{k^2}$ represent the input strings of Alice and Bob respectively. We assume that Alice and Bob encode their inputs as edges on the graph, such that $(u_i, v_j) \in E \iff x_{i, j} = 1$, and $(v_j', u_i') \in E \iff y_{i, j} = 1$; in words, Alice encodes her input as edges between the nodes in $U$ and $V$, while Bob encodes his input as edges between the nodes in $V'$ and $U'$. We let $G_k^{x,y}$ represent the induced graph. This construction is illustrated in \Cref{fig:directed_girth}. 

\begin{figure}[!ht]
    \centering
    \includegraphics[scale=0.6]{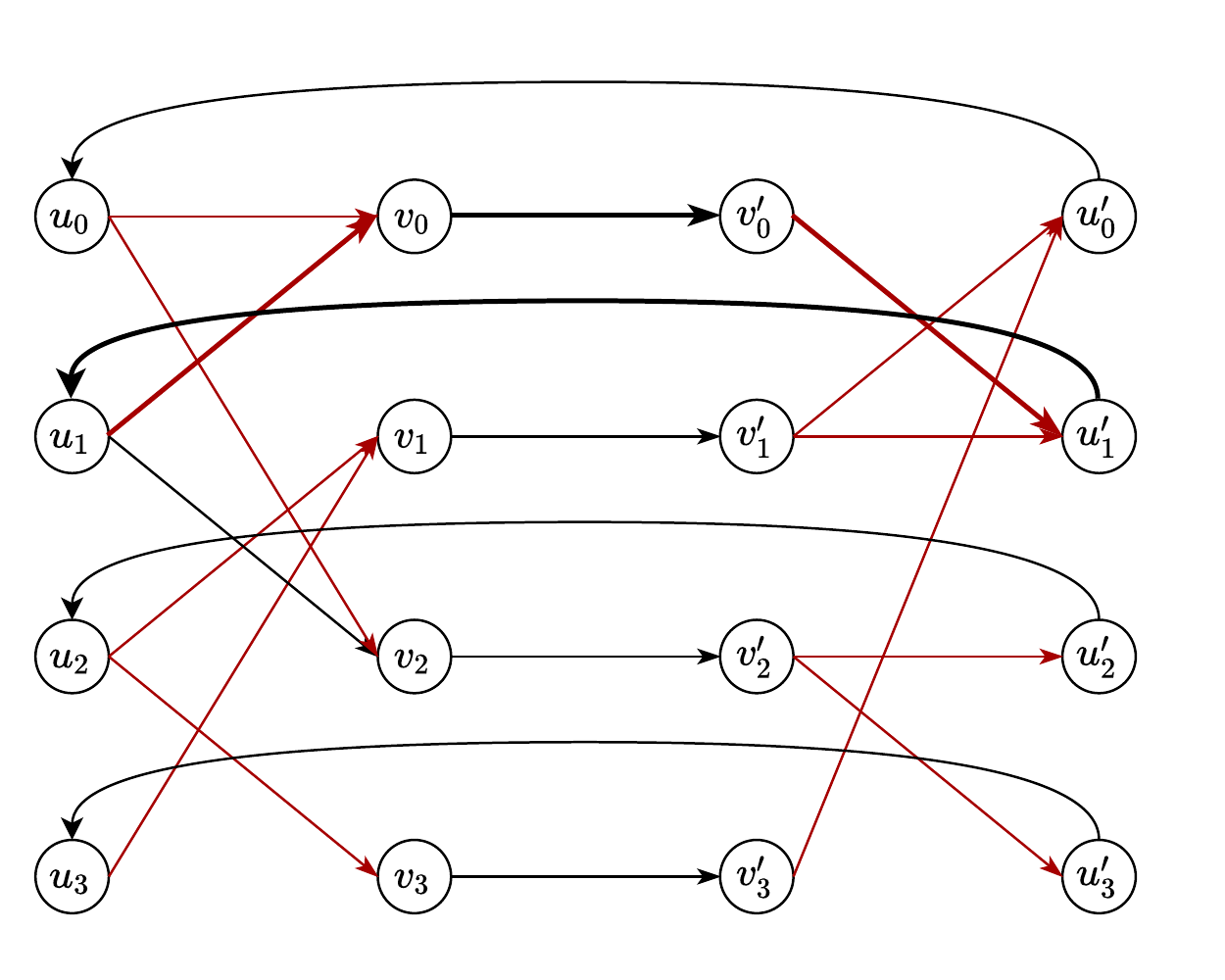}
    \caption{Reducing set disjointness to approximating the directed girth. Red edges correspond to the players' input strings. In this instance there exists a $4$-cycle in the induced graph (we have highlighted the corresponding edges), implying that the players' input strings are \emph{not} disjoint. }
    \label{fig:directed_girth}
\end{figure}

\begin{claim}
    \label{claim:directed_girth-bcc}
The girth $g$ of the directed graph $G_k^{x,y}$ is $4$ if $x \cap y \neq \emptyset$; otherwise, $g \geq 8$.
\end{claim}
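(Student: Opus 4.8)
The plan is to exploit the \emph{layered} structure of $G_k^{x,y}$. Observe that by construction every directed edge passes from one of the four vertex sets to the ``next'' one in the cyclic order $U \to V \to V' \to U' \to U$: Alice's edges go from $U$ to $V$, the fixed edges $(v_i, v_i')$ go from $V$ to $V'$, Bob's edges go from $V'$ to $U'$, and the fixed edges $(u_i', u_i)$ go from $U'$ to $U$. First I would assign a potential $\phi \in \mathbb{Z}/4\mathbb{Z}$ to each layer, namely $\phi(U) = 0$, $\phi(V) = 1$, $\phi(V') = 2$, and $\phi(U') = 3$, and note that every edge increases $\phi$ by exactly $1 \pmod 4$. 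Since a directed cycle must return to its starting layer, the total increase around it is $0 \pmod 4$; hence its length is $\equiv 0 \pmod 4$. In particular no cycle has length $1$, $2$, or $3$, so the girth of $G_k^{x,y}$ is either infinite or a multiple of $4$.

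Next I would characterize the directed $4$-cycles exactly. A cycle of length $4$ must traverse each of the four layers precisely once, so it is forced to have the form $u_a \to v_b \to v_b' \to u_c' \to u_a$. The crucial observation is that the fixed edges constitute \emph{identity matchings}: the only edge leaving $v_b$ toward $V'$ is $(v_b, v_b')$, and the only edge leaving $u_c'$ toward $U$ is $(u_c', u_c)$. Thus the second and fourth edges preserve the indices, and closing the cycle forces $c = a$, while the first and third edges require $x_{a,b} = 1$ and $y_{a,b} = 1$ respectively. Therefore a directed $4$-cycle exists in $G_k^{x,y}$ \emph{if and only if} there is a coordinate $(a,b)$ with $x_{a,b} = y_{a,b} = 1$, i.e. if and only if $x \cap y \neq \emptyset$.

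Combining the two observations yields the claim. If $x \cap y \neq \emptyset$, a coordinate in the intersection provides a $4$-cycle, and since no shorter cycle can exist by the potential argument, we conclude $g = 4$. If instead $x \cap y = \emptyset$, there is no $4$-cycle, and because any finite girth must be a multiple of $4$, the smallest admissible value is $8$, giving $g \geq 8$ (the case of an acyclic graph trivially satisfies $g \geq 8$ as well). I expect the only genuinely delicate point to be the potential/layering argument establishing divisibility by $4$: it hinges on verifying that the construction contains \emph{no} intra-layer or layer-skipping edges, which is immediate from the four edge types but should be stated explicitly so that the characterization of $4$-cycles via the identity matchings is airtight.
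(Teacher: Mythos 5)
Your proof is correct and follows essentially the same route as the paper's: exhibit the $4$-cycle $u_i \to v_j \to v_j' \to u_i' \to u_i$ when $x \cap y \neq \emptyset$, observe that the layered structure forces every cycle length to be divisible by $4$, and show that any $4$-cycle forces a common coordinate. The only difference is presentational—you make explicit (via the potential function mod $4$ and the identity-matching observation) the two facts the paper merely asserts, namely that $4 \mid g$ and that every $4$-cycle has the canonical form.
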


\begin{proof}
First of all, if $x \cap y \neq \emptyset$ it follows that $x_{i, j} = y_{i, j} = 1$, for some $i, j \in [k]^*$. By construction, this implies that $(u_i, v_j) \in E \land (v_j', u_i') \in E$, and hence, there exists a cycle of length $4$; namely, $u_i \rightarrow v_j \rightarrow v_j' \rightarrow u_i' \rightarrow u_i$. Otherwise, observe that if $g < + \infty$, it must be that $4 \mid g$. As a result, it suffices to show that if $x \cap y = \emptyset$ there does not exist a $4$-cycle. Indeed, for the sake of contradiction posit the $4$-cycle $u_i \rightarrow v_j \rightarrow v_j' \rightarrow u_i' \rightarrow u_i$ for some $i, j \in [k]^*$ (observe that every $4$-cycle can be expressed in this form). This implies that $(u_i, v_j) \in E \land (v_j', u_i') \in E$, which in turn gives us that $x_{i, j} = y_{i, j} = 1$, contradicting the assumption that $x \cap y = \emptyset$.
\end{proof}

We assume that an $\alpha$-approximation algorithm for the (directed) girth should return a number $\widetilde{g}$ such that $g \leq \widetilde{g} \leq \alpha \cdot g$, where $g$ represents the actual girth of the directed graph. With that in mind, we are ready to establish the following:

\begin{theorem}
    \label{theorem:directed_girth-bcc}
    For any $\epsilon \in (0, 1]$, determining a $(2-\epsilon)$-approximation of the directed girth with probability $2/3$ requires $\widetilde{\Omega}(n)$ rounds of $\bcc$.
\end{theorem}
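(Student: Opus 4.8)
The plan is to reduce $\disj_{k^2}$ to approximating the directed girth on $G_k^{x,y}$, exploiting the factor-$2$ gap exhibited in \Cref{claim:directed_girth-bcc}. Suppose we had a $\bcc$ algorithm $\mathcal{A}$ that computes a $(2-\epsilon)$-approximation $\widetilde{g}$ of the girth with probability $2/3$ in $T$ rounds. First I would observe that such an estimate separates the two cases of the claim: if $x \cap y \neq \emptyset$ then $g = 4$, so $\widetilde{g} \leq (2 - \epsilon) \cdot 4 = 8 - 4\epsilon < 8$; whereas if $x \cap y = \emptyset$ then $g \geq 8$, whence $\widetilde{g} \geq g \geq 8$. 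Thus the predicate $[\widetilde{g} < 8]$ decides set disjointness with the same success probability, so it remains to turn $\mathcal{A}$ into a communication protocol.

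Next I would fix the partition of the nodes between the two players: Alice simulates the nodes in $U \cup V$ and Bob simulates those in $V' \cup U'$, noting that $n = 4k$. The crucial observation is that this partition is \emph{consistent} with the encoding of the inputs. Indeed, Alice's edges $(u_i, v_j)$ lie entirely within $U \cup V$, so Alice knows every incident edge---and hence the initial state---of each node she simulates; symmetrically for Bob and his edges $(v_j', u_i')$. The only edges crossing the partition, namely the fixed edges $\{(v_i, v_i')\}$ and $\{(u_i', u_i)\}$, are input-independent and therefore known to both players from the start.

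The core of the argument is the round-by-round simulation. Since in $\bcc$ every node broadcasts a single $\mathcal{O}(\log n)$-bit message to all nodes in each round, I would have the two players simulate round $r$ as follows: each player locally computes the $2k$ messages broadcast by its own nodes---each such message is a function only of that node's state and the public history of all prior broadcasts, both of which the player maintains---and then sends these $2k$ messages to the other player. This exchange costs $\mathcal{O}(k \log n) = \mathcal{O}(n \log n)$ bits per round, after which both players have learned all $n$ broadcasts of round $r$ and can update the state of every node. After $T$ rounds both players can read off $\widetilde{g}$ and output the disjointness answer; the total communication is $\mathcal{O}(T \cdot n \log n)$ bits.

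Finally I would invoke the communication lower bound: by \Cref{theorem:disj}, any protocol solving $\disj_{k^2}$ with probability $2/3$ requires $\Omega(k^2) = \Omega(n^2)$ bits. Combining with the simulation cost yields $T \cdot n \log n = \Omega(n^2)$, i.e. $T = \Omega(n/\log n) = \widetilde{\Omega}(n)$, as claimed. I expect the main obstacle to be making the simulation step fully rigorous---specifically, arguing that each player can always compute its own nodes' broadcasts, which hinges on the partition being consistent with the input encoding (so that no player ever needs a piece of the other's input) together with the broadcast model rendering the entire message history public. Randomness is handled by letting the players use shared public coins to supply $\mathcal{A}$'s random bits, which preserves the $2/3$ success probability.
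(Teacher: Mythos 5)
Your proof is correct and follows essentially the same route as the paper: a reduction from $\disj_{k^2}$ via the gap in \Cref{claim:directed_girth-bcc}, a round-by-round two-party simulation of the $\bcc$ algorithm with Alice holding $U \cup V$ and Bob holding $V' \cup U'$, and the $\Omega(k^2)$ bound of \Cref{theorem:disj} divided by the $\mathcal{O}(n \log n)$ bits exchanged per simulated round. The paper compresses the simulation into a single sentence, whereas you spell out why the partition is consistent with the input encoding and why broadcasts being public make each player's local computation well-defined; this is the same argument, just made explicit.
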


\begin{proof}
Consider an algorithm $\mathcal{A}$ in the $\bcc$ which determines a $(2-\epsilon)$-approximation of the directed girth with probability $2/3$. \Cref{claim:directed_girth-bcc} implies that Alice and Bob can employ algorithm $\mathcal{A}$ on graph $G_k^{x,y}$ in order to solve the set disjointness problem with a substantial probability. Indeed, Alice and Bob can directly simulate the communication protocol dictated by $\mathcal{A}$. As a result, we know from \Cref{theorem:disj} that $\Omega(k^2)$ bits have to be exchanged between the nodes of $U \cup V$ and $U' \cup V'$ during the execution of algorithm $\mathcal{A}$. However, this requires $\Omega(k^2/(k\log n)) = \widetilde{\Omega}(n)$ rounds given that every node can only transmit $\mathcal{O}(\log n)$ (distinct) bits per round in $\bcc$. 
\end{proof}

Next, we will show how to modify our construction in order to obtain an $\widetilde{\Omega}(n^{1/3})$ round-complexity lower bound in $\hybrid$. The main idea is to introduce some "gap" between the nodes in the graph which correspond to different players. To be precise, consider some parameter $\ell \in \mathbb{N}$; instead of connecting the nodes from $V$ to $V'$ and from $U'$ to $U$ directly via edges, we will introduce paths of length $\ell$ (edges). Notice that for $\ell = 1$ we recover our previous construction. Moreover, the encoding of the players' inputs $x,y \in \{0,1\}^{k^2}$ will remain exactly the same. We let $G_{k, \ell}^{x, y}$ be the induced graph. Similarly to \Cref{claim:directed_girth-bcc}, we can establish the following:

\begin{claim}
    \label{claim:directed_girth-hybrid}
The girth $g$ of the directed graph $G_{k, \ell}^{x, y}$ is $2 + 2\ell$ if $x \cap y \neq \emptyset$; otherwise, $g \geq 4 + 4\ell$.
\end{claim}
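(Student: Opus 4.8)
The plan is to reduce everything to the same layered structure underlying \Cref{claim:directed_girth-bcc}, tracking how the unit bridges are replaced by paths of $\ell$ edges. The graph $G_{k,\ell}^{x,y}$ is four-partite with layers $U, V, V', U'$; its only edges are the input edges from $U$ to $V$ and from $V'$ to $U'$ (each contributing a single edge), together with the vertex-disjoint directed paths of length $\ell$ joining $v_i$ to $v_i'$ and $u_i'$ to $u_i$. Since every edge advances cyclically $U \to V \to V' \to U' \to U$, and each of the two bridge families is a collection of internally-disjoint simple paths that can only be entered at their designated endpoints, any directed closed walk must traverse the four layers a whole number of times. I would formalize this as a structural lemma: every directed cycle of $G_{k,\ell}^{x,y}$ has length exactly $m(2 + 2\ell)$ for some integer $m \geq 1$ (for $\ell = 1$ this recovers the ``$4 \mid g$'' observation in \Cref{claim:directed_girth-bcc}). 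The one subtlety to check is that the degree-two internal vertices of each bridge force a full end-to-end traversal, so that no cycle can ``shortcut'' through a bridge.

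Granted this lemma, the non-disjoint case is immediate. If $x_{i, j} = y_{i, j} = 1$ for some $i, j \in [k]^*$, then $(u_i, v_j) \in E$ and $(v_j', u_i') \in E$, so concatenating the edge $u_i \to v_j$, the length-$\ell$ path from $v_j$ to $v_j'$, the edge $v_j' \to u_i'$, and the length-$\ell$ path from $u_i'$ to $u_i$ yields a cycle of length $1 + \ell + 1 + \ell = 2 + 2\ell$. Since by the lemma no cycle can be shorter, the girth equals $2 + 2\ell$.

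For the disjoint case, I would rule out a single traversal ($m = 1$). Tracing any such cycle $u_i \to v_j \to \dots \to v_j' \to u_{i'}' \to \dots \to u_{i'}$, closing it back to $u_i$ forces $i' = i$, while the two input edges it uses force $x_{i, j} = 1$ and $y_{i, j} = 1$; this contradicts $x \cap y = \emptyset$. Hence every cycle uses $m \geq 2$ traversals and has length at least $2(2 + 2\ell) = 4 + 4\ell$ (and $g = +\infty$ when no cycle exists at all), which is precisely the asserted bound.

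The step I expect to be the main obstacle is the structural lemma, namely pinning down that the bridge paths are genuinely vertex-disjoint and can only be traversed end-to-end, so that cycle lengths are exactly the multiples of $2 + 2\ell$ rather than merely bounded below by it. Once this layered periodicity is established, both directions follow from the same index-matching argument as in \Cref{claim:directed_girth-bcc}.
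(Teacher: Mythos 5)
Your proof is correct and takes essentially the same route as the paper: the paper leaves this claim's proof implicit, citing the argument of \Cref{claim:directed_girth-bcc}, which is exactly your structure---every directed cycle traverses the four layers in full laps (the ``$4 \mid g$'' observation there becomes your ``length $= m(2+2\ell)$'' lemma here), an explicit $(2+2\ell)$-cycle exists when $x \cap y \neq \emptyset$, and a single-lap cycle in the disjoint case is ruled out by the index-matching contradiction. Your extra care about the internal degree-two vertices of the bridge paths forcing end-to-end traversal is the right (and routine) verification of that generalization.
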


\begin{theorem} 
    \label{theorem:directed_girth-hybrid}
    For any $\epsilon \in (0,1]$, determining a $(2-\epsilon)$-approximation of the directed girth with probability $2/3$ requires $\widetilde{\Omega}(n^{1/3})$ rounds of $\hybrid$.
\end{theorem}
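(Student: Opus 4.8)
The plan is to mirror the $\bcc$ argument of \Cref{theorem:directed_girth-bcc}, but to replace the crude communication-cut bound (which is useless once the global network can move $\widetilde{O}(n)$ bits per round) by a simulation that exploits the length-$\ell$ paths. First I would record that a $(2-\epsilon)$-approximation distinguishes the two cases of \Cref{claim:directed_girth-hybrid}: since $(2-\epsilon)(2+2\ell) < 4+4\ell$, an algorithm returning a value $\widetilde{g} \le (2-\epsilon) g$ reports a number below the threshold $4+4\ell$ exactly when $g = 2+2\ell$, i.e. precisely when $x \cap y \neq \emptyset$. Hence any such algorithm run on $G_{k,\ell}^{x,y}$ solves $\disj_{k^2}$. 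I would also record that the instance has $n = \Theta(k\ell)$ nodes, since there are $4k$ ``core'' nodes and $2k(\ell-1) = \Theta(k\ell)$ internal path nodes.

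Next I would set up the two-player simulation following Kuhn and Schneider. Alice owns $U \cup V$ and Bob owns $V' \cup U'$, so that Alice's input $x$ determines exactly the incident edges of her core (and symmetrically for Bob), and the two cores are joined only by the length-$\ell$ paths, hence lie at local distance $\Theta(\ell)$ from one another. The crucial consequence of the long paths is that whenever $\ell > T$ the entire $T$-hop local neighborhood of any core node stays on that player's side; in particular the initial state (IDs and incident edges) of every node within local distance $T$ of $U \cup V$ is a function of $x$ and the fixed topology alone, so Alice can carry out the $\local$ part of the simulation without knowing $y$. Consequently the only cross-information the players must exchange in order to simulate a round is the set of global ($\ncc$) messages that pass between the two sides, amounting to $\widetilde{O}(n)$ bits per round. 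Simulating $T$ rounds thus yields a communication protocol for $\disj_{k^2}$ that uses $\widetilde{O}(nT)$ bits, and \Cref{theorem:disj} forces $\widetilde{O}(nT) \ge \Omega(k^2)$, i.e. $T = \widetilde{\Omega}(k^2/n)$, a bound valid as long as $T < \ell$.

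Finally I would balance the parameters. The argument gives $T = \widetilde{\Omega}(\min\{\ell, k^2/n\})$ subject to $k\ell = \Theta(n)$, which is optimized by taking $\ell = \Theta(n^{1/3})$ and $k = \Theta(n^{2/3})$, so that $\ell = \Theta(k^2/n) = \Theta(n^{1/3})$ and the bound becomes $T = \widetilde{\Omega}(n^{1/3})$, as claimed.

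I expect the main obstacle to be the rigorous justification that the local network contributes nothing beyond the $\widetilde{O}(n)$-bit-per-round global exchange, i.e. formalizing the halo/causal-cone simulation so that (a) each player computes all of its local messages internally because its distance-$T$ dependency region is input-independent on the other side, and (b) the round-by-round interactive exchange of crossing global messages is genuinely bounded by $\widetilde{O}(n)$ bits with no hidden dependence on the unbounded $\local$ bandwidth. This is exactly where I would invoke the Kuhn--Schneider simulation machinery rather than re-derive it; the remaining bookkeeping (the girth gap and the parameter balancing) is routine.
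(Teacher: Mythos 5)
Your proposal is correct and follows essentially the same route as the paper: the same threshold argument via \Cref{claim:directed_girth-hybrid}, the same two-player simulation from Kuhn and Schneider \cite[Lemma 7.3]{10.1145/3382734.3405719} in which only the global-network messages (at most $n\log^2 n$ bits per round) must be exchanged while $\ell$ caps the number of simulatable rounds, and the same balancing $k = \Theta(n^{2/3})$, $\ell = \Theta(n^{1/3})$ under $k\ell = \Theta(n)$. The causal-cone justification you flag as the main obstacle is precisely what that lemma supplies, so nothing further is needed.
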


\begin{proof}
Consider an algorithm $\mathcal{A}$ in $\hybrid$ which determines a $(2-\epsilon)$ approximation of the directed girth with probability $2/3$. We know from \cite[Lemma 7.3]{10.1145/3382734.3405719} that Alice and Bob can together simulate $\lfloor \ell/2 \rfloor$ rounds of algorithm $\mathcal{A}$ on the graph $G_{k, \ell}^{x,y}$, while exchanging information only about messages from the global network. Thus, both Alice and Bob have determined a $(2-\epsilon)$-approximation of the girth of $G_{k, \ell}^{x,y}$, and \Cref{claim:directed_girth-hybrid} in turn implies that they have solved the set disjointness problem with probability $2/3$. As a result, Alice and Bob have exchanged $\Omega(k^2)$ bits (\Cref{theorem:disj}) during the simulation, implying that $\Omega(k^2)$ bits have been transmitted via the global network. This would require $\Omega(k^2/(n \log^2 n))$ rounds given that every node can only disseminate $\log^2 n$ bits per round via the global network. Overall, we have shown that algorithm $\mathcal{A}$ requires $\Omega(\min \{ \ell, k^2/(n \log^2 n)\})$; given that $k \times \ell = \Theta(n)$ this quantity is maximized for $k = \Theta(n^{2/3})$ and $\ell = \Theta(n^{1/3})$, concluding the proof.
\end{proof}

It should also be noted that a similar construction yields an $\widetilde{\Omega}(n^{1/3})$ round-complexity lower bound for determining the weight of the minimum-weight cycle for undirected graphs. In \Cref{appendix:detecting-counting} we provide very fast and simple algorithms for detecting and counting subgraphs, which are central problems in the realm of distributed computing (e.g., see \cite{DBLP:conf/wdag/DolevLP12}). In contrast, both communication modes comprising $\hybrid$ fail to provide efficient distributed algorithms. Indeed, in $\local$ we trivially require $\Omega(n)$ rounds when the (hop) diameter $D = \Theta(n)$; more interestingly, the same lower bound---up to polylogarithmic factors---applies for $\ncc$ as well. To be more precise, we show via a reduction from the set disjointness problem that counting $r$-cycles with $r =\mathcal{O}(1)$ requires $\widetilde{\Omega}(n)$ rounds even in the \emph{broadcast} variant of the congested clique ($\bcc$), which is substantially more powerful than $\ncc$ (up to a logarithmic factor). Similar limitations have been shown for \emph{detecting} certain classes of subgraphs in $\bcc$ \cite{DBLP:conf/podc/DruckerKO13}, but with some subtleties; see our discussion in \Cref{appendix:detecting-counting}.

\section{Simulating \texorpdfstring{$\congest$}{} Algorithms}
\label{section:simulation}

The approach developed in the previous section is primarily meaningful for problems with a very demanding output requirement; for example, APSP, or cut-related problems for which nodes have to learn the exact composition of the cut. In contrast, in this section we will show that, for the minimum cut problem, we can obtain substantially faster algorithms when the nodes have to simply learn their "side" on the cut, which constitutes the usual output requirement in distributed algorithms. This result (\Cref{corollary:low_congestion-min_cut}) will be established through a connection with the concept of \emph{low-congestion shortcuts}, which also implies other important results as well; e.g. for the SSSP problem (\Cref{corollary:low_congestion-shortests_paths}). We also present another simulation argument, leading to an accelerated algorithm for approximating the diameter. It should be stress that for this section we model the local network via the weaker $\congest$ model.

\subsection{Low-Congestion Shortcuts}

Consider a graph $G = (V, E)$ under the $\congest$ model, and a partition of $V$ into $k$ parts $P_1, \dots, P_k$ such that the induced graph $G[P_i]$ is connected. A recurring scenario in distributed algorithms consists of having to perform simultaneous aggregations in each part; this will be referred to as the \emph{part-wise aggregation} problem. For example, an instance of this problem corresponds to determining the minimum-weight outgoing edge in the context of Boruvka's celebrated algorithm. A very insightful observation by Ghaffari and Haeupler \cite{DBLP:conf/soda/GhaffariH16} was to parameterize the performance of algorithms based on the complexity of the part-wise aggregation problem. For instance, if it admits a solution in $Q$ rounds, under any collection of parts, we can compute an MST in $\mathcal{O}(Q \log n)$ rounds via Boruvka's algorithm. Now although in general graphs $Q = \mathcal{O}(\sqrt{n} + D)$ rounds, with the bound being existential tight for certain topologies, a key insight of Ghaffari and Haeupler \cite{DBLP:conf/soda/GhaffariH16} is that special classes of graphs allow for accelerated algorithms via \emph{shortcuts}; most notably, for planar graphs they showed that the part-wise aggregation problem can be solved in $\widetilde{\mathcal{O}}(D)$ rounds of $\congest$, bypassing the notorious $\Omega(\sqrt{n})$ rounds for "global" problems under general graphs. 

In the hybrid model this connection is particularly useful since the $\ncc$ model enables very fast algorithms for solving the part-wise aggregation problem:

\begin{lemma}[\cite{10.1145/3323165.3323195}]
    \label{lemma:part-wise-ncc}
    The part-wise aggregation problem admits a solution with high probability in $\mathcal{O}(\log n)$ rounds in $\ncc$.
\end{lemma}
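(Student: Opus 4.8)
\begin{sproof}
The plan is to reduce the part-wise aggregation problem to running the single-set aggregate-and-broadcast primitive of \Cref{lemma:AB} simultaneously over all $k$ parts, and to argue that the pairwise disjointness of $P_1, \dots, P_k$ keeps the per-node load within the node-capacity of $\ncc$ throughout. To begin, I would use the fact that any invocation of part-wise aggregation (e.g.\ within Boruvka's algorithm, where each part is a current component) supplies every node with a canonical identifier of its part, say the smallest ID among the part's members. This turns the partition into a grouping of $V$ by part-label in which distinct groups share no node, so the connectivity of each $G[P_i]$ plays no role once the labels are known and the all-to-all $\ncc$ channel is available.

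The heart of the argument is constructing, for each part $P_i$, a balanced binary aggregation tree of depth $\mathcal{O}(\log |P_i|) = \mathcal{O}(\log n)$ whose edges respect the node-capacity. I would assign to each member of $P_i$ a contiguous local rank in $\{0, 1, \dots, |P_i|-1\}$---its position when the part is sorted by ID---and declare the node of rank $r$ to be the child of the node of rank $\lfloor (r-1)/2 \rfloor$ within the same part. Since each node then has at most two children and a single parent, it sends and receives only $\mathcal{O}(1)$ messages per round along its tree edges; and because the parts are disjoint, every node lies in exactly one such tree, so running all $k$ trees at once still incurs $\mathcal{O}(1)$ load per node per round. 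Once the trees are in place, the aggregate value propagates up to each part's root and the result is broadcast back down, each phase costing $\mathcal{O}(\log n)$ rounds exactly as in \Cref{lemma:AB}.

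The step I expect to be the main obstacle is the tree construction itself: computing the within-part ranks and, crucially, letting the node of rank $r$ learn the \emph{identifier} of the node holding rank $\lfloor (r-1)/2 \rfloor$ in its part, so that it can address its parent over a global edge. This is a grouped prefix-sum together with a rank-to-ID lookup. I would obtain the ranks by a labelled scan carried out on a single fixed balanced binary tree over all of $[n]$ (node $j$ attached to node $\lfloor j/2 \rfloor$), which has depth $\mathcal{O}(\log n)$ and whose $\mathcal{O}(1)$ neighbours every node knows by arithmetic; such a scan yields, for each node, the number of same-part nodes of smaller ID---that is, its local rank---in $\mathcal{O}(\log n)$ rounds. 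The rank-to-ID lookups, of which there are $\mathcal{O}(n)$ in total with $\mathcal{O}(1)$ originating and terminating per node, are then resolved by the randomized routing and load-balancing machinery of $\ncc$; this is precisely the source of the ``with high probability'' qualifier (absent from the deterministic single-set bound of \Cref{lemma:AB}), since random relaying is what spreads the simultaneous lookups across the network without exceeding any node's capacity. Combining the $\mathcal{O}(\log n)$-round construction with the $\mathcal{O}(\log n)$-round aggregation and broadcast yields the claimed bound.
\end{sproof}
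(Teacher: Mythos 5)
You should first note that the paper contains no proof of \Cref{lemma:part-wise-ncc} at all: it is imported verbatim from the node-capacitated clique paper \cite{10.1145/3323165.3323195}, so your proposal has to be judged on its own merits and against that source. Your outer architecture (one balanced binary aggregation tree per part, constant tree-degree, disjointness of parts giving $\mathcal{O}(1)$ load per node, then an up-sweep and down-sweep of depth $\mathcal{O}(\log n)$) is fine \emph{once the trees exist}. The genuine gap is precisely the step you flag as the main obstacle: the claim that within-part ranks can be obtained by a ``labelled scan'' on a single fixed binary tree over $[n]$ in $\mathcal{O}(\log n)$ rounds. A grouped prefix sum does not reduce to a tree scan when the groups are arbitrary subsets of the ID space: for an internal node of the global tree to pass up enough information, it must forward one counter per distinct part represented in its subtree, and a subtree of size $s$ can meet $\Theta(s)$ distinct parts (think of $\Theta(n)$ constant-size parts interleaved arbitrarily among the IDs). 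Near the root this means $\Theta(n)$ counters funneled through a single node whose capacity is $\mathcal{O}(\log n)$ messages per round, i.e. $\widetilde{\Omega}(n)$ rounds rather than $\mathcal{O}(\log n)$; segmented scans are cheap only when each part occupies a contiguous block of the scan order, which is exactly what you cannot assume here. A secondary, fixable defect is that the rank-to-ID lookup is circular as stated: to route a query to ``the holder of rank $\lfloor (r-1)/2 \rfloor$ in my part'' you must already know that holder's ID. One would need a rendezvous step (the rank-$r'$ holder of part $P$ registers at node $h(P, r')$ for a globally known random hash $h$, and its children query that node), and that is where the actual load-balancing argument would have to be carried out.

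For comparison, the proof in \cite{10.1145/3323165.3323195} computes no ranks and fixes no tree shapes. It routes partial aggregates of every part toward a target determined by the part's label through a low-degree overlay (a simulated butterfly), \emph{combining} messages of the same part whenever they meet at an intermediate node; random intermediate destinations in the style of Valiant's trick plus Chernoff bounds give the $\mathcal{O}(\log n)$ w.h.p.\ guarantee. This is the idea your proposal is missing: the randomness must be used to build the aggregation structure itself, so that no node ever handles more than polylogarithmically many distinct parts' data, rather than merely to service lookups for a deterministically defined structure whose very construction already poses a congestion problem as hard as the one being solved.
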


As a result, we can directly derive a near-optimal algorithm for the minimum spanning tree problem through an implementation based on Boruvka's algorithm:

\begin{corollary}
    There exists a distributed algorithm which computes with high probability an MST in $\mathcal{O}(\log^2 n)$ rounds of $\congest + \ncc$.
\end{corollary}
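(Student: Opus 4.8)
The plan is to combine Boruvka's classical algorithm with the fast part-wise aggregation primitive of \Cref{lemma:part-wise-ncc}. Recall that Boruvka's algorithm maintains a forest of components (or \emph{fragments}), initially consisting of singleton nodes, and proceeds in \emph{phases}. In each phase, every current component determines its minimum-weight outgoing edge, and these selected edges are added to the MST, merging components accordingly. The crucial structural fact is that in each phase the number of components decreases by at least a factor of two, since every surviving component must merge with at least one other component via its chosen outgoing edge. Consequently, after $\mathcal{O}(\log n)$ phases the algorithm terminates with a single component, namely the MST.

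Next I would observe that each phase is dominated by exactly one operation: having every component aggregate, over all of its incident edges, the minimum-weight outgoing edge (where an edge is \emph{outgoing} if its two endpoints lie in different components). This is precisely an instance of the part-wise aggregation problem with the distributive aggregate function $\textsc{Min}$, evaluated on the partition of $V$ induced by the current components. Here I would note that the induced subgraph $G[P_i]$ on each component is indeed connected, since the components are built up from MST fragments, so \Cref{lemma:part-wise-ncc} applies. That lemma yields a solution in $\mathcal{O}(\log n)$ rounds in $\ncc$ with high probability. The remaining bookkeeping within a phase---broadcasting the identity of the chosen edge within each component, relabelling components after the merges, and recognizing which edges have become internal---can likewise be handled by a constant number of part-wise aggregations and broadcasts, each costing $\mathcal{O}(\log n)$ rounds. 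Thus a single phase costs $\mathcal{O}(\log n)$ rounds of $\congest + \ncc$ with high probability.

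Multiplying the $\mathcal{O}(\log n)$ phases by the $\mathcal{O}(\log n)$ per-phase cost gives the claimed $\mathcal{O}(\log^2 n)$ round complexity. For the high-probability guarantee across all invocations, I would take a union bound: since \Cref{lemma:part-wise-ncc} fails only with probability $1/\poly(n)$ per invocation and there are only $\mathcal{O}(\log^2 n) = \poly(n)$ invocations in total, boosting the per-invocation success probability by a constant factor in the exponent (absorbed into the polylogarithmic overhead) ensures that \emph{all} aggregations succeed with high probability.

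The main obstacle I anticipate is not in the phase count but in verifying that each merging step keeps the components connected in $G$ and that the representation of the (dynamically changing) partition can be maintained cheaply, so that \Cref{lemma:part-wise-ncc} remains directly applicable at every phase. In particular one must ensure there is no ambiguity when two components simultaneously select edges to each other (the standard tie-breaking via edge IDs or lexicographic comparison on endpoint IDs resolves this and prevents cycles), and that after a phase the nodes agree on a consistent new component identifier---for instance, the minimum ID in the merged component, which can itself be computed by one more $\textsc{Min}$ aggregation. These are routine in the centralized analysis of Boruvka but require care in the distributed setting; once they are handled, the round complexity follows immediately from the product of the two logarithmic factors.
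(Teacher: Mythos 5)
Your high-level route---Boruvka's algorithm driven by the part-wise aggregation primitive of \Cref{lemma:part-wise-ncc}---is the same as the paper's, but there is a genuine gap precisely at the step you dismiss as routine bookkeeping: relabelling the components after a merge phase. In standard Boruvka, where every component's minimum-weight outgoing edge is adopted simultaneously, the selected edges form (at the component level) a pseudo-forest whose trees can be chains of length $\Theta(k)$, where $k$ is the current number of components. You cannot invoke part-wise aggregation on the \emph{new} (merged) parts to compute their labels, because that primitive presupposes that every node already knows the identifier of the part it belongs to---and after a chain-shaped merge this common identifier is exactly what is missing. Establishing it requires contracting chains of unbounded length, e.g.\ by pointer jumping over the component leaders, which costs another $\Theta(\log n)$ stages per phase (each stage itself needing $\mathcal{O}(\log n)$ rounds of $\ncc$, with additional congestion issues when many leaders point to the same node), pushing the total complexity toward $\mathcal{O}(\log^3 n)$ rather than the claimed $\mathcal{O}(\log^2 n)$. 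Your tie-breaking remark only rules out ambiguity between two components that select each other; it does nothing about long chains.

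The paper avoids this with Heads/Tails clustering (borrowed from \cite{DBLP:conf/soda/GhaffariH16,DBLP:conf/podc/GhaffariKS17}): every component's leader flips a coin, and the minimum-weight outgoing edge of a component is used for a merge only if its own coin shows Heads and the target component's coin shows Tails. All merges are then star-shaped, so relabelling amounts to a single multicast of the new leader's ID inside each Heads component, implementable in $\mathcal{O}(\log n)$ rounds of $\ncc$; the only invariant maintained is that each node knows its leader's ID, and one round of $\congest$ lets a node decide which incident edges are outgoing. The price is that a component now merges only with constant probability per phase, so the number of phases is $\mathcal{O}(\log n)$ with high probability rather than deterministically, and the $\mathcal{O}(\log^2 n)$ bound follows. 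To repair your proof you would need to incorporate this (or an equivalent) star-merge device; the rest of your argument, including the union bound over invocations, is fine.
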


We refer to \Cref{appendix:MST} for the detailed implementation. It should be noted that a deterministic $\mathcal{O}(\log^2 n)$ algorithm in $\congest + \ncc$ for the MST problem was developed in \cite{feldmann2020fast} with very different techniques. More importantly, Ghaffari and Haeupler \cite{DBLP:conf/soda/GhaffariH16} managed to establish the following:

\begin{theorem}[\cite{DBLP:conf/soda/GhaffariH16}]
    If we can solve the part-wise aggregation problem in $Q$ rounds of $\congest$, there exists an $\widetilde{\mathcal{O}}(Q \poly(1/\epsilon))$ distributed algorithm for computing with high probability a $(1 + \epsilon)$-approximation of the minimum cut, for any sufficiently small $\epsilon > 0$.
\end{theorem}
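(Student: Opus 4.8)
The plan is to show that Karger's $(1+\epsilon)$-approximate minimum-cut scheme, which rests on \emph{tree packings}, can be implemented entirely through $\widetilde{\mathcal{O}}(\poly(1/\epsilon))$ invocations of the part-wise aggregation primitive, each costing $Q$ rounds. The structural starting point is the Nash-Williams / Tutte theorem: a graph whose minimum cut has weight $\lambda$ admits a fractional spanning-tree packing of value $\Omega(\lambda)$. Karger's key observation is that if one samples $\mathcal{O}(\log n)$ spanning trees from a near-maximum packing, then with high probability the minimum cut \emph{$2$-respects} at least one sampled tree $T$---that is, at most two edges of $T$ cross it. It therefore suffices to (i) construct such a packing, (ii) for each sampled tree compute the cheapest cut that $1$- or $2$-respects it, and (iii) return the overall minimum.

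First I would build the packing. A near-maximum packing can be produced by a greedy / multiplicative-weights routine that constructs $\widetilde{\mathcal{O}}(\poly(1/\epsilon))$ minimum-weight spanning trees with respect to iteratively updated edge loads; this is where the $\poly(1/\epsilon)$ dependence and the final $(1+\epsilon)$ slack originate. Crucially, each spanning-tree (or MST) computation decomposes via Boruvka's algorithm into $\mathcal{O}(\log n)$ rounds of finding the minimum-weight outgoing edges of the current components, each of which is exactly a part-wise aggregation. Hence this phase runs in $\widetilde{\mathcal{O}}(Q \poly(1/\epsilon))$ rounds.

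Next, for a fixed sampled tree $T$ I would compute its minimum $2$-respecting cut. The $1$-respecting case is easy: after rooting $T$, deleting a single tree edge $e$ induces a cut whose weight equals the total weight of non-tree edges leaving the subtree below $e$, and all such subtree sums can be gathered with $\mathcal{O}(\log n)$ aggregations along $T$ (e.g.\ by tree contraction), each a part-wise aggregation. The genuinely hard part---and the main obstacle---is the $2$-respecting case, since there are quadratically many candidate pairs $(e_1,e_2)$ and one cannot afford to inspect them individually. I would resolve this following Ghaffari--Haeupler: decompose $T$ by a recursive heavy-path (or centroid) decomposition into $\mathcal{O}(\log n)$ layers, so that the contribution of each admissible pair factors through prefix/subtree quantities that, layer by layer, are computed by a constant number of part-wise and tree-path aggregations. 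Multiplying the $\mathcal{O}(\log n)$ layers by the $\mathcal{O}(\log n)$ sampled trees keeps the cost at $\widetilde{\mathcal{O}}(Q)$ per packing, hence $\widetilde{\mathcal{O}}(Q \poly(1/\epsilon))$ overall.

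Finally I would combine everything: take the minimum of all $1$- and $2$-respecting candidate values over all sampled trees, and disseminate it with a single aggregate-and-broadcast call. Correctness holds because, on the high-probability event that the true minimum cut $2$-respects some sampled tree, the value computed for that tree is $\lambda$; the approximate packing accounts for the $(1+\epsilon)$ factor. Since every subroutine was reduced to $\widetilde{\mathcal{O}}(\poly(1/\epsilon))$ part-wise aggregations at $Q$ rounds apiece, the total round complexity is $\widetilde{\mathcal{O}}(Q \poly(1/\epsilon))$, as claimed.
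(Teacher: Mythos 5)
Your high-level framework is the right one: the paper itself does not reprove this theorem but imports it from \cite{DBLP:conf/soda/GhaffariH16}, whose planar Min-Cut algorithm is exactly Karger's tree-packing scheme (in the distributed form of Ghaffari--Kuhn and Nanongkai--Su) implemented so that every interaction with the network is a part-wise aggregation; the paper's only added observation is that the planar shortcut bound $\widetilde{\mathcal{O}}(D)$ can be replaced by a generic $Q$. Your packing phase (greedy packing via repeated MSTs, with Boruvka decomposing each MST into $\mathcal{O}(\log n)$ part-wise aggregations) and the final aggregate-and-broadcast are in that spirit. However, there are two genuine gaps. The first concerns the tree count: a greedy or multiplicative-weights packing reaches a $(1-\epsilon)$-fraction of the maximum packing only after a number of iterations that scales \emph{linearly with the minimum-cut value} $\lambda$ (roughly $\lambda\log n/\epsilon^2$ trees), and $\lambda$ can be $\poly(n)$ in a weighted graph. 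The standard fix---a step you never mention---is Karger's sampled skeleton: one first subsamples edges so that the graph being packed has minimum cut $\mathcal{O}(\epsilon^{-2}\log n)$, and only then is the number of MST computations $\widetilde{\mathcal{O}}(\poly(1/\epsilon))$ as you claim.

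The second and more serious gap is the $2$-respecting step, which you yourself call the hard part and then dispatch with a heavy-path/centroid decomposition ``following Ghaffari--Haeupler.'' No such technique appears in \cite{DBLP:conf/soda/GhaffariH16}; heavy-path-based minimum-$2$-respecting-cut computations belong to much later work on \emph{exact} Min-Cut (Mukhopadhyay--Nanongkai and, distributedly, Dory et al.~\cite{DBLP:conf/stoc/DoryEMN21}), and there the pair interactions---the total weight of edges joining the subtrees below $e_1$ and $e_2$, over quadratically many candidate pairs---emphatically do not factor into ``a constant number of part-wise and tree-path aggregations''; taming them is the main content of those papers and uses machinery not known to reduce to part-wise aggregation. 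Indeed, if your step worked as stated, then together with the skeleton fix above you would obtain the \emph{exact} minimum cut in $\widetilde{\mathcal{O}}(Q\poly(1/\epsilon))$ rounds: in Karger's framework an exact $2$-respecting computation on $G$ yields the exact minimum cut, since the packing quality only controls the fraction of trees that the minimum cut $2$-respects. That would give polylogarithmic-round exact Min-Cut in $\congest+\ncc$, whereas this paper obtains exact Min-Cut only at cost $\widetilde{\mathcal{O}}(\sqrt{n})$ by simulating \cite{DBLP:conf/stoc/DoryEMN21} and explicitly notes the ``substantial gap between exact and approximate Min-Cut.'' This also shows your error accounting is off: the $(1+\epsilon)$ loss in \cite{DBLP:conf/soda/GhaffariH16} does not come from the approximate packing, but precisely from the fact that the $2$-respecting step is \emph{not} solved exactly there. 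To repair the argument, replace your step (iii) by the approximate treatment of $2$-respecting cuts actually used in \cite{DBLP:conf/soda/GhaffariH16} (following Nanongkai--Su) and verify that \emph{that} treatment is expressible purely through part-wise aggregations.
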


We should remark that in \cite{DBLP:conf/soda/GhaffariH16} the authors establish this result only for planar graphs, but their argument can be directly extended in the form of this theorem. As a result, if we use \Cref{lemma:part-wise-ncc} we arrive at the following conclusion:

\begin{corollary}
    \label{corollary:low_congestion-min_cut}
    Consider any $n$-node weighted graph. There exists a $\mathcal{O}(\polylog (n))$-round algorithm in $\congest + \ncc$ for computing with high probability a $(1 + \epsilon)$-approximation of Min-Cut, for any sufficiently small constant $\epsilon > 0$.
\end{corollary}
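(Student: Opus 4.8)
The plan is to combine the two building blocks stated immediately above---namely the reduction of $(1+\epsilon)$-approximate Min-Cut to the part-wise aggregation problem (the theorem of Ghaffari and Haeupler), and the fast $\ncc$ solution to part-wise aggregation (\Cref{lemma:part-wise-ncc})---treating the former as a black box that invokes the latter as a subroutine. Concretely, the Ghaffari-Haeupler algorithm for approximate Min-Cut never accesses the local network beyond what is needed for the part-wise aggregation oracle; it is phrased entirely in terms of a bounded number of calls to that oracle. Thus the only model-dependent quantity in its round complexity is the parameter $Q$, the cost of a single part-wise aggregation.

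First, I would instantiate the oracle via \Cref{lemma:part-wise-ncc}, which solves part-wise aggregation with high probability in $Q = \mathcal{O}(\log n)$ rounds of $\ncc$. Since we are working in $\congest + \ncc$, this subroutine is available at no asymptotic overhead, and it can be interleaved with the $\congest$ steps of the outer algorithm. Substituting $Q = \mathcal{O}(\log n)$ into the $\widetilde{\mathcal{O}}(Q \poly(1/\epsilon))$ guarantee of the theorem yields a round complexity of $\widetilde{\mathcal{O}}(\log n \cdot \poly(1/\epsilon))$. For a constant $\epsilon > 0$ the factor $\poly(1/\epsilon)$ is absorbed into the constant, leaving $\widetilde{\mathcal{O}}(\log n) = \mathcal{O}(\polylog(n))$ rounds, as claimed.

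The main points that require care are probabilistic rather than structural. The $\ncc$ part-wise aggregation routine succeeds only with high probability, and the outer algorithm invokes it $\polylog(n)$ times; I would therefore take a union bound over all invocations so that every call succeeds simultaneously with high probability, which is possible because each individual failure probability can be driven below $1/n^c$ for any constant $c$ by the usual amplification. I would also confirm that the $(1+\epsilon)$-approximation guarantee of the Ghaffari-Haeupler framework, originally presented for planar graphs, carries over to general graphs once the oracle cost $Q$ is charged generically---a point already flagged in the remark preceding the corollary. The only genuinely delicate step is verifying that the Ghaffari-Haeupler reduction truly treats part-wise aggregation as a modular primitive, so that replacing its $\congest$ implementation by the $\ncc$ one preserves correctness; this is immediate from the structure of their algorithm, but is the one place where an incorrect black-box abstraction could silently break the argument.
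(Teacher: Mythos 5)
Your proposal is correct and follows exactly the paper's argument: instantiate the part-wise aggregation primitive via \Cref{lemma:part-wise-ncc} with $Q = \mathcal{O}(\log n)$ and plug it into the Ghaffari--Haeupler theorem to obtain the $\mathcal{O}(\polylog(n))$ bound, including the same caveat (flagged in the paper's remark) that the planar-graph result extends to general graphs once phrased in terms of $Q$. The extra points you raise---the union bound over the $\polylog(n)$ oracle invocations and the modularity of the black-box reduction---are sound refinements of what the paper leaves implicit.
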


In terms of \emph{exact} Min-Cut, one can obtain an $\widetilde{\mathcal{O}}(\sqrt{n})$-round algorithm in $\congest + \ncc$ by simulating the recent algorithm due to Dory et al. \cite{DBLP:conf/stoc/DoryEMN21}, which requires $\widetilde{\mathcal{O}}(D + \sqrt{n})$ rounds of $\congest$; an analogous simulation argument is employed in the next subsection, so we omit the proof here. However, this leaves a substantial gap between exact and approximate Min-Cut. Moreover, analogous results can be established for computing approximate shortest paths by virtue of a result by Haeupler and Li \cite{DBLP:conf/wdag/HaeuplerL18}: 

\begin{corollary}
    \label{corollary:low_congestion-shortests_paths}
    Consider any $n$-node weighted graph. There exists with high probability a $\polylog (n)$-approximate algorithm for the single-source shortest paths problem which runs in $\widetilde{O}(n^{\epsilon})$ rounds in $\congest + \ncc$, for any constant $\epsilon > 0$.
\end{corollary}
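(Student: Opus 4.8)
The plan is to mirror the argument behind \Cref{corollary:low_congestion-min_cut}: invoke an existing $\congest$ algorithm whose round complexity is parameterized by the cost $Q$ of the part-wise aggregation problem, and then collapse that parameter using the fast $\ncc$ primitive of \Cref{lemma:part-wise-ncc}. Concretely, I would start from the approximate shortest-paths algorithm of Haeupler and Li \cite{DBLP:conf/wdag/HaeuplerL18}, whose guarantee can be phrased in the shortcut-parameterized form: if the part-wise aggregation problem can be solved in $Q$ rounds, then for every constant $\epsilon > 0$ there is a randomized algorithm that computes, with high probability, a $\polylog(n)$-approximate solution to single-source shortest paths in $\widetilde{\mathcal{O}}(Q \cdot n^{\epsilon})$ rounds of $\congest$.

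The key steps, in order, are as follows. First, I would observe that the only way the Haeupler--Li algorithm accesses the network globally is through (repeated) calls to the part-wise aggregation primitive, together with elementary aggregate-and-broadcast operations (\Cref{lemma:AB}); every remaining operation is standard local $\congest$ work whose cost is already absorbed into the $n^{\epsilon}$ factor. Second, since we operate in $\congest + \ncc$, I would replace each such aggregation with the $\ncc$ routine of \Cref{lemma:part-wise-ncc}, which solves an arbitrary instance of part-wise aggregation in $\mathcal{O}(\log n)$ rounds with high probability; this fixes the parameter to $Q = \mathcal{O}(\log n)$. Third, substituting $Q = \mathcal{O}(\log n)$ into the parameterized bound yields $\widetilde{\mathcal{O}}(\log n \cdot n^{\epsilon}) = \widetilde{\mathcal{O}}(n^{\epsilon})$ rounds, while the approximation ratio stays $\polylog(n)$ for any fixed constant $\epsilon$. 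Finally, a union bound over the $\poly(n)$ invocations of \Cref{lemma:part-wise-ncc} preserves the overall high-probability guarantee.

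The main obstacle I anticipate is purely a matter of careful verification rather than new ideas: one has to confirm that the Haeupler--Li procedure genuinely treats part-wise aggregation as a black box, so that substituting the $\ncc$ implementation is sound, and that no phase of their algorithm needs a heavier global operation than the $\ncc$ model can support in $\widetilde{\mathcal{O}}(1)$ rounds (e.g. routing large volumes of data, as opposed to merely aggregating). In particular, I would check that the local $\congest$ communication performed between consecutive aggregations does not secretly rely on a shortcut of quality worse than $Q$, and that the per-phase work remains within the claimed budget; this is exactly the same bookkeeping that underlies \Cref{corollary:low_congestion-min_cut}, so once the black-box interface is established the remaining details are routine.
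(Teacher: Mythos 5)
Your proposal is correct and takes essentially the same route as the paper, which derives \Cref{corollary:low_congestion-shortests_paths} precisely by invoking the shortcut-parameterized shortest-paths result of Haeupler and Li and instantiating the part-wise aggregation cost with $Q = \mathcal{O}(\log n)$ via \Cref{lemma:part-wise-ncc}, exactly as in the Min-Cut argument of \Cref{corollary:low_congestion-min_cut}. Your additional bookkeeping---checking the black-box interface and taking a union bound over the $\poly(n)$ aggregation calls---is sound and merely makes explicit what the paper leaves implicit.
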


We remark that Haeupler and Li \cite{DBLP:conf/wdag/HaeuplerL18} actually provide a more general result, but we state this special case for the sake of simplicity. Of course, there are other applications as well, as we have certainly not exhausted the literature. Overall, this connection illustrates another very concrete motivation of low-congestion shortcuts.

\subsection{Diameter}

We also provide another notable simulation argument. In particular, the main idea is to augment the local topology with a limited number of "global" edges so that the resulting graph has a small diameter, and at the same time the solution to the underlying problem remains invariant.

\begin{proposition}
    \label{proposition:sim-diameter}
    There exists a distributed algorithm in $\congest + \ncc$ which determines a $3/2$-approximation of the diameter in $\mathcal{O}(\sqrt{n \log n})$ rounds with high probability.
\end{proposition}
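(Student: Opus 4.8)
The plan is to reduce the problem of computing a good diameter approximation to running a known $\congest$ algorithm, but on an augmented graph whose hop-diameter is small, so that the $\congest$ running time no longer suffers from a large $D$. The key conceptual device is to use the $\ncc$ global edges to shrink distances that are used only for \emph{routing/communication}, while leaving the \emph{metric} whose diameter we wish to estimate essentially unchanged (up to the approximation factor we are targeting).

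First I would recall the state of the art $3/2$-approximation for diameter in $\congest$ due to Roditty and Vassilevska W.~\cite{DBLP:conf/stoc/RodittyW13}, which is built from a bounded number of single-source and ``near''-source explorations (a BFS/Dijkstra from one node, plus explorations from a carefully sampled set of sources and from the neighborhoods of the high-eccentricity vertices). The running time of a faithful $\congest$ implementation is governed by the number of hops that distance information must travel, i.e.\ by the hop-diameter $D$ together with the $\sqrt{n}$ term inherent to multiple-source shortest-path computations. My approach would be to observe that each of these explorations is really an instance of (approximate/exact) shortest paths from $\widetilde{\mathcal{O}}(1)$ or $\widetilde{\mathcal{O}}(\sqrt{n})$ sources, and that the machinery of Censor-Hillel et al.~\cite{censorhillel2020distance} allows solving multiple single-source shortest-path problems in parallel in $\widetilde{\mathcal{O}}(n^{1/3})$ rounds of $\hybrid$; the same parallel-SSSP primitive is what drives the claimed round complexity here.

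The main technical step---and the one I expect to be the chief obstacle---is to argue that the \emph{communication} needed to propagate distance labels can be carried out in $\mathcal{O}(\sqrt{n\log n})$ rounds by augmenting the local topology with $\ncc$ edges, \emph{without} corrupting the distances of the original graph $G$ that determine its diameter. Concretely, I would add a sparse set of global (shortcut) edges so that the augmented graph $G^+$ has hop-diameter $\widetilde{\mathcal{O}}(\sqrt{n})$, then run the routing/aggregation steps of the Roditty--Vassilevska~W.\ scheme over $G^+$, being careful that these shortcut edges are used only to move messages, while every distance value that enters the diameter estimate is computed with respect to the weights and edges of $G$ alone. The delicate part is the bookkeeping: one must show (i) that the sampled sources and the eccentricity witnesses can be identified and broadcast using the aggregate-and-broadcast primitive (\Cref{lemma:AB}) and the token-dissemination bound (\Cref{lemma:TD}, \Cref{corollary:bcc}) within the budget, and (ii) that the parallel explorations, when routed over $G^+$, still report the correct $G$-distances so that the Roditty--Vassilevska~W.\ correctness analysis---guaranteeing that the reported estimate $\widehat{D}$ satisfies $\lfloor 2D/3\rfloor \le \widehat{D}\le D$ and hence a $3/2$-approximation---goes through verbatim.

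Finally I would assemble the pieces: the preprocessing (identifying sources, disseminating their IDs, and establishing the shortcut structure) costs $\widetilde{\mathcal{O}}(\sqrt{n})$ rounds via \Cref{corollary:bcc} and the $\ncc$ primitives; the $\widetilde{\mathcal{O}}(\sqrt{n})$ parallel single-source explorations dominate and are handled by the parallel-SSSP machinery of~\cite{censorhillel2020distance}, giving the $\mathcal{O}(\sqrt{n\log n})$ bound; and the correctness follows from~\cite{DBLP:conf/stoc/RodittyW13}. The high-probability qualifier enters through the randomized sampling of sources in the Roditty--Vassilevska~W.\ algorithm and through the randomized primitives (\Cref{lemma:TD}), so the overall guarantee holds with high probability, as claimed.
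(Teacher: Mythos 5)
Your central idea---augmenting the local topology with global edges so that the hop-diameter shrinks, and then simulating a known $\congest$ diameter-approximation algorithm over the augmented graph---is exactly the route the paper takes. However, the execution has genuine gaps. First, there is a model mismatch: you invoke the parallel-SSSP machinery of Censor-Hillel et al.~\cite{censorhillel2020distance} (and \Cref{corollary:bcc}), but those are $\hybrid$ results that rely on the unbounded local bandwidth of $\local$, whereas this proposition is stated for $\congest + \ncc$; moreover, their $\widetilde{\mathcal{O}}(n^{1/3})$ complexity has nothing to do with the claimed $\mathcal{O}(\sqrt{n\log n})$ bound, so your final accounting (``the parallel-SSSP machinery \ldots giving the $\mathcal{O}(\sqrt{n\log n})$ bound'') does not follow. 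The bound actually comes from a different ingredient: the $\congest$ algorithm of Holzer et al.~\cite{DBLP:conf/wdag/HolzerPRW14}, which computes a $3/2$-approximation of the diameter in $\mathcal{O}(\sqrt{n\log n}+D)$ rounds; the sole purpose of the augmentation is to make the $+D$ term negligible. (Roditty--Vassilevska W.~\cite{DBLP:conf/stoc/RodittyW13} is a sequential algorithm; the paper uses it elsewhere, in the $\hybrid$ model, to prove \Cref{proposition:diameter}.)

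Second, your plan to route messages over $G^+$ while computing ``distance values with respect to $G$ alone'' is the step that would fail as stated: the explorations of a $\congest$ algorithm are inherently tied to the communication graph, so decoupling routing from the metric would require re-proving the correctness of the simulated algorithm---you flag this as the chief obstacle but do not resolve it. The paper sidesteps it entirely with two concrete choices. (i) The virtual edges form a balanced binary tree, so the virtual structure has constant maximum degree; this is what lets each node serve all of its virtual edges within the per-round budget of $\mathcal{O}(\log n)$ messages in $\ncc$ (your ``sparse set of shortcut edges'' bounds the total number of edges but not per-node degree, so the simulation could stall at a high-degree node). (ii) Every virtual edge is given weight $3/2 \times nW$, strictly larger than any possible distance in $G$; hence $d_{\widehat{G}}(u,v)=d_G(u,v)$ for every pair, the metric is preserved exactly, and the $\congest$ algorithm can be run on $\widehat{G}$ as a black box, with no bookkeeping whatsoever. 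Since the tree gives $\widehat{G}$ hop-diameter $\mathcal{O}(\log n)$---rather than the $\widetilde{\mathcal{O}}(\sqrt{n})$ you aim for, which would also introduce extra polylogarithmic factors---the total round complexity is $\mathcal{O}(\sqrt{n\log n}+\log n)=\mathcal{O}(\sqrt{n\log n})$, as claimed.
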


\begin{proof}
First of all, we know that there exists a distributed algorithm by Holzer et al. \cite{DBLP:conf/wdag/HolzerPRW14} which computes with high probability a $3/2$-approximation of the diameter in $\mathcal{O}(\sqrt{n \log n} + D)$ rounds of $\congest$. We will show how to simulate this algorithm on a "virtual" graph. Specifically, consider a graph $\widehat{G}$ which derives from $G$ via the following augmentation: We consider an arbitrary balanced binary tree on the $n$ nodes, and every edge in the tree which is not present in the local topology dictated by $G$ is incorporated into $\widehat{G}$ with weight $3/2 \times n W$; every other edge will be included in $\widehat{G}$ with the same weight. By construction, observe that a $3/2$-approximation of the diameter in $\widehat{G}$ also serves as a $3/2$-approximation of the diameter in $G$. Moreover, we can simulate any $\congest$ algorithm in $\widehat{G}$ since the "virtual" tree has constant max-degree, and as such, the communication on top of these edges can be implemented via the $\ncc$ model. As a result, the claim follows since the diameter of $\widehat{G}$ is $\mathcal{O}(\log n)$.
\end{proof}

\section{Distance Computations}
\label{section:radius}

\subsection{Lower Bound for the Radius}

In this subsection we show a lower bound of $\widetilde{\Omega}(n^{1/3})$ rounds for computing the radius---the smallest eccentricity of the graph---in the $\hybrid$ model, even for unweighted graphs. We commence by constructing a suitable "gadget" in the $\bcc$ model, and then we will massage it appropriately to establish a guarantee for $\hybrid$ as well. Our construction is inspired by that in \cite{DBLP:conf/wdag/AbboudCK16} which established a sharp lower bound for sparse networks in $\congest$. First, consider a set of nodes $U \cup V \cup V' \cup U'$, and we let $U = \{u_0, u_1, \dots, u_{k-1}\}, V = \{v_0, v_1, \dots, v_{k-1}\}, V' = \{v_0', v_1', \dots, v_{k-1}'\}$, and $U' = \{u_0', u_1', \dots, u_{k-1}'\}$. We also incorporate edges of the form $\{v_i, v_i'\}$ for all $i \in \{0, 1, \dots, k-1\} = [k]^*$. An important additional ingredient is the \emph{bit-gadget} \cite{DBLP:conf/wdag/AbboudCK16}, which works as follows: every node in $U$ and $U'$ will inherit some edge-connections based on the binary representation of their index; the role of this component will become clear as we proceed with the construction. Formally, consider a (different) set of nodes $F, T, F', T'$, and let $F = \{f_0, f_1, \dots, f_{l-1}\}, T = \{t_0, t_1, \dots, t_{l-1}\}, F' = \{f_0', f_1', \dots, f_{l-1}'\}$, and $T' = \{t_0', t_1', \dots, t_{l-1}'\}$, where $l = \lceil \log k \rceil$. Now consider a node $u_i \in U$, and let $i = \overline{b_{l-1} \dots b_1 b_0}$ be the binary representation of its index; for every $j \in [l]^*$ we add the edge $\{u_i, f_j\}$ if $b_j = 0$; otherwise, we add the edge $\{u_i, t_j\}$. This process is also repeated for the nodes in $U'$ (with respect to the sets $F'$ and $T'$). Next, we add the edges $\{f_j, t_j\}$ and $\{f_j', t_j'\}$ for all $j \in [l]^*$, while a critical element of the construction is the set of edges $\{ \{f_j, t_j'\} : j \in [l]^* \} \cup \{ \{t_j, f_j'\} : j \in [l]^* \}$. We also incorporate in the graph two nodes $w, w'$ such that $w$ is connected to all the nodes in $U$ and $V$, and $w'$ is connected to all the nodes in $U'$ and $V'$. Finally, we add three nodes $z_0, z_1, z_2$, as well as the set of edges $ \{ \{z_0, z_1\} \} \cup \{ \{z_1, z_2\} \} \cup \{ \{z_0, u_i \} : i \in [k]^* \}$.

Having constructed this base graph the next step is to encode the input of Alice and Bob as edges on the induced graph. Specifically, let $x \in \{0,1\}^{k^2}$ and $y \in \{0,1\}^{k^2}$ represent the input strings of Alice and Bob respectively. We let $x_{i, j} = 1 \iff \{u_i, v_j\} \in E$, and $y_{i, j} = 1 \iff \{v_j', u_i'\} \in E$. We denote the induced graph with $G_k^{x, y}$; an example of our construction is illustrated in \Cref{fig:radius}.

\begin{figure}[!ht]
    \centering
    \includegraphics[scale=0.6]{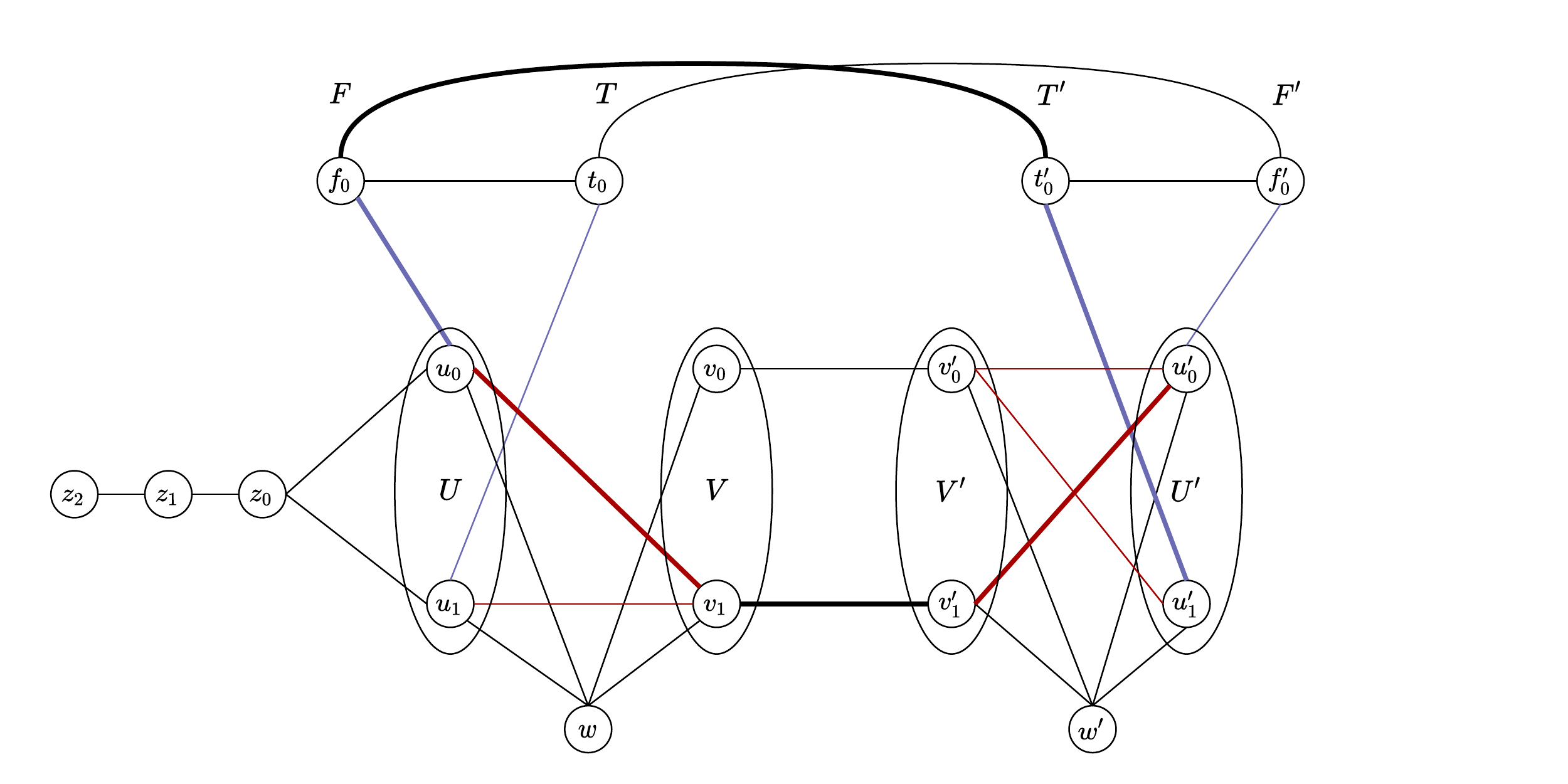}
    \caption{An example of our construction for the radius. The red edges correspond to the players' inputs, while the blue edges map nodes of $U$ and $U'$ to their \emph{bit-gadget}. Observe that $d(u_0, u_0') = 3$ (we have highlighted the corresponding path in the figure) as $\{u_0, v_1 \} \in E$ and $\{v_1', u_0'\} \in E$, implying that $x \cap y \neq \emptyset$. We have also highlighted the path of length $3$ from $u_0$ to $u_1'$ through the bit-gadget. }
    \label{fig:radius}
\end{figure}

\begin{claim}
\label{claim:aux_1}
For every node $u$ in $G_k^{x,y}$ besides the nodes in $U$ it follows that $\ecc(u) \geq 4$.
\end{claim}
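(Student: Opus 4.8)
The plan is to exhibit, for each node $u \notin U$, an explicit \emph{witness} vertex at distance at least $4$ from $u$, which immediately gives $\ecc(u) \geq 4$. The natural witness is the ``tail'' node $z_2$, which is highly peripheral: its only neighbor is $z_1$, so a breadth-first search from $z_2$ reaches $z_1$ at distance $1$, then $z_0$ at distance $2$, and then the whole of $U$ (the common neighborhood of $z_0$) at distance $3$. Crucially, none of the remaining neighbors of a node $u_i \in U$ --- namely $w$, the incident bit-gadget vertices, and the $v_j$ with $x_{i,j}=1$ --- lie in $\{z_0,z_1,z_2\} \cup U$. Hence I would first establish that the set of nodes within distance $3$ of $z_2$ is \emph{exactly} $\{z_0,z_1,z_2\} \cup U$; this is the core computation and amounts to checking the first few BFS layers against the adjacency rules of the construction.

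From this it follows that every node $u$ outside $\{z_0, z_1, z_2\} \cup U$ satisfies $d(u, z_2) \geq 4$, which already covers all of $V, V', U', F, T, F', T', w, w'$. It then remains to treat $z_0, z_1, z_2$ separately, since these three are precisely the vertices close to $z_2$ itself. For $z_2$ I would take $w$ as witness: any path must leave the chain through $z_0$ and some $u_i$, so $z_2 \to z_1 \to z_0 \to u_i \to w$ of length $4$ is shortest, giving $d(z_2, w) = 4$. For $z_0$ and $z_1$ I would instead use the primed hub $w'$: reaching any primed vertex from the $z$-chain forces a traversal of the unprimed side followed by the crossing of a ``bridge'' (a $\{v_j, v_j'\}$ edge or a cross bit-gadget edge $\{f_j,t_j'\}$, $\{t_j,f_j'\}$), and one checks that no neighbor of $w'$ is reachable within $3$ steps from $z_0$ (resp.\ $z_1$), so $d(z_0, w') \geq 4$ and $d(z_1, w') \geq 4$.

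The main obstacle is not conceptual but a matter of making the BFS-layer argument robust to degenerate inputs. In particular, if some column $j$ of $x$ is identically zero then $v_j$ is adjacent only to $w$ and $v_j'$, which shifts a few distances; I would argue that such degeneracies only \emph{increase} the relevant distances, so the lower bounds $\geq 4$ survive in every case. A minor related point is confirming that the bit-gadget vertices are correctly placed in the layers, but since each of them is at worst one hop from a vertex adjacent to some $u_i$, they comfortably sit at distance $\geq 4$ from $z_2$ and never intrude into the layer-$\leq 3$ set. Once these cases are dispatched, the witness assignment above completes the proof.
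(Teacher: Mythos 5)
Your proof takes essentially the same route as the paper's: $z_2$ serves as the distance-$\geq 4$ witness for every node outside $U \cup \{z_0, z_1, z_2\}$, and the three chain nodes are handled separately via the witnesses $w$ and $w'$. One correction: your intermediate claim that no neighbor of $w'$ is reachable within $3$ steps of $z_0$ is false whenever $x \neq 0$, since $z_0 \rightarrow u_i \rightarrow v_j \rightarrow v_j'$ reaches the neighbor $v_j'$ of $w'$ in exactly $3$ steps when $x_{i,j} = 1$; what you actually need (and what does hold) is that no neighbor of $w'$ lies within distance $2$ of $z_0$, which already gives $d(z_0, w') \geq 4$.
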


\begin{proof}
First, consider some node $u \notin U \cup \{z_0, z_1, z_2\}$; it is easy to see that $d(u, z_2) \geq 4$, implying that $\ecc(u) \geq 4$ and $\ecc(z_2) \geq 4$. Moreover, it follows that $d(z_1, w') \geq 5$ and $d(z_0, w') \geq 4$, concluding the proof.
\end{proof}

\begin{claim}
    \label{claim:aux_2}
The radius $R$ of $G_k^{x,y}$ is $3$ if $x \cap y \neq \emptyset$; otherwise, $R = 4$.
\end{claim}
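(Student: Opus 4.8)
The plan is to reduce the computation of the radius to the eccentricities of the nodes in $U$, and then to pin down $\ecc(u_i)$ via a case analysis on the primed ``side'' of the gadget. By \Cref{claim:aux_1}, every node outside $U$ has eccentricity at least $4$, so the radius equals $\min\{4, \min_{i \in [k]^*} \ecc(u_i)\}$; in particular it suffices to show that $\ecc(u_i) = 3$ precisely when the $i$-th ``row'' of the two inputs intersects (i.e.\ $\exists j$ with $x_{i,j} = y_{i,j} = 1$), and $\ecc(u_i) = 4$ otherwise. As a warm-up I would verify that $u_i$ reaches all of the ``local'' parts within distance $3$: the node $z_0$ is adjacent to $u_i$, whence $d(u_i, z_1) = 2$ and $d(u_i, z_2) = 3$; the node $w$ is adjacent to every node of $U$ and $V$, giving $d(u_i, v_j) \le 2$ and $d(u_i, v_j') \le 3$; and each bit-node $f_j, t_j$ lies within distance $2$ while $f_j', t_j'$ lie within distance $3$ using the crossing edges. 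The only nodes that can push the eccentricity of $u_i$ beyond $3$ are therefore $w'$ and the nodes of $U'$.

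Next I would analyze these two binding cases. For $u_{i'}'$ with $i' \ne i$, the indices differ in some bit position $j$, and this yields a path of length exactly $3$ through the bit-gadget: if the $j$-th bit of $i$ is $0$ and that of $i'$ is $1$ then $u_i \sim f_j$, $f_j \sim t_j'$ (a crossing edge), and $t_j' \sim u_{i'}'$, and symmetrically in the opposite case; since $u_i$ and $u_{i'}'$ share no common neighbour (their neighbourhoods lie in disjoint node sets), this distance is exactly $3$. For $i' = i$ all bits coincide, so no such bit-gadget shortcut exists; here the only length-$3$ route is $u_i \to v_j \to v_j' \to u_i'$, which is present iff $x_{i,j} = y_{i,j} = 1$ for some $j$. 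When no such $j$ exists I would exhibit a length-$4$ detour (e.g.\ $u_i \to f_j \to t_j \to f_j' \to u_i'$, using the edge $\{f_j,t_j\}$, a crossing edge, and $f_j' \sim u_i'$ since $i=i'$), so that $d(u_i, u_i') = 4$ exactly. Finally, $d(u_i, w') = 3$ iff the $i$-th row of $x$ is nonzero (via $u_i \to v_j \to v_j' \to w'$ with $x_{i,j}=1$) and equals $4$ otherwise; crucially, the intersection condition $x_{i,j} = y_{i,j} = 1$ already forces $x_{i,j} = 1$, so whenever $d(u_i, u_i') = 3$ we automatically have $d(u_i, w') = 3$.

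Combining, $\ecc(u_i) = 3$ exactly when row $i$ of the inputs intersects and $\ecc(u_i) = 4$ otherwise. Thus if $x \cap y \ne \emptyset$ some row intersects, giving a center $u_i$ with $\ecc(u_i) = 3$ and hence $R = 3$ (eccentricities are at least $3$ since $d(u_i,z_2)=3$ from any $U$-node and, by \Cref{claim:aux_1}, all other nodes have eccentricity at least $4$); whereas if $x \cap y = \emptyset$ no row intersects, so $\ecc(u_i) = 4$ for all $i$ and, together with \Cref{claim:aux_1}, $R = 4$. I expect the main obstacle to be the exact (rather than merely upper-bound) distance computations on the primed side---especially the lower bound $d(u_i, u_i') \ge 4$ in the disjoint case, which is the step where set disjointness is genuinely encoded: one must rule out every alternative short path and verify that the only length-$3$ connections between the unprimed and primed worlds run through either a bit discrepancy in the gadget or a shared $V$--$V'$ coordinate, neither of which is available when row $i$ is disjoint.
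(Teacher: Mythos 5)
Your proof is correct and takes essentially the same route as the paper's: reduce to the eccentricities of the $U$-nodes via \Cref{claim:aux_1}, check that $u_i$ reaches everything except $w'$ and $U'$ within distance $3$, handle $u_p'$ for $p \neq i$ through a bit discrepancy in the gadget, and observe that $d(u_i, u_i') = 3$ precisely when row $i$ of the inputs intersects. Your per-row characterization of $\ecc(u_i)$, the explicit length-$4$ detour $u_i \to f_j \to t_j \to f_j' \to u_i'$, and the ruling out of alternative length-$3$ paths merely flesh out the disjoint case, which the paper dismisses with ``it is easy to see.''
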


\begin{proof}
First, assume that $x \cap y \neq \emptyset$; in particular, let $x_{i, j} = y_{i, j} = 1$ for some $i, j \in [k]^*$. We will show that $\ecc(u_i) = 3$, which in turn implies that $R = 3$ given that $\ecc(u) \geq 3$ for all $u$. Indeed, observe that $d(u_i, z_0) = 1, d(u_i, z_1) = 2, d(u_i, z_2) = 3$; moreover, $d(u_i, w) = 1$, and through $w$ node $u_i$ can reach all the nodes in $U$ and $V$ in only two steps; this also implies that $d(u_i, v_j') = 3, \forall p \in [k]^*$. In addition, observe that $d(u_i, f_p) \leq 2$ and $d(u_i, t_p) \leq 2$ for all $p \in [l]^*$; thus, it also follows that $d(u_i, f_p') \leq 3$ and $d(u_i, t_p') \leq 3$ for all $p \in [l]^*$. The next step is to show that $d(u_i, u_p') = 3$ if $i \neq p$. Indeed, given that $i \neq p$ it follows that there exists some bit in the binary representation of their indexes in which the two numbers disagree; let $r$ be that position, and assume without any loss of generality that $i$ has a bit $1$, while $j$ has a bit $0$ in the $r$-th position. This implies that there exists the path $u_i \rightarrow t_r \rightarrow f_r' \rightarrow u_p'$, and hence, $d(u_i, u_p') = 3$. Finally, given that $x_{i, j} = y_{i, j} = 1$, we can deduce that $d(u_i, u_i') = 3$, as there exist edges $\{ u_i, v_j \}$ and $\{ v_j', u_i' \}$ (also notice that $d(u_i, w') = 3$).

In contrast, if $x \cap y = \emptyset$ it is easy to see that for all $i \in [k]^*, d(u_i, u_i') = 4$, implying along with \Cref{claim:aux_1} that $R = 4$.
\end{proof}

\begin{theorem}
    \label{theorem:radius-bcc}
    For any $\epsilon \in (0, 1/3]$, determining a $(4/3-\epsilon)$-approximation for the radius of an unweighted graph with probability $2/3$ requires $\widetilde{\Omega}(n)$ rounds of $\bcc$.
\end{theorem}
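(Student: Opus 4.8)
The plan is to mirror the reduction behind \Cref{theorem:directed_girth-bcc}, exploiting the radius gap of \Cref{claim:aux_2} to decide set disjointness. Suppose, toward a contradiction, that $\mathcal{A}$ is a $\bcc$ algorithm that computes a $(4/3-\epsilon)$-approximation $\widetilde{R}$ of the radius with probability $2/3$ in $T$ rounds. By \Cref{claim:aux_2}, when $x \cap y \neq \emptyset$ we have $R = 3$, so $\widetilde{R} \leq (4/3-\epsilon)\cdot 3 = 4 - 3\epsilon < 4$; when $x \cap y = \emptyset$ we have $R = 4$, so $\widetilde{R} \geq 4$. Hence the single threshold at $4$ separates the two cases, and reading off whether $\widetilde{R} < 4$ decides $\disj_{k^2}$ with probability $2/3$ (the separation is strict precisely because $\epsilon > 0$, and the factor stays at least $1$ since $\epsilon \leq 1/3$).

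Next I would set up the two-party reduction. Alice and Bob each hold a string in $\{0,1\}^{k^2}$ and jointly instantiate $G_k^{x,y}$: every edge of the base graph is public, Alice owns the edges $\{u_i, v_j\}$ encoding $x$, and Bob owns the edges $\{v_j', u_i'\}$ encoding $y$. Partition the vertices so that Alice simulates $U \cup V \cup F \cup T \cup \{w, z_0, z_1, z_2\}$ and Bob simulates $U' \cup V' \cup F' \cup T' \cup \{w'\}$. The only edges crossing this partition---the matching $\{v_i, v_i'\}$ together with the bit-gadget cross edges $\{f_j, t_j'\}$ and $\{t_j, f_j'\}$---belong to the base graph and are therefore known to both players. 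Consequently each player knows the complete incident-edge set of every node it controls (its public edges plus its owner's private input edges), which is exactly the initial state needed to begin the simulation.

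I would then simulate $\mathcal{A}$ round by round. In $\bcc$ every node broadcasts a single $\mathcal{O}(\log n)$-bit message per round that is received by all nodes, so to advance one round Alice computes the messages of her nodes and sends them to Bob, and Bob does likewise; this costs $\mathcal{O}(n \log n)$ bits of Alice--Bob communication per round. After $T$ rounds both players know the broadcasts of all nodes, hence the output $\widetilde{R}$, so by the first paragraph they have solved $\disj_{k^2}$ with probability $2/3$ while exchanging $\mathcal{O}(T n \log n)$ bits. \Cref{theorem:disj} forces $\mathcal{O}(T n \log n) = \Omega(k^2)$, i.e. $T = \Omega\big(k^2/(n\log n)\big)$. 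Finally, counting vertices gives $n = 4k + \mathcal{O}(\log k) + 5 = \Theta(k)$, so $T = \Omega(n/\log n) = \widetilde{\Omega}(n)$, as claimed.

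The main obstacle---and the step deserving the most care---is verifying the simulation invariant: that the chosen vertex partition lets each player compute, at every round, the outgoing broadcast of each of its nodes using only that node's current state, its public incident edges, and its owner's private input edges, and that receiving the other player's broadcasts suffices to update all states consistently. Once this invariant is pinned down, the per-round bit count and the vertex count are routine, and the $\widetilde{\Omega}(n)$ bound follows immediately.
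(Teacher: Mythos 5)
Your proposal is correct and follows essentially the same route as the paper, which proves \Cref{theorem:radius-bcc} by the argument of \Cref{theorem:directed_girth-bcc}: Alice and Bob simulate the $\bcc$ algorithm on $G_k^{x,y}$, use the $3$ vs.\ $4$ radius gap of \Cref{claim:aux_2} to decide $\disj_{k^2}$, and conclude from \Cref{theorem:disj} that $T = \Omega\bigl(k^2/(k\log n)\bigr) = \widetilde{\Omega}(n)$ since $n = \Theta(k)$. Your write-up merely makes explicit the vertex partition, the per-round broadcast-forwarding cost, and the threshold calculation that the paper leaves implicit.
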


Next, we will show how to adapt this construction for the $\hybrid$ model. Specifically, if $\ell \in \mathbb{N}$ is some parameter, we introduce the following modifications: instead of connecting the corresponding nodes in $V$ with $V'$, $F$ with $T'$, and $F'$ with $T$ directly via edges, we will connect them via paths of length $\ell$ edges; moreover, we create the path $z_0 \rightarrow z_1 \rightarrow \dots \rightarrow z_{\ell +1}$ (in place of $z_0 \rightarrow z_1 \rightarrow z_2$). As before, the players' inputs $x, y \in \{0,1\}^{k^2}$ shall be encoded as edges between $U$ with $V$, and $V'$ with $U'$ for Alice and Bob respectively. Let $G_{k,\ell}^{x,y}$ be the induced graph; the following claim admits an analogous proof to \Cref{claim:aux_2}:

\begin{claim}
    \label{claim:radius-hybrid}
The radius $R$ of $G_{k, \ell}^{x,y}$ is $\ell + 2$ if $x \cap y \neq \emptyset$; otherwise, $R \geq \ell + 3$.
\end{claim}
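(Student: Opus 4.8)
The plan is to prove \Cref{claim:radius-hybrid} by mirroring the structure of the proof of \Cref{claim:aux_2}, scaling every relevant distance by the stretching factor $\ell$ that was applied to the three families of ``bridge'' edges. The key observation is that in $G_{k,\ell}^{x,y}$ the only modifications relative to $G_k^{x,y}$ are (i) the replacement of each $\{v_i, v_i'\}$, $\{f_j, t_j'\}$, and $\{t_j, f_j'\}$ edge by a path of $\ell$ edges, and (ii) the lengthening of the $z$-path to $z_0 \to z_1 \to \cdots \to z_{\ell+1}$. I would therefore re-examine each distance computed in the proof of \Cref{claim:aux_2} and track how it inflates. Crucially, the ``local'' distances inside each half of the gadget (e.g.\ $d(u_i, w)=1$, $d(u_i, z_0)=1$, $d(u_i, f_p)\leq 2$) are unchanged, because none of the stretched edges lie on those paths; only the distances that necessarily \emph{cross} from one player's side to the other pick up an additive $\ell - 1$.

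First I would establish the upper bound: assuming $x \cap y \neq \emptyset$ with $x_{i,j}=y_{i,j}=1$, I claim $\ecc(u_i) = \ell + 2$. The farthest node $z_{\ell+1}$ is reached via $u_i \to z_0 \to z_1 \to \cdots \to z_{\ell+1}$, giving $d(u_i, z_{\ell+1}) = \ell + 2$. For the cross-side targets, the crossing now costs $\ell$ rather than $1$: the matched endpoint satisfies $d(u_i, u_i') = 2 + \ell$ via $u_i \to v_j \to (\text{length-}\ell\text{ path}) \to v_j' \to u_i'$, since $\{u_i,v_j\}$ and $\{v_j',u_i'\}$ are present. For mismatched indices $p \neq i$, the bit-gadget route $u_i \to t_r \to (\text{length-}\ell\text{ path}) \to f_r' \to u_p'$ again has length $2 + \ell$. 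Every remaining node ($w'$, the $V'$, $F'$, $T'$ families) is within distance $\ell + 2$ of $u_i$ by the same reasoning, so $\ecc(u_i) \leq \ell + 2$; combined with $d(u_i, z_{\ell+1}) = \ell+2$ this gives equality, and since every other node has eccentricity at least this (an analogue of \Cref{claim:aux_1}, now saying every $u \notin U$ has $\ecc(u) \geq \ell+3$) we conclude $R = \ell + 2$.

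For the lower bound in the disjoint case, I would argue that when $x \cap y = \emptyset$ no $u_i$ can reach its matched $u_i'$ in $2 + \ell$ steps, so $\ecc(u_i) \geq \ell + 3$; together with the strengthened eccentricity bound for all non-$U$ nodes this forces $R \geq \ell + 3$. The content here is exactly the disjointness argument reused from \Cref{claim:aux_2}: the only length-$(2+\ell)$ route to $u_i'$ would require both $\{u_i, v_j\}$ and $\{v_j', u_i'\}$ for a common coordinate, i.e.\ $x_{i,j}=y_{i,j}=1$, which disjointness forbids; every alternative path through the bit-gadget to $u_i'$ (as opposed to some $u_p'$ with $p \neq i$) is strictly longer because the bit-gadget cannot route between identical indices in $\ell + 2$ steps.

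The main obstacle I anticipate is carefully verifying the strengthened version of \Cref{claim:aux_1} for the stretched graph --- namely that \emph{every} node outside $U$, not just $z$-type or cross-side nodes, has eccentricity at least $\ell + 3$, so that the radius is genuinely attained on $U$ and the gap between the two cases is clean. I would handle this by checking the representative witness $z_{\ell+1}$: for any $u \notin U$ one shows $d(u, z_{\ell+1}) \geq \ell + 3$ by noting that reaching the $z$-path already costs at least $2$ from such nodes and the path itself contributes $\ell + 1$. A secondary subtlety is confirming that lengthening the bridge edges does not accidentally create a \emph{shorter} alternative route that would lower some eccentricity below the claimed value; this requires checking that no shortcut bypasses the stretched crossing, which follows because every $U$-to-$U'$ path must traverse at least one stretched bridge, so the additive $\ell$ is unavoidable on every crossing route.
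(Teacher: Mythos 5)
Your proposal takes exactly the route the paper intends: the paper offers no separate proof of \Cref{claim:radius-hybrid}, saying only that it ``admits an analogous proof to \Cref{claim:aux_2}'', and your distance-by-distance inflation of that proof is the right instantiation of the analogy. In particular, your two key points are correct: the only edges between the two sides of the gadget are the stretched bridges $(v_j, v_j')$, $(f_r, t_r')$, $(t_r, f_r')$, so any $u_i$-to-$u_i'$ path costs at least $1 + \ell + 1$, with equality exactly when $u_i$ is adjacent to the left endpoint and $u_i'$ to the right endpoint of a common bridge; the $(v_j, v_j')$ option forces $x_{i,j} = y_{i,j} = 1$, and the bit-gadget options force bit $r$ of $i$ to be simultaneously $0$ and $1$, which is the crossing/parity argument inherited from \Cref{claim:aux_2}.

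One step fails as written, though: your proposed verification of the strengthened \Cref{claim:aux_1} uses the single witness $z_{\ell+1}$ for ``any $u \notin U$'', but the nodes $z_1, \dots, z_{\ell+1}$ themselves are not in $U$ and are close to $z_{\ell+1}$, so the witness argument does not cover them---and they must be covered, since the radius is a minimum over \emph{all} eccentricities. The paper's \Cref{claim:aux_1} is careful about exactly this point, handling $z_0, z_1, z_2$ separately via the witness $w'$ (there $d(z_0, w') \geq 4$ and $d(z_1, w') \geq 5$). The same patch works here: any path from $z_m$ must first traverse the $m$ edges back to $z_0$, then take one edge into $U$, and $d(u_p, w') \geq \ell + 2$ for every $u_p \in U$ because each neighbor of $w'$ lies in $U' \cup V'$ and is at distance at least $\ell + 1$ from $u_p$ (any such route crosses a full stretched bridge, and $u_p$ is not a bridge endpoint); hence $d(z_m, w') \geq m + 1 + (\ell + 2) \geq \ell + 3$. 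With this correction your argument is complete and matches the paper's intended proof.
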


Consequently, we are ready to state the implied lower bound in the $\hybrid$ model for unweighted graphs.

\begin{theorem} 
    \label{theorem:radius-hybrid}
    Determining the radius of an unweighted graph with probability $2/3$ requires $\widetilde{\Omega}(n^{1/3})$ rounds of $\hybrid$.
\end{theorem}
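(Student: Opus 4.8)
The plan is to mirror exactly the reduction argument already used for the directed girth in Theorem~\ref{theorem:directed_girth-hybrid}, now applied to the radius construction $G_{k,\ell}^{x,y}$ together with \Cref{claim:radius-hybrid}. First I would fix the relationship between the construction parameters and the node count: since the graph $G_{k,\ell}^{x,y}$ has the four blocks $U, V, V', U'$ of size $k$ each, a bit-gadget of size $\Theta(\log k)$, and the $\Theta(\log k)$ paths (connecting $V$ to $V'$, $F$ to $T'$, and $F'$ to $T$) plus the $z$-path, each of length $\ell$, the total number of nodes is $n = \Theta(k + \ell \log k + k^2/k)$; more precisely the dominant terms are the $\Theta(k)$ nodes in the main blocks and the $\Theta(\ell \log k)$ nodes introduced by the lengthened bit-gadget paths, so I would set the parameters so that $k \cdot \ell = \Theta(n)$ just as in the girth argument, anticipating the balance $k = \Theta(n^{2/3})$, $\ell = \Theta(n^{1/3})$.

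The core of the argument is the simulation step. I would invoke \cite[Lemma 7.3]{10.1145/3382734.3405719} exactly as in the proof of Theorem~\ref{theorem:directed_girth-hybrid}: because the only connections between the ``Alice side'' ($U \cup V \cup F \cup T \cup \{w, z_\bullet\}$) and the ``Bob side'' ($U' \cup V' \cup F' \cup T' \cup \{w'\}$) run through paths of length $\ell$, the two players can together simulate $\lfloor \ell/2 \rfloor$ rounds of any purported $\hybrid$ algorithm $\mathcal{A}$ while only exchanging information about messages routed over the global network. If $\mathcal{A}$ determines the radius exactly with probability $2/3$, then by \Cref{claim:radius-hybrid} the players distinguish $R = \ell+2$ (inputs intersect) from $R \geq \ell+3$ (inputs disjoint), hence they solve $\disj_{k^2}$ with probability $2/3$. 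By \Cref{theorem:disj} this forces $\Omega(k^2)$ bits across the cut, and since each node transmits only $\mathcal{O}(\log^2 n)$ bits per round over the global mode and there are $n$ nodes, the number of simulated rounds is $\Omega\!\left(k^2/(n \log^2 n)\right)$. Combining with the simulation ceiling gives that $\mathcal{A}$ runs for $\Omega\!\left(\min\{\ell,\; k^2/(n\log^2 n)\}\right)$ rounds; optimizing over the constraint $k\ell = \Theta(n)$ yields $\widetilde{\Omega}(n^{1/3})$.

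One genuinely new piece relative to the girth case, and where I would spend the most care, is verifying \Cref{claim:radius-hybrid} itself---that lengthening the cross paths from $1$ to $\ell$ shifts the radius additively by $\ell-1$ on both sides of the gap while preserving the crucial $+1$ separation between the disjoint and non-disjoint cases. Concretely I would re-run the eccentricity computation of \Cref{claim:aux_2} for $G_{k,\ell}^{x,y}$: a node $u_i$ with $x_{i,j}=y_{i,j}=1$ reaches $z_{\ell+1}$ in $\ell+2$ steps and reaches the far side $u_i'$ (and every $u_p'$ via the bit-gadget's lengthened cross edges) in $\ell+2$ steps, while all near-side nodes remain within $\ell+2$; the analogue of \Cref{claim:aux_1} must be checked to confirm every non-$U$ node has eccentricity at least $\ell+3$, so that the radius is realized only at some $u_i$. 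The disjoint case then forces $d(u_i,u_i') = \ell+3$ for every $i$, pushing $R \geq \ell+3$. I expect this bookkeeping of path lengths through the bit-gadget to be the main obstacle, since the shortest $u_i \rightarrow u_p'$ route now traverses a length-$\ell$ cross path $t_r \rightsquigarrow f_r'$, and I must confirm no shorter alternative route (e.g. through $w, w'$, or through the $V$--$V'$ paths) undercuts the intended distances. Once \Cref{claim:radius-hybrid} is secured, the remainder is a verbatim adaptation of the girth reduction, so I would present only the parameter optimization and the simulation invocation in detail and defer the distance verification to the proof of the claim.
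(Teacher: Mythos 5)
Your proposal is correct and follows essentially the same route as the paper, which proves this theorem precisely by deferring to the argument of \Cref{theorem:directed_girth-hybrid}: the Kuhn--Schneider simulation of $\lfloor \ell/2 \rfloor$ rounds on $G_{k,\ell}^{x,y}$, the gap from \Cref{claim:radius-hybrid}, the $\Omega(k^2)$ disjointness bound forcing $\Omega\left(k^2/(n\log^2 n)\right)$ rounds of global communication, and the optimization of $\min\{\ell,\, k^2/(n\log^2 n)\}$ under $k\ell = \Theta(n)$, together with the remark (which you also flag) that the distance bookkeeping of \Cref{claim:aux_1} and \Cref{claim:aux_2} must be redone for the lengthened paths. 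The only slip is your node count: the dominant contribution to $n$ is the $\Theta(k\ell)$ nodes on the $k$ lengthened $V$--$V'$ paths, not $\Theta(\ell \log k)$, but since you ultimately impose the correct constraint $k\ell = \Theta(n)$, this does not affect the argument.
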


For the proofs of \Cref{theorem:radius-bcc} and \Cref{theorem:radius-hybrid} we refer to \Cref{theorem:directed_girth-bcc} and \Cref{theorem:directed_girth-hybrid} respectively, where we employ similar arguments. In particular, we should remark that the simulation argument articulated by Kuhn and Schneider \cite[Lemma 7.3]{10.1145/3382734.3405719} can be directly extended for the graph $G_{k, \ell}^{x,y}$.

\subsubsection{Weighted Graphs}

Next, we further modify our construction in order to obtain stronger lower bounds for weighted graphs. Specifically, if $W$ represents the maximum-weight edge, we endow every edge of the graph $G_{k, \ell}^{x, y}$ with weight $W$, with the following exceptions: (i) all the edges belonging in paths connecting $V$ to $V'$; (ii) all the edges belonging in paths connecting $F$ to $T'$ and $T$ to $F'$; and (iii) all the edges belonging in the path $z_1 \rightarrow z_2 \rightarrow \dots \rightarrow z_{\ell + 1}$. We will represent the induced weighted graph as $G_{k, \ell, W}^{x, y}$.

\begin{claim}
For every node $u$ in $G_{k, \ell, W}^{x,y}$ besides the nodes in $U$ it follows that $\ecc(u) \geq \ell + 3W$.
\end{claim}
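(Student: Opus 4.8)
The goal is to establish a lower bound on the eccentricity of every node outside $U$ in the weighted graph $G_{k, \ell, W}^{x,y}$. The plan is to mirror the argument of \Cref{claim:aux_1}, but now taking the edge weights into account. The key observation is that almost every edge in the construction carries weight $W$; only the edges lying on (i) the $V$-to-$V'$ paths, (ii) the $F$-to-$T'$ and $T$-to-$F'$ paths, and (iii) the terminal path $z_1 \rightarrow z_2 \rightarrow \dots \rightarrow z_{\ell+1}$ have unit weight. Consequently, to reach a ``far'' witness node, any path from $u \notin U$ must traverse at least one heavy edge of weight $W$, plus accumulate additional length from the unit-weight stretches introduced by the $\ell$-length paths.

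First I would exhibit an explicit witness node whose distance from an arbitrary $u \notin U$ is provably large; the natural candidate, as in the unweighted case (\Cref{claim:aux_1}), is the endpoint $z_{\ell+1}$ of the terminal path. Note that the only way to enter the $z$-chain is through $z_0$, which is adjacent solely to $z_1$ and to the nodes of $U$; hence a node $u \notin U$ cannot reach $z_0$ cheaply. The plan is to argue that any shortest path from $u$ to $z_{\ell+1}$ must first reach some node $u_i \in U$ (or $z_0$ directly), and reaching $U$ from outside requires crossing at least one weight-$W$ edge (all edges incident to $U$, namely those to $w$, to the bit-gadget sets $F,T$, and to $z_0$, carry weight $W$, since they are not among the three exempted classes). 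After paying $W$ to arrive in the relevant region, the path must still traverse $z_0 \rightarrow z_1 \rightarrow \dots \rightarrow z_{\ell+1}$, contributing an additive $\ell$ from the unit-weight terminal path (plus the heavy edge $\{z_0, z_1\}$, which also has weight $W$). Combining a careful count of the heavy edges crossed with the unit-weight length accumulated along the $z$-path yields $d(u, z_{\ell+1}) \geq \ell + 3W$, and hence $\ecc(u) \geq \ell + 3W$.

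The main obstacle will be bookkeeping which edges carry weight $W$ versus unit weight, and then verifying that every route to the chosen witness is forced to accumulate at least $3W$ worth of heavy edges in addition to the $\ell$ unit-weight edges. In particular I would need to check the edge $\{z_0, z_1\}$ separately: since the exempted path in (iii) is $z_1 \rightarrow z_2 \rightarrow \dots \rightarrow z_{\ell+1}$ and does \emph{not} include $\{z_0, z_1\}$, that first chain-edge has weight $W$, contributing one of the three heavy crossings. The remaining two units of $W$ come from the edge used to reach $U$ (or $w$) and the edge from $U$ into $z_0$; I would enumerate the constant-size set of possible ``entry patterns'' into the heavy core and confirm that each incurs at least $3W$. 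Once the distance bound to $z_{\ell+1}$ is secured, the eccentricity bound is immediate, since $\ecc(u) = \max_{v} d(u, v) \geq d(u, z_{\ell+1}) \geq \ell + 3W$. This claim then plays the same structural role for the weighted radius lower bound that \Cref{claim:aux_1} played for the unweighted case, guaranteeing that the node minimizing the eccentricity lies in $U$ and is governed by the disjointness-dependent gap established in the weighted analogue of \Cref{claim:aux_2}.
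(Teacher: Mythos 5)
Your heavy-edge accounting is sound for nodes outside both $U$ and the $z$-chain: any path from such a node to $z_{\ell+1}$ must enter $U$ (one weight-$W$ edge, since every edge incident to $U$ --- to $V$, to $w$, to $F \cup T$, to $z_0$ --- is non-exempted), then traverse $\{u_i, z_0\}$ and $\{z_0, z_1\}$ (two more weight-$W$ edges), and finally the $\ell$ unit-weight chain edges, giving $d(u, z_{\ell+1}) \geq \ell + 3W$. However, there is a genuine gap: the chain nodes $z_0, z_1, \dots, z_{\ell+1}$ are themselves not in $U$, so the claim must hold for them as well, and for these nodes your single witness $z_{\ell+1}$ fails outright --- for instance $d(z_1, z_{\ell+1}) = \ell$ and $d(z_0, z_{\ell+1}) = W + \ell$, both far below $\ell + 3W$. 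Your supporting assertions ("a node $u \notin U$ cannot reach $z_0$ cheaply" and "any shortest path from $u$ to $z_{\ell+1}$ must first reach some node $u_i \in U$") are simply false when $u$ lies on the chain, so the proposal as written does not prove the statement.

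The repair is the case split that the paper's proof of the unweighted analogue (\Cref{claim:aux_1}) already performs: there, nodes off the chain are handled via the witness $z_2$, while $z_0$ and $z_1$ are handled separately via the witness $w'$ (and $z_2$ inherits its bound by symmetry of the distance from the first case). Carrying this over to the weighted graph, for $u \notin U \cup \{z_0, \dots, z_{\ell+1}\}$ your argument with witness $z_{\ell+1}$ goes through and also yields $\ecc(z_{\ell+1}) \geq \ell + 3W$; for the remaining chain nodes take $w'$ as witness: any path from $z_p$ to $w'$ must exit the chain through $z_0$, pay a weight-$W$ edge into $U$, and then cross to Bob's side, which costs at least one $\ell$-length unit path plus two further weight-$W$ edges (e.g. $u_i \to v_j \to v_j' \to w'$, or the strictly costlier bit-gadget route), so $d(z_p, w') \geq \ell + 3W$ for every $p$. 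With this second case added, the claim follows; without it, the proof is incomplete.
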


\begin{claim}
The radius $R$ of $G_{k, \ell, W}^{x, y}$ is $\ell + 2W$ if $x \cap y \neq \emptyset$; otherwise, $R =\ell + 3W$.
\end{claim}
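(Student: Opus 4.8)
The final statement to prove is the Claim giving the radius of the weighted graph $G_{k,\ell,W}^{x,y}$: namely that $R = \ell + 2W$ when $x \cap y \neq \emptyset$, and $R = \ell + 3W$ otherwise. The plan is to mirror exactly the structure of the proof of \Cref{claim:aux_2} (and its $\hybrid$-analogue \Cref{claim:radius-hybrid}), but now tracking the weighted distances induced by the edge-weight assignment. The central observation is that the construction is deliberately arranged so that a node $u_i \in U$ is the unique candidate for minimizing the eccentricity, and the preceding claim already certifies that every node outside $U$ has eccentricity at least $\ell + 3W$; hence it suffices to analyze the eccentricity of the nodes in $U$ and show the stated gap.

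First I would handle the disjoint case $x \cap y = \emptyset$. Here I would argue that $\ecc(u_i) = \ell + 3W$ for every $i$ by bounding $d(u_i, u_i')$ from below and exhibiting it as the eccentricity-maximizing target. The weighting is chosen so that the cheap routes (the length-$\ell$ paths connecting $V$ to $V'$, $F$ to $T'$, and $T$ to $F'$, each of total weight $\ell$ since those edges have unit weight) are the only way to cross from the Alice-side to the Bob-side without paying multiple $W$-weight hops. When $x_{i,j} = y_{i,j}$ never both equal $1$, the path $u_i \to v_j \leadsto v_j' \to u_i'$ is unavailable, so reaching $u_i'$ forces a detour through the bit-gadget, which—by the same disagreement-bit argument used in \Cref{claim:aux_2}—costs $W$ to enter the gadget, $\ell$ to traverse a cross-path, and $W$ to exit, plus a final $W$ hop, yielding $d(u_i, u_i') = \ell + 3W$. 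Combined with the fact that this is the largest distance from $u_i$, we get $\ecc(u_i) = \ell + 3W$, and with the preceding claim this gives $R = \ell + 3W$.

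For the intersecting case $x \cap y \neq \emptyset$, fix $i,j$ with $x_{i,j} = y_{i,j} = 1$ and show $\ecc(u_i) = \ell + 2W$, which forces $R = \ell + 2W$. The key improvement is that the edges $\{u_i, v_j\}$ and $\{v_j', u_i'\}$ (each of weight $W$) now exist, so $u_i$ reaches $u_i'$ via $u_i \to v_j \leadsto v_j' \to u_i'$ at cost $W + \ell + W = \ell + 2W$. I would then verify that every other target is reachable from $u_i$ within $\ell + 2W$: the path $z_1 \to \cdots \to z_{\ell+1}$ has unit-weight edges so $z_{\ell+1}$ sits at distance $W + \ell$ (one $W$-hop $u_i \to z_0$, then $z_0 \to z_1$ of weight $W$, wait—I must check the $z$-weighting carefully); the nodes $u_p'$ with $p \neq i$ are reached in $W + \ell + W$ through a disagreeing bit of the gadget; and nodes on the Alice side are reached through $w$ at cost at most $2W$. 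The main obstacle will be this bookkeeping: I must check each class of target node (the $z$-path endpoint, the far bit-gadget nodes $f_p', t_p'$, the opposite hub $w'$, and all $u_p'$) and confirm none exceeds $\ell + 2W$ in the intersecting case, nor drops below $\ell + 3W$ as the maximum in the disjoint case, which hinges delicately on exactly which edges carry weight $W$ versus unit weight under exceptions (i)--(iii). Since the unweighted version was already settled in \Cref{claim:radius-hybrid} with the same combinatorial routing, the essential new content is purely this weighted distance accounting, and I would present it by substituting the weights into the already-established shortest-path structure rather than re-deriving the paths.
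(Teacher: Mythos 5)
Your proposal matches the paper's treatment: the paper states this claim without a separate proof, presenting it as the weighted analogue of \Cref{claim:aux_2}, and your plan---substituting the edge weights into the routing structure established there and checking each class of target node---is exactly the intended argument. The one detail you flagged resolves favorably: the edge $\{z_0, z_1\}$ has weight $W$ (exception (iii) only covers $z_1 \rightarrow \dots \rightarrow z_{\ell+1}$), so $d(u_i, z_{\ell+1}) = 2W + \ell = \ell + 2W$, which both stays within the claimed radius and supplies the lower bound $\ecc(u_p) \geq \ell + 2W$ for every $u_p \in U$ that is needed to pin $R$ exactly at $\ell + 2W$ in the intersecting case.
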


Observe that for sufficiently large $W$ this claim implies an asymptotically $3/2$-gap depending on whether $x \cap y = \emptyset$. As a result, we are ready to establish the following theorem:

\begin{theorem}
    For any $\epsilon \in (0, 1/2]$, determining a $(3/2 - \epsilon)$-approximation for the radius of a weighted graph with probability $2/3$ requires $\widetilde{\Omega}(n^{1/3})$ rounds of $\hybrid$, assuming that $W = \omega(n^{1/3})$
\end{theorem}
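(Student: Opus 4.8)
The plan is to mirror the proof of \Cref{theorem:radius-hybrid} exactly, since the weighted construction $G_{k,\ell,W}^{x,y}$ differs from the unweighted one only in the edge weights, and the two preceding claims have already isolated the structural gap we need. Concretely, suppose $\mathcal{A}$ is a $\hybrid$ algorithm which computes a $(3/2-\epsilon)$-approximation of the radius with probability $2/3$. The first step is to verify that this approximation ratio genuinely distinguishes the two cases of the second claim: when $x \cap y \neq \emptyset$ the radius is $\ell + 2W$, and when $x \cap y = \emptyset$ it is $\ell + 3W$. For the assumption $W = \omega(n^{1/3})$ together with $\ell = \Theta(n^{1/3})$, the additive term $\ell$ is negligible relative to $W$, so the ratio of the two radii tends to $3/2$ as $n$ grows. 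Thus for sufficiently large $n$ a $(3/2-\epsilon)$-approximation must output a value that falls on distinct sides of a threshold in the two cases, letting Alice and Bob read off the answer to $\disj_{k^2}$ from the algorithm's output.

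Next I would invoke the simulation argument of Kuhn and Schneider \cite[Lemma 7.3]{10.1145/3382734.3405719}, exactly as in \Cref{theorem:directed_girth-hybrid}. Because the gadget edges connecting $V$ to $V'$, $F$ to $T'$, and $T$ to $F'$ have all been replaced by paths of length $\ell$, the two halves of the graph controlled by Alice and Bob are separated by a "gap" of $\ell$ hops, so Alice and Bob can jointly simulate $\lfloor \ell/2 \rfloor$ rounds of $\mathcal{A}$ while exchanging only the global-network messages. If $\mathcal{A}$ terminates within $\lfloor \ell/2 \rfloor$ rounds, then by \Cref{theorem:disj} the players must have exchanged $\Omega(k^2)$ bits over the global network, and since each node transmits only $\mathcal{O}(\log^2 n)$ bits per round globally, this forces $\Omega(k^2/(n\log^2 n))$ rounds. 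Combining the two regimes, $\mathcal{A}$ requires $\Omega(\min\{\ell, k^2/(n\log^2 n)\})$ rounds; optimizing under the node budget $k \times \ell = \Theta(n)$ by taking $k = \Theta(n^{2/3})$ and $\ell = \Theta(n^{1/3})$ yields the claimed $\widetilde{\Omega}(n^{1/3})$ bound.

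The one genuinely new ingredient, and the step I expect to need the most care, is reconciling the node budget with the weight hypothesis $W = \omega(n^{1/3})$. With $k = \Theta(n^{2/3})$ and $\ell = \Theta(n^{1/3})$, the bit-gadget contributes only $\Theta(\log k) = \Theta(\log n)$ extra nodes per side, and the length-$\ell$ paths replacing each of the $\Theta(k)$ gadget edges plus the $z$-path contribute $\Theta(k\ell) = \Theta(n)$ subdivision nodes in total, so the vertex count stays $\Theta(n)$ and the parameter choice is consistent. I would check that the weight assignment does not disturb the eccentricity accounting already established: the light edges (the subdivided $V$--$V'$, $F$--$T'$, $T$--$F'$ paths and the $z_1 \rightarrow \cdots \rightarrow z_{\ell+1}$ path) are precisely those traversed by the shortest paths realizing the eccentricity of some $u_i$, so the radius scales as $\ell + 2W$ or $\ell + 3W$ rather than blowing up by a factor of $W$, which is exactly what the two claims assert. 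Finally, since $W = \poly(n)$ is permitted by the model's weight convention, the hypothesis $W = \omega(n^{1/3})$ is admissible, and the separation argument goes through for all large $n$, completing the reduction.
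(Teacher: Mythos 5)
Your proposal matches the paper's intended proof essentially verbatim: the paper establishes the $\ell+2W$ versus $\ell+3W$ radius gap via the two preceding claims and then appeals to exactly the argument of \Cref{theorem:radius-hybrid} and \Cref{theorem:directed_girth-hybrid}---the Kuhn--Schneider simulation of $\lfloor \ell/2 \rfloor$ rounds over the length-$\ell$ gap, the $\Omega(k^2)$ disjointness bound forcing $\Omega(k^2/(n\log^2 n))$ global rounds, and the optimization $k = \Theta(n^{2/3})$, $\ell = \Theta(n^{1/3})$. Your added verifications (that $W = \omega(n^{1/3})$ makes $(3/2-\epsilon)(\ell+2W) < \ell+3W$ for large $n$, and that the subdivision nodes keep the vertex count at $\Theta(n)$) are correct and merely make explicit what the paper leaves implicit.
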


\subsection{Diameter}

For this subsection we will employ some machinery developed in \cite{censorhillel2020distance} for solving in parallel multiple single-source shortest paths (SSSP) problems in $\hybrid$. Specifically, the following theorem applies when the set of sources is selected arbitrarily.

\begin{theorem}[\cite{censorhillel2020distance}, Theorem 1.5]
    \label{theorem:SSSP}
    Consider an $n$-node weighted graph. For any set of sources $U$ with $|U| = \mathcal{O}(n^{1/3})$, there exists a distributed algorithm in $\hybrid$ so that every node in the graph determines its exact distance from every source $s \in U$ in $\widetilde{\mathcal{O}}(n^{1/3})$ rounds with high probability.
\end{theorem}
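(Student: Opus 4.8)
The plan is to reduce the multiple-source shortest paths task to a distance problem on a small random \emph{skeleton} graph, exploiting that the $\widetilde{\mathcal{O}}(n^{1/3})$ budget is simultaneously enough (i) for every node to explore $\widetilde{\mathcal{O}}(n^{1/3})$ hops through the $\local$ edges and (ii) for the $\widetilde{\Theta}(n)$-bits-per-round global capacity to route the $\widetilde{\mathcal{O}}(n^{4/3})$ total distance labels demanded by the output. Since this is the cited result of Censor-Hillel et al., my sketch reconstructs their skeleton-and-oracle template. I would fix a hop parameter $h = \widetilde{\Theta}(n^{1/3})$ and sample a skeleton $S \subseteq V$ by including each node independently with probability $\Theta(\log n / h)$, so that $|S| = \widetilde{\Theta}(n/h) = \widetilde{\Theta}(n^{2/3})$ and, with high probability, every shortest path spanning more than $h$ hops contains a skeleton node within every window of $h$ consecutive vertices. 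I would then add the sources $U$ to $S$; since $|U| = \mathcal{O}(n^{1/3})$ this does not change the asymptotics.

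The local phase consists of running $h$ iterations of hop-bounded Bellman--Ford over the $\local$ edges (messages may be arbitrarily large, so each node simply relays its current distance table and relaxes). After $\widetilde{\mathcal{O}}(h) = \widetilde{\mathcal{O}}(n^{1/3})$ rounds every node knows its exact distance to every vertex within $h$ hops, in particular to every nearby skeleton node and every source within $h$ hops. From these labels one defines the weighted skeleton graph $G_S$ on vertex set $S$, placing an edge between two skeleton nodes whenever they lie within $h$ hops and weighting it by their exact $h$-bounded distance. The sampling guarantee ensures that this two-level view is \emph{exact}: the last skeleton node before $v$ along any long shortest path is within $h$ hops of $v$, and consecutive skeleton nodes along the path are connected in $G_S$.

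The crux, and the main obstacle, is the third step: computing single-source distances in $G_S$ from each of the $\mathcal{O}(n^{1/3})$ sources using the global ($\ncc$) network. The difficulty is that $G_S$ can be dense (up to $\widetilde{\Theta}(n^{4/3})$ edges) while the aggregate global capacity is only $\widetilde{\Theta}(n)$ bits per round, so neither broadcasting $G_S$ wholesale nor naively disseminating its edge set via token dissemination (which would need $\widetilde{\mathcal{O}}(\sqrt{n^{4/3}}) = \widetilde{\mathcal{O}}(n^{2/3})$ rounds) stays within budget; and since exact distances are required, one cannot simply sparsify $G_S$ with a spanner. The resolution is the oracle-simulation machinery of the cited work: one simulates a model in which the $\widetilde{\mathcal{O}}(n^{2/3})$ skeleton nodes enjoy mutual all-to-all communication with per-node bandwidth $\widetilde{\mathcal{O}}(n^{1/3})$---precisely what the shared $\widetilde{\Theta}(n)$ global capacity affords when split among $\widetilde{\mathcal{O}}(n^{2/3})$ active nodes---and solves the skeleton distances on top of this simulated oracle in $\widetilde{\mathcal{O}}(n^{1/3})$ rounds.

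Finally I would combine and disseminate: the exact distance from a source $s$ to a node $v$ equals the minimum of the purely local $h$-bounded distance and $\min_{x \in S \text{ near } v} \big( d_{G_S}(s,x) + d_{\leq h}(x,v) \big)$, which each node evaluates locally once it has received the relevant skeleton distances, and routing the $\widetilde{\mathcal{O}}(n^{4/3})$ labels back to their owners is again an $\widetilde{\mathcal{O}}(n^{1/3})$-round global dissemination. I expect essentially all the difficulty to live in the third step---establishing that the dense skeleton distance problem really is solvable within the shared $\widetilde{\Theta}(n)$-bit global budget rather than requiring the full skeleton topology to be learned---together with the (routine but necessary) verification of the sampling guarantee that makes the two-level distance decomposition exact.
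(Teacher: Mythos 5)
You were asked to prove a statement that this paper never proves: \Cref{theorem:SSSP} is imported verbatim as Theorem~1.5 of Censor-Hillel et al.\ \cite{censorhillel2020distance} and is used purely as a black box in the diameter section (to solve the $\widetilde{\mathcal{O}}(n^{1/3})$ SSSP instances rooted at $N_L'(w)$ in the Roditty--Vassilevska~W.\ implementation). There is therefore no internal proof to compare your attempt against; the only meaningful comparison is with the cited external work, whose skeleton-and-oracle template your sketch does correctly reconstruct in outline: the hop parameter $h = \widetilde{\Theta}(n^{1/3})$, a sampled skeleton of size $\widetilde{\Theta}(n^{2/3})$ hitting every $h$-hop window of a shortest path, $h$ rounds of hop-bounded Bellman--Ford over $\local$, and the exact two-level decomposition $d(s,v) = \min\bigl(d_{\leq h}(s,v), \min_{x} \bigl(d_{G_S}(s,x) + d_{\leq h}(x,v)\bigr)\bigr)$.

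As a standalone proof, however, your write-up is circular exactly where it matters. The third step---computing exact distances in the possibly dense skeleton graph $G_S$ (up to $\widetilde{\Theta}(n^{4/3})$ edges) from $\mathcal{O}(n^{1/3})$ sources within the shared $\widetilde{\Theta}(n)$-bit global budget---is resolved by appeal to ``the oracle-simulation machinery of the cited work,'' but that machinery \emph{is} the substance of the theorem being proved. You correctly observe that naive dissemination costs $\widetilde{\mathcal{O}}(n^{2/3})$ rounds and that spanners are ruled out by exactness, yet you give no argument for how the simulated all-to-all layer actually computes exact skeleton distances in $\widetilde{\mathcal{O}}(n^{1/3})$ rounds (e.g., via iterated $(\min,+)$ products on the skeleton together with a routing/congestion argument showing that per-node bandwidth $\widetilde{\mathcal{O}}(n^{1/3})$ among $\widetilde{\mathcal{O}}(n^{2/3})$ skeleton nodes is realizable over $\ncc$). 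A secondary soft spot: in the final step you route the $\widetilde{\mathcal{O}}(n^{4/3})$ output labels through the global network, but a node may have up to $\widetilde{\Theta}(n^{2/3})$ skeleton nodes within $h$ hops, so evaluating the minimum per source requires far more incoming data than the final labels; the standard fix is to run $h$ further rounds of $\local$ Bellman--Ford seeded with the skeleton distances, which sidesteps the global receive bottleneck entirely. In short: faithful as a reconstruction of the literature, but not a proof---and with respect to this paper, the accurate answer is simply that the theorem is cited, not proven.
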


Moreover, in the $n^x$-random-sources shortest paths (RSSP) problem we are given a set of sources sampled independently with probability $n^{x-1}$, for some $x \in (0,1)$,\footnote{A standard Chernoff bound argument implies that the number of sources is $n^{x}$ with high probability.} and the goal is to ensure that every node knows its distance from (all) the sampled sources. Interestingly, when the sources are selected at random we can solve substantially more SSSP problems in the same number of rounds:

\begin{theorem}[\cite{censorhillel2020distance}, Theorem 1.3]
    \label{theorem:RSSP}
    Consider an $n$-node weighted graph. If a set of nodes $S$ is sampled independently with probability $n^{x-1}$, there is a distributed algorithm in $\hybrid$ which guarantees that every node $v \in V$ knows its exact distance from every node in $S$ with $\widetilde{\mathcal{O}}(n^{1/3} + n^{2x-1})$ rounds with high probability.
\end{theorem}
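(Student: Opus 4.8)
The plan is to bootstrap the arbitrary-source primitive of \Cref{theorem:SSSP} through a random \emph{skeleton} construction, so that the only genuinely global work is an all-pairs computation on the (much smaller) sampled set. First I would dispose of the easy regime: if $x \le 1/3$ then $|S| = \widetilde{\mathcal{O}}(n^x) = \widetilde{\mathcal{O}}(n^{1/3})$ with high probability, so feeding $S$ directly into \Cref{theorem:SSSP} already solves the problem in $\widetilde{\mathcal{O}}(n^{1/3})$ rounds, which matches the claimed bound since here $2x-1 \le -1/3$ and thus $n^{2x-1} \le n^{1/3}$. Hence I would assume $x > 1/3$ for the remainder.

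For $x > 1/3$ I would set the hop parameter $h = \widetilde{\mathcal{O}}(n^{1-x})$ and exploit the standard covering property of a uniform sample: since each node joins $S$ independently with probability $n^{x-1}$, a Chernoff/union-bound argument shows that with high probability every shortest path of $G$ consisting of at least $h$ hops contains a node of $S$. Define the weighted \emph{skeleton graph} $H = (S, E_H)$ by placing, for each pair $s,t \in S$, an edge whose weight is the $h$-hop-bounded distance between $s$ and $t$ in $G$. The covering property yields the key invariant $d_H(s,t) = d_G(s,t)$ for all $s,t \in S$, since any $G$-shortest path between two skeleton nodes can be cut at the intermediate skeleton nodes it is guaranteed to hit into segments of fewer than $h$ hops, each realized by a skeleton edge. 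The algorithm then has three phases. \emph{(i) Skeleton construction:} every node learns the $h$-hop-bounded distances to the skeleton nodes within $h$ hops of it, implemented by combining local exploration with the exact arbitrary-source routine of \Cref{theorem:SSSP} (on a bootstrapping source set of size $\widetilde{\mathcal{O}}(n^{1/3})$), absorbing the cost into the $\widetilde{\mathcal{O}}(n^{1/3})$ term. \emph{(ii) Skeleton APSP:} make the full graph $H$ globally known, so each node computes $d_H = d_G$ restricted to $S$ locally. \emph{(iii) Extension:} each node $v$ outputs, for every $s \in S$, the value $\min_{t} \bigl( d_G^{(h)}(v,t) + d_H(t,s) \bigr)$ with $t$ ranging over skeleton nodes within $h$ hops of $v$; correctness is exactly the covering property applied to a shortest $v$-to-$s$ path, whose initial $h$-hop stretch must contain such a $t$ lying on it, while the $\min$ guarantees the estimate never underestimates.

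The hard part will be Phase (ii), and it is precisely where the $\widetilde{\mathcal{O}}(n^{2x-1})$ term—and the entire benefit of \emph{random} sources—originates. The skeleton has $|E_H| = \widetilde{\mathcal{O}}(n^{2x})$ edges, and disseminating them naively with the token-dissemination primitive of \Cref{lemma:TD} would incur the $\sqrt{|E_H|} = \widetilde{\mathcal{O}}(n^{x})$ penalty, which is too slow once $x > 1/3$ (it gives $n^{2/3}$ at $x = 2/3$, whereas we must achieve $n^{1/3}$). The plan is instead to push these $\widetilde{\mathcal{O}}(n^{2x})$ tokens through the global network at near-full utilization of its $\widetilde{\Theta}(n)$ per-round capacity, yielding $\widetilde{\mathcal{O}}(n^{2x}/n) = \widetilde{\mathcal{O}}(n^{2x-1})$ rounds. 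The delicate point I would have to argue is that uniform random sampling spreads both the \emph{holders} of skeleton edges and the intermediate \emph{relays} evenly across the $n$ nodes, so that assigning each node an equal $\widetilde{\mathcal{O}}(n^{2x-1})$ share of the broadcasts—each skeleton node first offloading its incident edges onto its $\widetilde{\mathcal{O}}(n^{1-x})$ associated helpers and then relaying through the global mode—produces no congestion hotspot with high probability, thereby bypassing the worst-case $\sqrt{k}$ bottleneck of generic token dissemination.

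Combining the three phases gives a total round complexity of $\widetilde{\mathcal{O}}(n^{1/3} + n^{2x-1})$. I would close with a union bound over the polylogarithmically many high-probability events involved: the concentration of $|S|$ around $n^x$, the covering property of the sample, the correctness of every invocation of \Cref{theorem:SSSP}, and the success of the load-balanced routing in Phase (ii). Since each holds with high probability, so does their conjunction, establishing the stated guarantee.
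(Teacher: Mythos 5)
This theorem is imported verbatim from \cite{censorhillel2020distance} (their Theorem 1.3); the paper you were given contains no proof of it, so your attempt can only be judged against the cited source. Your outer scaffolding is the standard and correct frame---the easy regime $x \le 1/3$ via \Cref{theorem:SSSP}, the hitting-set property of the sample, the skeleton graph with $d_H = d_G$ on $S$, and the extension formula $\min_t \bigl( d_G^{(h)}(v,t) + d_H(t,s) \bigr)$, whose correctness argument you give accurately. But the core of the claimed bound, your Phase (ii), contains a fatal error. Making all $k = \widetilde{\mathcal{O}}(n^{2x})$ skeleton edges known to \emph{every} node cannot be done in $\widetilde{\mathcal{O}}(n^{2x-1})$ rounds: your throughput calculation $k/n$ counts each token delivered to a \emph{single} recipient, whereas broadcast-to-all has total delivery volume $k \cdot n$. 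In $\ncc$ each node receives only $\widetilde{\mathcal{O}}(1)$ messages per round, so via global edges alone every node individually needs $\widetilde{\Omega}(n^{2x})$ rounds to absorb $H$, and the $\local$ mode improves this only to the $\widetilde{\Theta}(\sqrt{k}) = \widetilde{\Theta}(n^{x})$ token-dissemination complexity of \Cref{lemma:TD}, which on worst-case (large-diameter) topologies is matched by an $\widetilde{\Omega}(\sqrt{k})$ lower bound in \cite{DBLP:conf/soda/AugustineHKSS20}---precisely the barrier you announced you would beat. No load balancing of \emph{senders} can circumvent a \emph{receiver}-side capacity bottleneck. The actual proof in \cite{censorhillel2020distance} avoids global knowledge of $H$ altogether: skeleton APSP is computed distributedly (via their oracle-model simulation, with the $n^{2x}$-sized skeleton data split into $\widetilde{\mathcal{O}}(n^{2x-1})$ shares per node for load-balanced $(\min,+)$-style computation), so that each skeleton node learns only its own distance vector, which then suffices for the extension step; this is the true origin of the $n^{2x-1}$ term.

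There is also a secondary, fixable gap in Phase (i). Computing $h$-hop-bounded distances costs $\Theta(h) = \widetilde{\Theta}(n^{1-x})$ rounds of $\local$, which exceeds $n^{1/3}$ throughout the regime $x \in (1/3, 2/3)$ (at $x = 1/2$ it is $n^{1/2}$ against a claimed total of $\widetilde{\mathcal{O}}(n^{1/3})$), and your proposal to absorb this via \Cref{theorem:SSSP} does not work: that routine computes exact \emph{global} distances from $\mathcal{O}(n^{1/3})$ arbitrary sources and offers no mechanism for $h$-hop-bounded exploration with $h \gg n^{1/3}$. The standard remedy is oversampling: whenever $x < 2/3$, augment $S$ to a sample of rate $n^{-1/3}$, so that $h = \widetilde{\mathcal{O}}(n^{1/3})$ always and the skeleton-APSP cost for the superset is $\widetilde{\mathcal{O}}(n^{1/3})$, after which distances to the original $S \subseteq S'$ come for free. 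With that fix Phase (i) is fine; Phase (ii), however, needs a genuinely different mechanism, not a repaired analysis.
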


In particular, the $n^{2/3}$-RSSP problem admits a solution in $\widetilde{\mathcal{O}}(n^{1/3})$ rounds of $\hybrid$, which is also tight (see \cite{censorhillel2020distance} for the details). Now consider a weighted graph such that $\Delta = \mathcal{O}(\polylog n)$. In the sequel, we let $N_L(u)$ represent the $L$-nearest vertices to $u \in V$ for some integer $L$. We will show how to implement the sequential algorithm of Roditty and Vassilevska W. \cite{DBLP:conf/stoc/RodittyW13} in $\hybrid$; we commence by reviewing their algorithm, which consists of the following steps:

\begin{enumerate}
    \item Select a random sample $S$ of vertices such that $|S| = \widetilde{\Theta}(n/L)$;
    \item Solve the SSSP problem for all $s \in S$, and determine the node $w \in V$ which maximizes the distance from the set $S$;
    \item Determine the set $N_L(w)$ and solve the SSSP problem for all $s \in N_L(w)$;
    \item Return as the estimate $\widetilde{D} := \max_{u \in V, s \in S \cup N_L(w)} \{ d(s, u) \}$.
\end{enumerate}

This algorithm is guaranteed to return a value $\widetilde{D}$ such that $\lceil 2/3 \cdot D \rceil - w(\cdot, \cdot) \leq \widetilde{D} \leq D$, for some edge weight $w(\cdot, \cdot)$. Importantly, it was subsequently observed in \cite{DBLP:conf/soda/ChechikLRSTW14} that this additive term can be eliminated via a simple modification; the idea is to expand $N_L(w)$ by a \emph{single level}, and then solve all the SSSP problems for the induced set $N_L'(w)$. Having assumed that $\deg(u) = \mathcal{O}(\polylog n)$ it follows that $|N_L'(w)| = \widetilde{\mathcal{O}}(L)$, and this modification does not alter the asymptotic running time up to polylogarithmic factors.

Now let us return to the implementation of this algorithm in the $\hybrid$ model. First, we let $L = n^{1/3}$; this choice will optimize the round complexity. The second step of the algorithm can be solved in $\widetilde{\mathcal{O}}(n^{1/3})$ rounds via the distributed algorithm of \Cref{theorem:RSSP}; note that the maximum of $d(u, S)$ over all $u$ can be determined in $\mathcal{O}(\log n)$ rounds via the aggregate-and-broadcast protocol in $\ncc$ (\Cref{lemma:AB}). Next, given that $|N_L'(w)| = \widetilde{\mathcal{O}}(n^{1/3})$, node $w$ can determine the set $N_L(w)$ via the local network in $\widetilde{\mathcal{O}}(n^{1/3})$ rounds; then, $w$ can broadcast the IDs of the nodes in $N_L'(w)$ via \Cref{lemma:TD} in $\widetilde{\mathcal{O}}(n^{1/3})$ rounds. As a result, every node in the graph knows the set $N_L'(w)$, and the third step in the algorithm can be implemented via \Cref{theorem:SSSP} in $\widetilde{\mathcal{O}}(n^{1/3})$, as the total number of sources is $\widetilde{\mathcal{O}}(n^{1/3})$. Finally, the last step can be easily implemented in $\mathcal{O}(\log n)$ rounds via \Cref{lemma:AB}. Consequently, we have established the following:

\begin{proposition}
    \label{proposition:diameter}
    For any weighted graph $G$ with $\Delta = \mathcal{O}(\polylog n)$ we can determine a $3/2$-approximation of the diameter in $\widetilde{\mathcal{O}}(n^{1/3})$ rounds of $\hybrid$.
\end{proposition}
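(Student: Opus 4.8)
The plan is to implement faithfully the four-step sequential algorithm of Roditty and Vassilevska W.~\cite{DBLP:conf/stoc/RodittyW13} in $\hybrid$, fixing the parameter $L = n^{1/3}$ and dispatching each step to whichever distributed primitive handles it most cheaply. I would treat the correctness of the estimate as inherited: the sequential guarantee $\lceil 2/3 \cdot D\rceil - w(\cdot,\cdot) \le \widetilde{D} \le D$ together with the single-level expansion of $N_L(w)$ to $N_L'(w)$ from \cite{DBLP:conf/soda/ChechikLRSTW14} removes the additive slack and certifies that $\widetilde{D}$ is a $3/2$-approximation of $D$. Thus the entire burden of the proof is to argue that the distributed realization of each step runs in $\widetilde{\mathcal{O}}(n^{1/3})$ rounds.

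The delicate point in Steps~1--2 is that the sample $S$ has size $\widetilde{\Theta}(n/L) = \widetilde{\Theta}(n^{2/3})$, which is far too large to feed into the arbitrary-sources routine of \Cref{theorem:SSSP} (that tolerates only $\mathcal{O}(n^{1/3})$ sources). The key observation is that $S$ is produced by independent sampling, so it matches exactly the $n^{2/3}$-RSSP setting: invoking \Cref{theorem:RSSP} with $x = 2/3$ costs $\widetilde{\mathcal{O}}(n^{1/3} + n^{2x-1}) = \widetilde{\mathcal{O}}(n^{1/3})$ rounds and leaves every node knowing its distance to $S$. Identifying the node $w$ that maximizes $d(\cdot, S)$ and broadcasting its identity is then a single application of the aggregate-and-broadcast primitive (\Cref{lemma:AB}) with the $\textsc{Max}$ function, in $\mathcal{O}(\log n)$ rounds.

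For Step~3, I would let $w$ discover $N_L'(w)$ over the local network by a hop-by-hop weighted exploration. Since every vertex has degree $\mathcal{O}(\polylog n)$, the $L$ nearest vertices already lie within $\widetilde{\mathcal{O}}(n^{1/3})$ hops, so this exploration terminates in $\widetilde{\mathcal{O}}(n^{1/3})$ rounds; the same degree bound ensures $|N_L'(w)| = \widetilde{\mathcal{O}}(L) = \widetilde{\mathcal{O}}(n^{1/3})$, as the single-level expansion inflates the size only by a factor $\Delta = \mathcal{O}(\polylog n)$. Node $w$ then disseminates these $\widetilde{\mathcal{O}}(n^{1/3})$ identifiers to the whole network via \Cref{lemma:TD} in $\widetilde{\mathcal{O}}(n^{1/3})$ rounds. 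Because $S \cup N_L'(w)$ now constitutes $\widetilde{\mathcal{O}}(n^{1/3})$ sources that are \emph{arbitrary} rather than random, I would solve all of their SSSP instances in parallel through \Cref{theorem:SSSP}, again within $\widetilde{\mathcal{O}}(n^{1/3})$ rounds, and conclude with a final $\textsc{Max}$ aggregation (\Cref{lemma:AB}) to output $\widetilde{D}$.

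The main obstacle I anticipate is not any individual step but the bookkeeping that keeps each source set inside the regime its primitive can handle: the large set $S$ must be routed through RSSP (exploiting its randomness), whereas the small arbitrary set $N_L'(w)$ must be routed through the arbitrary-sources SSSP, and the whole scheme balances only at $L = n^{1/3}$, which simultaneously caps $|N_L'(w)|$ via the degree bound and pins the RSSP sampling rate at $n^{-1/3}$. The one spot demanding genuine care is the local discovery of $N_L'(w)$ in a \emph{weighted} graph, where I must justify that a bounded hop radius suffices to certify the $L$ nearest vertices by weighted distance; the hypothesis $\Delta = \mathcal{O}(\polylog n)$ is precisely what makes both this exploration and the subsequent dissemination affordable.
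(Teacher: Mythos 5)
Your proposal follows essentially the same route as the paper: RSSP (\Cref{theorem:RSSP}) for the large random sample $S$, aggregate-and-broadcast (\Cref{lemma:AB}) to identify $w$, local exploration plus token dissemination (\Cref{lemma:TD}) to compute and announce $N_L'(w)$, arbitrary-source SSSP (\Cref{theorem:SSSP}) for the small set, and a final $\textsc{Max}$ aggregation, all with $L = n^{1/3}$; your treatment of the weighted local exploration of $N_L'(w)$ is in fact more careful than the paper's, which asserts that step without justification. There is, however, one sentence that is wrong as written and contradicts your own bookkeeping: you claim that ``$S \cup N_L'(w)$ now constitutes $\widetilde{\mathcal{O}}(n^{1/3})$ sources'' and propose to feed this union into \Cref{theorem:SSSP}. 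Since $|S| = \widetilde{\Theta}(n/L) = \widetilde{\Theta}(n^{2/3})$, that invocation violates the $\mathcal{O}(n^{1/3})$ source cap of \Cref{theorem:SSSP}---exactly the obstruction you yourself identified two paragraphs earlier. The fix is the one already implicit in your closing paragraph: only $N_L'(w)$ is routed through \Cref{theorem:SSSP}; the distances from $S$ are already in every node's hands from the RSSP call, so the concluding aggregation of $\max_{u \in V,\, s \in S \cup N_L'(w)} d(s,u)$ needs no further shortest-path computation.
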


For general graphs we can still provide the same guarantee if it happens that $|N_L'(w)| = \widetilde{\mathcal{O}}(n^{1/3})$; otherwise, one could execute the algorithm of Roditty and Vassilevska W. without performing the expansion on $N_L(w)$, but it is unclear whether this is better (in the worst case) than the naive approach. Nonetheless, we showed that there is essentially no reason \emph{not} to execute this algorithm in the $\hybrid$ model.

\section{Concluding Remarks}

In this work we have provided several new insights on the power of the $\hybrid$ model in distributed algorithms. Specifically, we first showed a deterministic protocol which ensures that every node learns the entire topology of a sparse graph in $\widetilde{\mathcal{O}}(\sqrt{n})$ rounds; some applications of this result for general graphs were presented via sparsification techniques, most notably leading to deterministic algorithms which come close to the best-known randomized algorithms for the fundamental all-pairs shortest paths problem. We also made a connection with the concept of low-congestion shortcuts, leading to a polylogarithmic-round algorithm for approximate Min-Cut, even if the local network is modeled via $\congest$. Finally, we established an $\widetilde{\Omega}(n^{1/3})$ round-complexity lower bound for computing the radius of a graph, implying that there is essentially no separation between the complexity of computing the radius and the diameter in $\hybrid$---at least for unweighted graphs. In conclusion, several interesting open questions have emerged given that we do not have matching upper and lower bounds for many of the studied problems. 
\bibliography{paper}

\begin{thebibliography}{10}

\bibitem{DBLP:conf/wdag/AbboudCK16}
Amir Abboud, Keren Censor{-}Hillel, and Seri Khoury.
\newblock Near-linear lower bounds for distributed distance computations, even
  in sparse networks.
\newblock In Cyril Gavoille and David Ilcinkas, editors, {\em Distributed
  Computing - 30th International Symposium, {DISC} 2016}, volume 9888 of {\em
  Lecture Notes in Computer Science}, pages 29--42. Springer, 2016.

\bibitem{10.5555/2722129.2722241}
Amir Abboud, Fabrizio Grandoni, and Virginia~Vassilevska Williams.
\newblock Subcubic equivalences between graph centrality problems, {APSP} and
  diameter.
\newblock In Piotr Indyk, editor, {\em Proceedings of the Twenty-Sixth Annual
  {ACM-SIAM} Symposium on Discrete Algorithms, {SODA} 2015}, pages 1681--1697.
  {SIAM}, 2015.

\bibitem{alman2020refined}
Josh Alman and Virginia~Vassilevska Williams.
\newblock A refined laser method and faster matrix multiplication, 2020.
\newblock \href {http://arxiv.org/abs/2010.05846} {\path{arXiv:2010.05846}}.

\bibitem{10.1145/1502793.1502794}
Sanjeev Arora, Satish Rao, and Umesh~V. Vazirani.
\newblock Expander flows, geometric embeddings and graph partitioning.
\newblock {\em J. {ACM}}, 56(2):5:1--5:37, 2009.

\bibitem{10.1145/3323165.3323195}
John Augustine, Mohsen Ghaffari, Robert Gmyr, Kristian Hinnenthal, Christian
  Scheideler, Fabian Kuhn, and Jason Li.
\newblock Distributed computation in node-capacitated networks.
\newblock In Christian Scheideler and Petra Berenbrink, editors, {\em The 31st
  {ACM} on Symposium on Parallelism in Algorithms and Architectures, {SPAA}
  2019}, pages 69--79. {ACM}, 2019.

\bibitem{DBLP:conf/soda/AugustineHKSS20}
John Augustine, Kristian Hinnenthal, Fabian Kuhn, Christian Scheideler, and
  Philipp Schneider.
\newblock Shortest paths in a hybrid network model.
\newblock In Shuchi Chawla, editor, {\em Proceedings of the 2020 {ACM-SIAM}
  Symposium on Discrete Algorithms, {SODA} 2020, Salt Lake City, UT, USA,
  January 5-8, 2020}, pages 1280--1299. {SIAM}, 2020.

\bibitem{DBLP:journals/cacm/BatsonSST13}
Joshua~D. Batson, Daniel~A. Spielman, Nikhil Srivastava, and Shang{-}Hua Teng.
\newblock Spectral sparsification of graphs: theory and algorithms.
\newblock {\em Commun. {ACM}}, 56(8):87--94, 2013.

\bibitem{DBLP:conf/stoc/BenczurK96}
Andr{\'{a}}s~A. Bencz{\'{u}}r and David~R. Karger.
\newblock Approximating \emph{s-t} minimum cuts in
  \emph{{\~{O}}}(\emph{n}\({}^{\mbox{2}}\)) time.
\newblock In Gary~L. Miller, editor, {\em Proceedings of the Twenty-Eighth
  Annual {ACM} Symposium on the Theory of Computing, 1996}, pages 47--55.
  {ACM}, 1996.

\bibitem{Bollobas1998Modern}
Béla Bollobás.
\newblock {\em Modern Graph Theory}.
\newblock Graduate Texts in Mathematics 184. Springer-Verlag New York, 1998.

\bibitem{DBLP:journals/dc/Censor-HillelKK19}
Keren Censor{-}Hillel, Petteri Kaski, Janne~H. Korhonen, Christoph Lenzen, Ami
  Paz, and Jukka Suomela.
\newblock Algebraic methods in the congested clique.
\newblock {\em Distributed Comput.}, 32(6):461--478, 2019.

\bibitem{censorhillel2020distance}
Keren Censor-Hillel, Dean Leitersdorf, and Volodymyr Polosukhin.
\newblock Distance computations in the hybrid network model via oracle
  simulations, 2020.
\newblock \href {http://arxiv.org/abs/2010.13831} {\path{arXiv:2010.13831}}.

\bibitem{DBLP:conf/spaa/Censor-HillelLP21}
Keren Censor{-}Hillel, Dean Leitersdorf, and Volodymyr Polosukhin.
\newblock On sparsity awareness in distributed computations.
\newblock In Kunal Agrawal and Yossi Azar, editors, {\em {SPAA} '21: 33rd {ACM}
  Symposium on Parallelism in Algorithms and Architectures, 2021}, pages
  151--161. {ACM}, 2021.

\bibitem{DBLP:conf/soda/ChechikLRSTW14}
Shiri Chechik, Daniel~H. Larkin, Liam Roditty, Grant Schoenebeck, Robert~Endre
  Tarjan, and Virginia~Vassilevska Williams.
\newblock Better approximation algorithms for the graph diameter.
\newblock In Chandra Chekuri, editor, {\em Proceedings of the Twenty-Fifth
  Annual {ACM-SIAM} Symposium on Discrete Algorithms, {SODA} 2014}, pages
  1041--1052. {SIAM}, 2014.

\bibitem{CHEN201645}
Tao Chen, Xiaofeng Gao, and Guihai Chen.
\newblock The features, hardware, and architectures of data center networks: A
  survey.
\newblock {\em Journal of Parallel and Distributed Computing}, 96:45--74, 2016.

\bibitem{10.1145/1993636.1993674}
Paul Christiano, Jonathan~A. Kelner, Aleksander Madry, Daniel~A. Spielman, and
  Shang-Hua Teng.
\newblock Electrical flows, laplacian systems, and faster approximation of
  maximum flow in undirected graphs.
\newblock In {\em Proceedings of the Forty-Third Annual ACM Symposium on Theory
  of Computing}, STOC '11, page 273–282, New York, NY, USA, 2011. Association
  for Computing Machinery.

\bibitem{10.1145/1993636.1993686}
Atish Das~Sarma, Stephan Holzer, Liah Kor, Amos Korman, Danupon Nanongkai,
  Gopal Pandurangan, David Peleg, and Roger Wattenhofer.
\newblock Distributed verification and hardness of distributed approximation.
\newblock In {\em Proceedings of the Forty-Third Annual ACM Symposium on Theory
  of Computing}, STOC '11, page 363–372. Association for Computing Machinery,
  2011.

\bibitem{DBLP:conf/wdag/DolevLP12}
Danny Dolev, Christoph Lenzen, and Shir Peled.
\newblock "tri, tri again": Finding triangles and small subgraphs in a
  distributed setting - (extended abstract).
\newblock In Marcos~K. Aguilera, editor, {\em Distributed Computing - 26th
  International Symposium, {DISC} 2012, Salvador, Brazil, October 16-18, 2012.
  Proceedings}, volume 7611 of {\em Lecture Notes in Computer Science}, pages
  195--209. Springer, 2012.

\bibitem{DBLP:conf/stoc/DoryEMN21}
Michal Dory, Yuval Efron, Sagnik Mukhopadhyay, and Danupon Nanongkai.
\newblock Distributed weighted min-cut in nearly-optimal time.
\newblock In Samir Khuller and Virginia~Vassilevska Williams, editors, {\em
  {STOC} '21: 53rd Annual {ACM} {SIGACT} Symposium on Theory of Computing,
  2021}, pages 1144--1153. {ACM}, 2021.

\bibitem{DBLP:conf/podc/DruckerKO13}
Andrew Drucker, Fabian Kuhn, and Rotem Oshman.
\newblock On the power of the congested clique model.
\newblock In Magn{\'{u}}s~M. Halld{\'{o}}rsson and Shlomi Dolev, editors, {\em
  {ACM} Symposium on Principles of Distributed Computing, {PODC} '14}, pages
  367--376. {ACM}, 2014.

\bibitem{DBLP:conf/podc/ElkinM21}
Michael Elkin and Shaked Matar.
\newblock Ultra-sparse near-additive emulators.
\newblock In Avery Miller, Keren Censor{-}Hillel, and Janne~H. Korhonen,
  editors, {\em {PODC} '21: {ACM} Symposium on Principles of Distributed
  Computing, 2021}, pages 235--246. {ACM}, 2021.

\bibitem{bams/1183510246}
P.~Erdös and A.~H. Stone.
\newblock {On the structure of linear graphs}.
\newblock {\em Bulletin of the American Mathematical Society}, 52(12):1087 --
  1091, 1946.

\bibitem{10.1145/1851182.1851223}
Nathan Farrington, George Porter, Sivasankar Radhakrishnan, Hamid~Hajabdolali
  Bazzaz, Vikram Subramanya, Yeshaiahu Fainman, George Papen, and Amin Vahdat.
\newblock Helios: A hybrid electrical/optical switch architecture for modular
  data centers.
\newblock In {\em Proceedings of the ACM SIGCOMM 2010 Conference}, SIGCOMM '10,
  page 339–350, New York, NY, USA, 2010. Association for Computing Machinery.

\bibitem{feldmann2020fast}
Michael Feldmann, Kristian Hinnenthal, and Christian Scheideler.
\newblock Fast hybrid network algorithms for shortest paths in sparse graphs,
  2020.
\newblock \href {http://arxiv.org/abs/2007.01191} {\path{arXiv:2007.01191}}.

\bibitem{10.5555/2095116.2095207}
Silvio Frischknecht, Stephan Holzer, and Roger Wattenhofer.
\newblock Networks cannot compute their diameter in sublinear time.
\newblock In {\em Proceedings of the Twenty-Third Annual ACM-SIAM Symposium on
  Discrete Algorithms}, SODA '12, page 1150–1162. Society for Industrial and
  Applied Mathematics, 2012.

\bibitem{DBLP:journals/siamcomp/GarayKP98}
Juan~A. Garay, Shay Kutten, and David Peleg.
\newblock A sublinear time distributed algorithm for minimum-weight spanning
  trees.
\newblock {\em {SIAM} J. Comput.}, 27(1):302--316, 1998.

\bibitem{DBLP:conf/soda/GhaffariH16}
Mohsen Ghaffari and Bernhard Haeupler.
\newblock Distributed algorithms for planar networks {II:} low-congestion
  shortcuts, mst, and min-cut.
\newblock In Robert Krauthgamer, editor, {\em Proceedings of the Twenty-Seventh
  Annual {ACM-SIAM} Symposium on Discrete Algorithms, {SODA} 2016}, pages
  202--219. {SIAM}, 2016.

\bibitem{10.1007/978-3-642-41527-2_1}
Mohsen Ghaffari and Fabian Kuhn.
\newblock Distributed minimum cut approximation.
\newblock In Yehuda Afek, editor, {\em Distributed Computing}, pages 1--15,
  Berlin, Heidelberg, 2013. Springer Berlin Heidelberg.

\bibitem{DBLP:conf/wdag/GhaffariK18}
Mohsen Ghaffari and Fabian Kuhn.
\newblock Derandomizing distributed algorithms with small messages: Spanners
  and dominating set.
\newblock In Ulrich Schmid and Josef Widder, editors, {\em 32nd International
  Symposium on Distributed Computing, {DISC} 2018}, volume 121 of {\em LIPIcs},
  pages 29:1--29:17. Schloss Dagstuhl - Leibniz-Zentrum f{\"{u}}r Informatik,
  2018.

\bibitem{DBLP:conf/podc/GhaffariKS17}
Mohsen Ghaffari, Fabian Kuhn, and Hsin{-}Hao Su.
\newblock Distributed {MST} and routing in almost mixing time.
\newblock In Elad~Michael Schiller and Alexander~A. Schwarzmann, editors, {\em
  Proceedings of the {ACM} Symposium on Principles of Distributed Computing,
  {PODC} 2017, Washington, DC, USA, July 25-27, 2017}, pages 131--140. {ACM},
  2017.

\bibitem{10.1145/3212734.3212750}
Mohsen Ghaffari and Krzysztof Nowicki.
\newblock Congested clique algorithms for the minimum cut problem.
\newblock In {\em Proceedings of the 2018 ACM Symposium on Principles of
  Distributed Computing}, PODC '18, page 357–366, New York, NY, USA, 2018.
  Association for Computing Machinery.

\bibitem{10.2307/2098881}
R.~E. Gomory and T.~C. Hu.
\newblock Multi-terminal network flows.
\newblock {\em Journal of the Society for Industrial and Applied Mathematics},
  9(4):551--570, 1961.

\bibitem{gotte2020timeoptimal}
Thorsten Götte, Kristian Hinnenthal, Christian Scheideler, and Julian
  Werthmann.
\newblock Time-optimal construction of overlay networks, 2020.
\newblock \href {http://arxiv.org/abs/2009.03987} {\path{arXiv:2009.03987}}.

\bibitem{DBLP:conf/wdag/HaeuplerL18}
Bernhard Haeupler and Jason Li.
\newblock Faster distributed shortest path approximations via shortcuts.
\newblock In Ulrich Schmid and Josef Widder, editors, {\em 32nd International
  Symposium on Distributed Computing, {DISC} 2018}, volume 121 of {\em LIPIcs},
  pages 33:1--33:14. Schloss Dagstuhl - Leibniz-Zentrum f{\"{u}}r Informatik,
  2018.

\bibitem{DBLP:journals/toc/HastadW07}
Johan H{\aa}stad and Avi Wigderson.
\newblock The randomized communication complexity of set disjointness.
\newblock {\em Theory Comput.}, 3(1):211--219, 2007.

\bibitem{DBLP:conf/wdag/HolzerPRW14}
Stephan Holzer, David Peleg, Liam Roditty, and Roger Wattenhofer.
\newblock Distributed 3/2-approximation of the diameter.
\newblock In Fabian Kuhn, editor, {\em Distributed Computing - 28th
  International Symposium, {DISC} 2014}, volume 8784 of {\em Lecture Notes in
  Computer Science}, pages 562--564. Springer, 2014.

\bibitem{DBLP:journals/siamdm/KalyanasundaramS92}
Bala Kalyanasundaram and Georg Schnitger.
\newblock The probabilistic communication complexity of set intersection.
\newblock {\em {SIAM} J. Discret. Math.}, 5(4):545--557, 1992.

\bibitem{KAR2018203}
Udit~Narayana Kar and Debarshi~Kumar Sanyal.
\newblock An overview of device-to-device communication in cellular networks.
\newblock {\em ICT Express}, 4(4):203--208, 2018.

\bibitem{10.5555/313559.313605}
David~R. Karger.
\newblock Global min-cuts in {RNC}, and other ramifications of a simple min-cut
  algorithm.
\newblock In {\em Proceedings of the Fourth Annual ACM-SIAM Symposium on
  Discrete Algorithms}, SODA '93, page 21–30, USA, 1993. Society for
  Industrial and Applied Mathematics.

\bibitem{10.1145/100216.100257}
P.~Klein, C.~Stein, and \'{E}. Tardos.
\newblock Leighton-rao might be practical: Faster approximation algorithms for
  concurrent flow with uniform capacities.
\newblock In {\em Proceedings of the Twenty-Second Annual ACM Symposium on
  Theory of Computing}, STOC '90, page 310–321. Association for Computing
  Machinery, 1990.

\bibitem{koutis2014simple}
Ioannis Koutis.
\newblock Simple parallel and distributed algorithms for spectral graph
  sparsification, 2014.
\newblock \href {http://arxiv.org/abs/1402.3851} {\path{arXiv:1402.3851}}.

\bibitem{10.1145/3382734.3405719}
Fabian Kuhn and Philipp Schneider.
\newblock Computing shortest paths and diameter in the hybrid network model.
\newblock In {\em Proceedings of the 39th Symposium on Principles of
  Distributed Computing}, PODC '20, page 109–118. Association for Computing
  Machinery, 2020.

\bibitem{DBLP:conf/podc/Lenzen13}
Christoph Lenzen.
\newblock Optimal deterministic routing and sorting on the congested clique.
\newblock In Panagiota Fatourou and Gadi Taubenfeld, editors, {\em {ACM}
  Symposium on Principles of Distributed Computing, {PODC} '13, Montreal, QC,
  Canada, July 22-24, 2013}, pages 42--50. {ACM}, 2013.

\bibitem{Lenzen}
Christoph Lenzen.
\newblock {\em Lectures notes on Theory of Distributed Systems}.
\newblock 2016.
\newblock URL:
  \url{https://www.mpi-inf.mpg.de/fileadmin/inf/d1/teaching/winter15/tods/ToDS.pdf}.

\bibitem{snapnets}
Jure Leskovec and Andrej Krevl.
\newblock {SNAP Datasets}: {Stanford} large network dataset collection.
\newblock \url{http://snap.stanford.edu/data}, June 2014.

\bibitem{7147775}
P.~{Li}, S.~{Guo}, and I.~{Stojmenovic}.
\newblock A truthful double auction for device-to-device communications in
  cellular networks.
\newblock {\em IEEE Journal on Selected Areas in Communications}, 34(1):71--81,
  2016.

\bibitem{10.1137/0221015}
Nathan Linial.
\newblock Locality in distributed graph algorithms.
\newblock {\em SIAM J. Comput.}, 21(1):193–201, 1992.

\bibitem{10.1145/777412.777428}
Zvi Lotker, Elan Pavlov, Boaz Patt{-}Shamir, and David Peleg.
\newblock {MST} construction in {O}(log log n) communication rounds.
\newblock In Arnold~L. Rosenberg and Friedhelm~Meyer auf~der Heide, editors,
  {\em {SPAA} 2003: Proceedings of the Fifteenth Annual {ACM} Symposium on
  Parallelism in Algorithms and Architectures}, pages 94--100. {ACM}, 2003.

\bibitem{10.5555/581165}
Jiri Matousek.
\newblock {\em Lectures on Discrete Geometry}.
\newblock Springer-Verlag, Berlin, Heidelberg, 2002.

\bibitem{JWS-0003}
Robert Meusel, Sebastiano Vigna, Oliver Lehmberg, and Christian Bizer.
\newblock The graph structure in the web – analyzed on different aggregation
  levels.
\newblock {\em The Journal of Web Science}, 1(1):33--47, 2015.

\bibitem{7588891}
A.~{Murkaz}, R.~{Hussain}, S.~F. {Hasan}, M.~Y. {Chung}, B.~. {Seet}, P.~H.~J.
  {Chong}, S.~T. {Shah}, and S.~A. {Malik}.
\newblock Architecture and protocols for inter-cell device-to-device
  communication in {5G} networks.
\newblock In {\em 2016 IEEE 14th Intl Conf on Dependable, Autonomic and Secure
  Computing, 14th Intl Conf on Pervasive Intelligence and Computing, 2nd Intl
  Conf on Big Data Intelligence and Computing and Cyber Science and Technology
  Congress(DASC/PiCom/DataCom/CyberSciTech)}, pages 489--492, 2016.

\bibitem{10.1007/978-3-662-45174-8_30}
Danupon Nanongkai and Hsin-Hao Su.
\newblock Almost-tight distributed minimum cut algorithms.
\newblock In Fabian Kuhn, editor, {\em Distributed Computing}, pages 439--453,
  Berlin, Heidelberg, 2014. Springer Berlin Heidelberg.

\bibitem{https://doi.org/10.1112/jlms/s1-39.1.12}
C.~St.J.~A. Nash-Williams.
\newblock Decomposition of finite graphs into forests.
\newblock {\em Journal of the London Mathematical Society}, s1-39(1):12--12,
  1964.

\bibitem{Nisan06thecommunication}
Noam Nisan and Ilya Segal.
\newblock The communication requirements of efficient allocations and
  supporting prices.
\newblock {\em Journal of Economic Theory}, 129:192--224, 2006.

\bibitem{10.5555/355459}
David Peleg.
\newblock {\em Distributed Computing: A Locality-Sensitive Approach}.
\newblock Society for Industrial and Applied Mathematics, USA, 2000.

\bibitem{10.5555/795665.796477}
David Peleg and Vitaly Rubinovich.
\newblock A near-tight lower bound on the time complexity of distributed mst
  construction.
\newblock In {\em Proceedings of the 40th Annual Symposium on Foundations of
  Computer Science}, FOCS '99, page 253, USA, 1999. IEEE Computer Society.

\bibitem{DBLP:journals/siamcomp/PelegU89a}
David Peleg and Jeffrey~D. Ullman.
\newblock An optimal synchronizer for the hypercube.
\newblock {\em {SIAM} J. Comput.}, 18(4):740--747, 1989.

\bibitem{10.1145/146637.146684}
Ran Raz and Avi Wigderson.
\newblock Monotone circuits for matching require linear depth.
\newblock {\em J. ACM}, 39(3):736–744, July 1992.

\bibitem{DBLP:conf/stoc/RodittyW13}
Liam Roditty and Virginia~Vassilevska Williams.
\newblock Fast approximation algorithms for the diameter and radius of sparse
  graphs.
\newblock In Dan Boneh, Tim Roughgarden, and Joan Feigenbaum, editors, {\em
  Symposium on Theory of Computing Conference, STOC'13}, pages 515--524. {ACM},
  2013.

\bibitem{DBLP:conf/stoc/RozhonG20}
V{\'{a}}clav Rozhon and Mohsen Ghaffari.
\newblock Polylogarithmic-time deterministic network decomposition and
  distributed derandomization.
\newblock In Konstantin Makarychev, Yury Makarychev, Madhur Tulsiani, Gautam
  Kamath, and Julia Chuzhoy, editors, {\em Proccedings of the 52nd Annual {ACM}
  {SIGACT} Symposium on Theory of Computing, {STOC} 2020}, pages 350--363.
  {ACM}, 2020.

\bibitem{6736752}
C.~{Wang}, F.~{Haider}, X.~{Gao}, X.~{You}, Y.~{Yang}, D.~{Yuan}, H.~M.
  {Aggoune}, H.~{Haas}, S.~{Fletcher}, and E.~{Hepsaydir}.
\newblock Cellular architecture and key technologies for {5G} wireless
  communication networks.
\newblock {\em IEEE Communications Magazine}, 52(2):122--130, 2014.

\bibitem{10.1145/1851182.1851222}
Guohui Wang, David~G. Andersen, Michael Kaminsky, Konstantina Papagiannaki,
  T.S.~Eugene Ng, Michael Kozuch, and Michael Ryan.
\newblock C-through: Part-time optics in data centers.
\newblock In {\em Proceedings of the ACM SIGCOMM 2010 Conference}, SIGCOMM '10,
  page 327–338. Association for Computing Machinery, 2010.

\bibitem{10.1145/800135.804414}
Andrew Chi-Chih Yao.
\newblock Some complexity questions related to distributive
  computing(preliminary report).
\newblock In {\em Proceedings of the Eleventh Annual ACM Symposium on Theory of
  Computing}, STOC '79, page 209–213, New York, NY, USA, 1979. Association
  for Computing Machinery.

\end{thebibliography}

\appendix

\section{Detecting and Counting Subgraphs}
\label{appendix:detecting-counting}

We commence this section by giving several simple algorithms in $\hybrid$ for detecting and counting subgraphs of small diameter. In the sequel we assume for concreteness that we are searching for $r$-cycles, for some parameter $r \in \mathbb{N}$, but our results directly apply for subgraphs with diameter upper-bounded by $r$. As a warm-up, we establish the following:

\begin{proposition}
    \label{proposition:detecting_cycles-hybrid}
For any graph $G$ we can detect the existence of an $r$-cycle in $\lfloor (r-1)/2 \rfloor + \mathcal{O}(\log n/\log \log n)$ rounds of $\hybrid$.
\end{proposition}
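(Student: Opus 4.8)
The plan is to detect an $r$-cycle by combining a short phase of local exploration with the fast topology-learning primitive of the paper. The key observation is that any $r$-cycle, if it exists, has diameter at most $\lfloor r/2 \rfloor$, so its structure can be "collapsed" into a ball of radius roughly $\lfloor (r-1)/2 \rfloor$ around any of its vertices. First I would let every node perform $\lfloor (r-1)/2 \rfloor$ rounds of flooding through the \emph{local} network, so that each node learns its entire $\lfloor (r-1)/2 \rfloor$-hop neighborhood (including all edges among the nodes in this ball). Since $\local$ permits messages of arbitrary size, this phase costs exactly $\lfloor (r-1)/2 \rfloor$ rounds.

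After this local phase, every node $v$ can examine the induced subgraph on its ball and check whether an $r$-cycle passing through $v$ is already fully visible to it. Because an $r$-cycle has radius at most $\lfloor r/2 \rfloor$ from each of its vertices, some vertex of any existing $r$-cycle will see the cycle in its neighborhood — but to be careful, I would split on the parity of $r$: for a node on the cycle, the two arcs to the antipodal point have lengths summing to $r$, so at least one arc has length at most $\lfloor r/2 \rfloor$, and a vertex on the shorter arc reaches the whole cycle within $\lfloor (r-1)/2 \rfloor$ hops. Thus every node computes a local boolean indicating whether it witnesses an $r$-cycle. The remaining task is purely a global OR: we must let \emph{every} node in the graph learn whether \emph{any} node witnessed such a cycle.

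This final aggregation is exactly the aggregate-and-broadcast problem for the distributive function $\textsc{Max}$ (or logical OR), with each node's input being its witness bit. By \Cref{lemma:AB}, this is solvable in $\mathcal{O}(\log n)$ rounds of $\ncc$. To match the stated bound of $\mathcal{O}(\log n / \log \log n)$ rather than $\mathcal{O}(\log n)$, I expect the intended refinement is to exploit the faster aggregation primitive that the $\ncc$ model affords for an OR over single bits: rather than the generic $\mathcal{O}(\log n)$-round scheme, one can aggregate a single-bit answer over an $n$-node overlay in $\mathcal{O}(\log n / \log \log n)$ rounds, since each node may contact $\mathcal{O}(\log n)$ others per round and therefore the aggregation tree can have degree $\Theta(\log n)$ and depth $\mathcal{O}(\log_{\log n} n) = \mathcal{O}(\log n / \log \log n)$. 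Summing the two phases yields the claimed $\lfloor (r-1)/2 \rfloor + \mathcal{O}(\log n / \log \log n)$ rounds.

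The main obstacle I anticipate is the careful parity/radius argument in the second step: one must verify that for \emph{every} $r$-cycle at least one of its vertices truly sees the entire cycle within $\lfloor (r-1)/2 \rfloor$ hops, handling both even and odd $r$ and ensuring no cycle is missed at the boundary (the exploration radius is $\lfloor (r-1)/2 \rfloor$, not $\lfloor r/2 \rfloor$, so the witnessing vertex must lie strictly on the shorter arc). The aggregation step is routine given \Cref{lemma:AB}, and the local flooding step is immediate in $\local$; so the substance of the proof is confined to establishing the visibility claim and tuning the aggregation degree to obtain the $\log n/\log\log n$ factor.
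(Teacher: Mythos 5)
Your proposal is correct and follows essentially the same route as the paper: $\lfloor (r-1)/2\rfloor$ rounds of local flooding, local detection of a witnessed $r$-cycle, and then dissemination of the one-bit answer via the global network, exploiting the $\Theta(\log n)$ fan-out of $\ncc$ to obtain depth $\mathcal{O}(\log n/\log\log n)$. The only cosmetic difference is in the final step: the paper lets every witness broadcast via pointer jumping with fan-out $\log n$ (observing that messages dropped due to capacity overflow are harmless, since all messages carry the same bit), whereas you run a convergecast-plus-broadcast over a fixed $\log n$-ary tree; the two are equivalent, and your parity/visibility argument just fills in a detail the paper leaves implicit.
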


\begin{proof}
First, every node in the graph performs \emph{flooding} for $\lfloor (r-1)/2 \rfloor$ rounds via the local network. Then, every node can determine (locally) whether it participates in a $r$-cycle; if it does, it can disseminate the information to the entire network via \emph{pointer jumping} in $\mathcal{O}(\log n / \log \log n)$ rounds via the global network; observe that the capacity of the global network might be exceeded, leading to the loss of messages, but this does not affect the performance, or indeed the correctness of the algorithm.
\end{proof}

Notice that the $\log \log n$ factor stems from the fact that each node can transmit to $\log n$ other nodes via the global network, slightly accelerating the pointer jumping process. Moreover, given that $\Omega(\log n/ \log \log n)$ rounds are required in $\ncc$ to broadcast a message \cite{10.1145/3323165.3323195}, and for certain graphs (e.g. a path with an $r$-cycle at the one end) the local network does not offer an asymptotic improvement to the broadcasting phase, the derived round complexity is optimal when $r$ is sufficiently small.

\begin{proposition}
    \label{proposition:counting_cycles-hybrid}
For any graph $G$ we can count the number of $r$-cycles in $ (r-1) \mathcal{O}(\log n)$ rounds of the $\hybrid$ model.
\end{proposition}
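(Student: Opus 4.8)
The plan is to reduce the exact counting of $r$-cycles to a purely local enumeration task followed by a single global summation, mirroring the detection algorithm of \Cref{proposition:detecting_cycles-hybrid} but taking care that every occurrence is accounted for with the correct multiplicity. First I would let every node $u$ perform flooding over the local network for $\lfloor r/2 \rfloor + 1$ rounds; since $\local$ imposes no bound on message sizes, after this phase $u$ knows the entire subgraph induced on its ball of radius $\lfloor r/2\rfloor$, together with the identifiers of all the incident vertices. The key geometric observation is that if $u$ lies on an $r$-cycle, then every other vertex of that cycle is at distance at most $\lfloor r/2\rfloor$ from $u$: the two arcs emanating from $u$ have lengths summing to $r$, so the farthest vertex is the ``antipodal'' one at distance $\lfloor r/2\rfloor$. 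Hence after the flooding phase $u$ can enumerate locally, with no further communication, exactly the set of $r$-cycles passing through it.

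Second, I would turn these local enumerations into a single global count. Each node $u$ computes the number $c_u$ of $r$-cycles through it, where a cycle is identified with its (unoriented) edge set so that $u$ counts it once. Every $r$-cycle is then witnessed by each of its $r$ vertices, so that $\sum_{u} c_u = r \cdot (\#\,r\text{-cycles})$. It therefore suffices to evaluate this sum globally: invoking the aggregate-and-broadcast primitive of \Cref{lemma:AB} with the distributive function $\textsc{Sum}$ lets every node learn $\sum_u c_u$ in $\mathcal{O}(\log n)$ rounds, after which each node divides by $r$ to recover the exact number of $r$-cycles. Because every vertex obtains the aggregate, the global output requirement is met.

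The part I expect to require the most care is guaranteeing that the multiplicity is \emph{exactly} $r$ rather than an approximate or fluctuating factor, i.e.\ that each vertex of a cycle genuinely sees the whole cycle and that no vertex counts the same cycle twice. This is where the flooding radius matters: for even $r$ the antipodal vertex sits at distance exactly $r/2$, so one must flood just past $\lfloor r/2\rfloor$ to be sure the ``far'' edge of the cycle is visible to $u$, and one must fix a canonical representation of a cycle (its unoriented edge set) so that a node does not double-count it by traversing it in two directions. Once the multiplicity is pinned down, correctness is immediate, and the round complexity is $\lfloor r/2\rfloor + \mathcal{O}(\log n)$, which lies comfortably within the claimed $(r-1)\,\mathcal{O}(\log n)$ bound; in the main regime of interest $r = \mathcal{O}(1)$ both expressions collapse to $\mathcal{O}(\log n)$. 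Note finally that, unlike the detection algorithm, the pointer-jumping dissemination is replaced by a single aggregation, since only one number has to be propagated to the whole network.
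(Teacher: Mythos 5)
Your local phase is sound: flooding for $\mathcal{O}(r)$ rounds does let every vertex of an $r$-cycle see the entire cycle inside its ball, and your multiplicity bookkeeping (each cycle counted once per vertex as an unoriented edge set, then dividing the global sum by $r$) is a legitimate alternative to the paper's convention of charging each cycle only to its smallest-ID vertex. The genuine gap is in the aggregation step. The number $c_u$ of $r$-cycles through a single node can be as large as roughly $n^{r-1}$, so it needs $\Theta\bigl((r-1)\log n\bigr)$ bits, and the global sum $\sum_u c_u$ needs up to $\Theta(r\log n)$ bits. The aggregate-and-broadcast primitive of \Cref{lemma:AB} lives in $\ncc$, where both the input values and the aggregate must fit in $\mathcal{O}(\log n)$-bit messages; you cannot feed it values of $(r-1)\log n$ bits and still conclude $\mathcal{O}(\log n)$ rounds. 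This is precisely the obstacle the paper's proof is organized around: it splits each local count into $r-1$ chunks of $\log n$ bits, runs the \textsc{Sum} aggregation once per chunk (each chunk-sum fits in $\mathcal{O}(\log n)$ bits since it is at most $n\cdot n$), and lets every node reassemble the total locally --- which is exactly where the $(r-1)\,\mathcal{O}(\log n)$ bound in the statement comes from. Your claimed complexity $\lfloor r/2\rfloor + \mathcal{O}(\log n)$, which would be strictly better than the proposition for growing $r$, should have been a warning sign.

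The gap is fixable, and in the regime you flag as the main one it is only about constants: for $r = \mathcal{O}(1)$ the counts occupy $\mathcal{O}(\log n)$ bits and a single (constant number of) invocation(s) of \Cref{lemma:AB} suffices. But the proposition is stated for a general parameter $r$, and its round complexity grows with $r$ precisely because of the chunked aggregation you omit; to repair your argument for general $r$, replace the single \textsc{Sum} call by $\Theta(r)$ calls, one per $\log n$-bit chunk of $c_u$, after which you land back at the stated $(r-1)\,\mathcal{O}(\log n)$ bound.
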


\begin{proof}
As in the previous protocol, every node performs flooding for $\lfloor (r-1)/2 \rfloor$ rounds via the local network. In this way, every node will be able to determine the number of $r$-cycles it participates in; we assume that every node will only take into account the cycles for which it has the smallest ID in order to avoid "double counting" during the aggregation process. Now observe that this number is at most $\binom{n}{r-1} \leq n^{r-1}$, and hence, it can be represented with $(r-1) \log n$ bits. Thus, we let every node split its number into $r-1$ messages, each representing a corresponding $\log n$-chunk of the binary representation. Then, we employ for $r-1$ iterations the aggregate-and-broadcast protocol of \Cref{lemma:AB} for the distributive aggregate function $\textsc{Sum}$, each time for a different chunk. This would require $(r-1) \mathcal{O}(\log n)$ rounds, and afterwards every node can perform the addition of the chunks locally.
\end{proof}

We should note that even these simple protocols improve exponentially over the best-known algorithms in the powerful $\congclique$ model which are based on matrix multiplication \cite{DBLP:journals/dc/Censor-HillelKK19}, leading to a round-complexity of $\mathcal{O}(n^{1 - 2/\omega})$, where $\omega < 2.32728596$ is the exponent of matrix multiplication \cite{alman2020refined}. Moreover, $\hybrid$ offers a substantial improvement over using only $\local$---which trivially requires $\Omega(D)$---or only $\ncc$. Indeed, in the sequel we present some lower bounds in the $\bcc$ model, which is substantially more powerful than $\ncc$  (potentially up to a logarithmic factor).

\subsection{Lower Bounds in \texorpdfstring{$\bcc$}{}}

In the $\bcc$ model we can establish $\widetilde{\Omega}(n)$ lower bounds for counting and detecting cycles. In particular, we commence with the following theorem:

\begin{theorem}
    \label{theorem:bcc-counting}
    Counting the number of $r$-cycles for $r = \mathcal{O}(1)$ with probability at least $2/3$ requires $\widetilde{\Omega}(n)$ rounds of $\bcc$.
\end{theorem}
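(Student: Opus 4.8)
The plan is to follow the same blueprint as the directed-girth lower bound of \Cref{theorem:directed_girth-bcc}: reduce $\disj_{k^2}$ to counting $r$-cycles in $\bcc$, choosing $k = \Theta(n)$ so that the $\Omega(k^2)$ communication lower bound of \Cref{theorem:disj} translates into $\widetilde{\Omega}(n)$ rounds. I would reuse (a length-$r$ generalization of) the four-layer gadget $G_k^{x,y}$ on $U \cup V \cup V' \cup U'$, with the fixed matchings $\{(v_i,v_i')\}$ and $\{(u_i',u_i)\}$, and with Alice and Bob encoding their inputs as the edges $(u_i,v_j)$ and $(v_j',u_i')$ exactly as in that reduction. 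The crucial difference from the girth argument is that for counting I need the number of $r$-cycles to be a \emph{clean} function of the inputs rather than merely a threshold; concretely, I would argue that this number equals $|x \cap y|$.

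The key structural claim is that in the directed layered graph every directed cycle must traverse the loop $U \to V \to V' \to U' \to U$, since each layer has outgoing edges only into the next. Hence one traversal (of the designed length) visits some $u_i \to v_j$ (forcing $x_{i,j}=1$), the matching edge $v_j \to v_j'$, some $v_j' \to u_i'$ (forcing $y_{i,j}=1$, because the returning matching edge $u_i' \to u_i$ pins the index back to $i$), and closes at $u_i$. Thus there is exactly one such cycle per coordinate $(i,j)$ with $x_{i,j}=y_{i,j}=1$, so the count of shortest cycles equals $|x\cap y|$, while longer cycles arise only from multiple traversals and have strictly larger length. To obtain cycles of any prescribed constant length $r$ I would subdivide the fixed $U' \to U$ edges into paths so that one traversal of the loop has length exactly $r$; the layered structure is preserved, so the count is unaffected.

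With the count established, the reduction is immediate: an algorithm $\mathcal{A}$ that counts $r$-cycles lets Alice and Bob simulate the $\bcc$ protocol on $G_k^{x,y}$, recover the single output number $C = |x\cap y|$, and decide disjointness by testing $C = 0$. As in \Cref{theorem:directed_girth-bcc}, the information they must exchange to run the simulation is bounded by the broadcasts crossing the cut between the $U\cup V$ nodes (simulated by Alice) and the $U'\cup V'$ nodes (simulated by Bob): $O(k)$ nodes each broadcasting $O(\log n)$ bits per round, i.e.\ $O(k\log n)$ bits per round. Since \Cref{theorem:disj} forces $\Omega(k^2)$ bits to be exchanged, the number of rounds is $\Omega(k/\log n) = \widetilde{\Omega}(n)$.

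The main obstacle is making the count \emph{exactly} $|x\cap y|$ with no contamination from spurious $r$-cycles, and in particular from cycles lying entirely on one player's side, whose number would otherwise depend on that player's input and prevent either party from isolating $|x\cap y|$ out of the single value $C$. The layered directedness is precisely what rules these out, since after its first step any cycle is forced to cross between the two sides, and this is the reason I would prefer the directed gadget here. If one insists on undirected $r$-cycles, the same effect must instead be enforced by orientation-forcing subgadgets, or by routing each input edge through a private vertex so that the same-side subgraphs remain acyclic; this is the only delicate point in the construction, the remaining accounting being routine.
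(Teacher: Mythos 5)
There is a genuine gap here. Your main argument establishes the lower bound for counting \emph{directed} $r$-cycles, but the theorem (as the paper states and proves it) concerns undirected graphs: the paper's own gadget is undirected, the paper writes ``directed'' explicitly whenever it means directed (as in \Cref{theorem:directed_girth-bcc}), and the surrounding discussion---contrasting counting with the impossibility of a detection lower bound for undirected even cycles due to $\ex(n, C_4) = \Theta(n^{3/2})$---only makes sense in the undirected setting. For directed graphs your statement is essentially subsumed by \Cref{theorem:directed_girth-bcc}: distinguishing directed girth $4$ from $\geq 8$ already amounts to detecting (hence, a fortiori, counting) directed $4$-cycles, and indeed your reduction only uses the test $C=0$, so it is really a detection argument. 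The entire content of \Cref{theorem:bcc-counting} lies in the undirected case, which you defer to ``orientation-forcing subgadgets'' or subdivision, and your justification for the latter is incorrect as stated: routing each input edge through a private vertex does \emph{not} make the same-side subgraphs acyclic---it merely doubles the length of their cycles (a $4$-cycle inside Alice's bipartite graph becomes an $8$-cycle). A subdivision argument can be salvaged, but only via a length-separation analysis you do not give (with each input edge subdivided into a path of length $t$, crossing cycles have length $2t+2$ while same-side cycles have length at least $4t$, so the counts separate once $t \geq 2$).

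More importantly, the premise that pushes you toward directedness is a misconception. You write that same-side cycles would ``prevent either party from isolating $|x\cap y|$ out of the single value $C$.'' They do not: the paper's key observation is that Alice can count the cycles lying entirely on her side \emph{locally} (she knows all of her own edges), Bob likewise, and the two can exchange these counts using $O(\log n)$ bits---negligible against the $\Omega(k^2)$ disjointness bound of \Cref{theorem:disj}. Subtracting these from $C$ leaves exactly the number of crossing cycles, which in the paper's undirected gadget---two arbitrary graphs encoding $x$ and $y$ on $n/2$ nodes each, joined by a perfect matching---equals $|x\cap y|$, since any crossing $4$-cycle must consist of two matching edges, one Alice edge, and one Bob edge. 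With this local-count-and-subtract step no orientation or acyclicity is needed, and your own four-layer gadget would also go through undirected. Your simulation and round-counting steps are fine; what is missing is precisely this subtraction trick, which is the heart of the paper's proof.
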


Naturally, the same limitation applies for the $\ncc$ model comprising $\hybrid$. We have not seen \Cref{theorem:bcc-counting} being stated before in the literature, so we include a proof.

\begin{proof}[Proof of \Cref{theorem:bcc-counting}]
Consider an instance of the set disjointness problem with some arbitrary strings $x, y \in \{0,1\}^k$ for Alice and Bob respectively. We will reduce this problem to counting the number of $4$-cycles in a suitably constructed graph. To this end, consider a set of nodes $V = \{1, 2, \dots, r\}$, with $r$ being the smallest number such that $r(r-1)/2 \geq k$, and let $\phi$ be some injective function which maps every index $b \in [k]$ to an edge $\{i, j\}$, for some $i, j \in [r]$ with $i \neq j$. Let $E$ be the set of edges so that $\{i, j\} \in E$ iff $\{i, j\}$ belongs to the range of $\phi$ and $\phi^{-1}(\{i, j\}) = b$ with $x_b = 1$; in words, Alice's input string $x$ is encoded as a set of edges $E$. Similarly, Bob encodes his input string $y$ as a set of edges $E'$ on a set of nodes $V' = \{1', 2', \dots, r'\}$; we assume that Bob's encoding is performed via an injective mapping $\phi': b \mapsto \{i', j'\}$, where $\{i, j\} = \phi(b)$. Finally, we connect with edges the corresponding nodes of $V$ and $V'$, leading to a graph $G^{x,y}_k = (V \cup V', E \cup E' \cup \{ \{i, i'\} : i \in [r]\})$. This construction is illustrated in \Cref{fig:subb1}.

Now consider some distributed algorithm $\mathcal{A}$ which counts the number of $4$-cycles in the $\bcc$ model with probability $2/3$. We claim that Alice and Bob can employ $\mathcal{A}$ on the induced graph $G_k^{x,y}$ in order to solve the set disjointness problem. Specifically, first observe that the players can directly simulate the communication protocol of $\mathcal{A}$ in order to determine the total number of $4$-cycles in $G_{k}^{x,y}$. Moreover, we claim that a $4$-cycle in $G_k^{x,y}$ can (i) consist exclusively of nodes from $V$, (ii) consist exclusively of nodes from $V'$, or (iii) can be expressed as $i \rightarrow i' \rightarrow j' \rightarrow j \rightarrow i$. Importantly, Alice and Bob can compute the number of $4$-cycles of type (i) and (ii) respectively locally, without requiring any communication; thus, they can also learn the number of $4$-cycles of type (ii) and (i) respectively by sending $\mathcal{O}(\log n)$ bits with the other party---which has already determined locally the number of $4$-cycles induced by its corresponding nodes. Therefore, we have concluded that Alice and Bob have determined the number of $4$-cycles of type (iii). However, it follows that $x \cap y = \emptyset$ if and only if the number of such cycles is $0$. Therefore, they have managed to solve the set disjointness problem with probability $2/3$. As a result, we know from \Cref{theorem:disj} that $\Omega(k)$ bits were exchanged between nodes of $V$ and $V'$ during the execution of algorithm $\mathcal{A}$. Given that every node can only transmit $\mathcal{O}(\log n)$ (distinct) bits in the $\bcc$, we derive the desired $\Omega(k/(r \log n)) = \widetilde{\Omega}(n)$ round-complexity lower bound.
\end{proof}

A natural question is whether we can directly modify the technique we applied for counting $4$-cycles in order to establish an $\widetilde{\Omega}(n)$ lower bound for \emph{detecting} a $4$-cycle; observe that this would require that no $4$-cycles are present within the induced graphs on $V$ and $V'$. Unfortunately, there is an inherent barrier which relates to the \emph{extremal function} $\ex(n, H)$---the maximum number of edges an $n$-node graph could have without containing a subgraph isomorphic to $H$. It turns out that $\ex(n, C_4) = \Theta(n^{3/2})$, and more broadly, for any bipartite graph $H$ it is known that $\ex(n, H) = o(n^2)$. As a result, we cannot obtain an $\widetilde{\Omega}(n)$ lower bound for detecting an \emph{even} cycle with the method we described. However, this is not the case for \emph{odd} cycles, as implied by the classic Erd\H{o}s-Stone theorem \cite{bams/1183510246,Bollobas1998Modern}.

\begin{theorem}
    Let $H$ be an arbitrary graph with $r = \chi(H) > 2$. Then, 
    \begin{equation}
        \ex(n, H) = \left( \frac{r-2}{r-1} + o(1) \right) \binom{n}{2} = \Theta(n^2).
    \end{equation}    
\end{theorem}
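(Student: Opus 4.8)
The plan is to prove the two matching bounds separately, as is standard for the Erd\H{o}s--Stone(--Simonovits) theorem. Throughout I write $r = \chi(H)$ and observe that the claimed constant is $\frac{r-2}{r-1} = 1 - \frac{1}{r-1}$.

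\emph{Lower bound.} First I would exhibit a dense $H$-free graph. The natural candidate is the Tur\'an graph $T_{r-1}(n)$, the complete $(r-1)$-partite graph with parts as balanced as possible. Every subgraph of $T_{r-1}(n)$ is properly $(r-1)$-colorable (color by part), so it cannot contain $H$, which needs $r > r-1$ colors. A direct count gives $e(T_{r-1}(n)) = \left(1 - \frac{1}{r-1}\right)\binom{n}{2} - O(n) = \left(\frac{r-2}{r-1} + o(1)\right)\binom{n}{2}$, establishing $\ex(n, H) \geq \left(\frac{r-2}{r-1} + o(1)\right)\binom{n}{2}$.

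\emph{Upper bound: reduction to a clique blow-up.} The key reduction is that it suffices to force a blow-up of a clique. Since $H$ is $r$-colorable with at most $|V(H)|$ vertices per color class, taking $t = |V(H)|$ gives $H \subseteq K_r(t)$, the complete $r$-partite graph with $t$ vertices per part. Hence it is enough to show: for every fixed $r, t$ and every $\epsilon > 0$ there is $n_0$ such that every graph $G$ on $n \geq n_0$ vertices with $e(G) \geq \left(\frac{r-2}{r-1} + \epsilon\right)\binom{n}{2}$ contains $K_r(t)$. The upper bound on $\ex(n,H)$ then follows (as $H$-free forbids $K_r(t)$), and since $\epsilon$ is arbitrary this matches the lower bound; the identification with $\Theta(n^2)$ is immediate because $\frac{r-2}{r-1} \geq \frac{1}{2}$ for $r \geq 3$.

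\emph{Forcing $K_r(t)$.} I would prove this by induction on $r$, after reducing to a minimum-degree hypothesis: a standard cleaning step (repeatedly deleting any vertex whose current degree falls below the threshold) passes from $e(G) \geq \left(\frac{r-2}{r-1} + \epsilon\right)\binom{n}{2}$ to an induced subgraph on $m = \Omega(n)$ vertices with minimum degree at least $\left(\frac{r-2}{r-1} + \frac{\epsilon}{2}\right)m$. For the base case $r = 2$ the target $K_2(t) = K_{t,t}$ is bipartite, and a graph of minimum degree $\Omega(m)$ has $\Omega(m^2)$ edges, so the K\H{o}v\'ari--S\'os--Tur\'an bound $\ex(m, K_{t,t}) = O(m^{2-1/t}) = o(m^2)$ forces a copy. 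For the inductive step, the gap $\frac{r-2}{r-1} - \frac{r-3}{r-2} = \frac{1}{(r-1)(r-2)} > 0$ is a positive constant, so the minimum degree comfortably exceeds the $K_{r-1}(s')$-threshold; the hypothesis yields a blow-up $K_{r-1}(s')$ with parts $V_1, \ldots, V_{r-1}$, for any prescribed large $s'$. Because every vertex now has few non-neighbors \emph{in total}, it has few non-neighbors inside the tiny set $V_1 \cup \cdots \cup V_{r-1}$; double-counting non-edges between this blow-up and the rest shows that $\Omega(m)$ vertices are each adjacent to at least $t$ vertices of every $V_i$. Assigning to each such vertex the tuple of $t$-subsets it hits, there are only $\binom{s'}{t}^{r-1}$ possible tuples—a constant—so for $m$ large the pigeonhole principle produces $t$ vertices sharing one tuple; these, together with the corresponding $t$-subsets, form $K_r(t)$.

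\emph{Main obstacle.} The delicate bookkeeping lies in the inductive step: one must check that the cleaning step keeps $m = \Omega(n)$, and choose $s'$ large enough that the ``at least $t$ neighbors in each $V_i$'' conclusion holds (so $s' \gg t$) while keeping $\binom{s'}{t}^{r-1}$ a constant, so that the pigeonhole fires once $n$ is large. Coordinating the three scales $t \ll s' \ll n$ against the fixed density surplus $\epsilon$ is the technical crux. An alternative I would keep in reserve is the Szemer\'edi regularity route: apply the regularity lemma, pass to the reduced graph (whose density exceeds $\frac{r-2}{r-1}$), extract a $K_r$ there by Tur\'an's theorem, and promote it to $K_r(t)$ via the embedding lemma—trading the elementary pigeonhole for heavier but cleaner machinery, in which the embedding lemma then becomes the main technical step.
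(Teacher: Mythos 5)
Your sketch is correct: it is the standard proof of the Erd\H{o}s--Stone--Simonovits theorem (Tur\'an graph for the lower bound; upper bound by forcing $K_r(t)$ via induction on $r$ with a minimum-degree cleaning step, the K\H{o}v\'ari--S\'os--Tur\'an bound as the base case, and a pigeonhole argument for the inductive step). The paper itself gives no proof---it invokes this as a classical result, citing Erd\H{o}s--Stone and Bollob\'as---and your argument is precisely the one found in those references, so the two approaches coincide.
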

As a result, we can show the following:

\begin{theorem}[\cite{DBLP:conf/podc/DruckerKO13}]
    Detecting the existence of a $5$-cycle with probability at least $2/3$ requires $\widetilde{\Omega}(n)$ rounds of $\bcc$.
\end{theorem}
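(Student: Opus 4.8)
The plan is to reduce from $\disj_k$ with $k = \Theta(n^2)$, exactly mirroring the reduction behind the $C_4$ counting bound (\Cref{theorem:bcc-counting}), but exploiting the odd-versus-even distinction that the preceding Erd\H{o}s--Stone discussion isolates. I would partition the $n$ vertices into an ``Alice block'' and a ``Bob block'' and encode each player's $k$-bit input as the presence or absence of edges inside a \emph{bipartite} subgraph on her/his own block. The decisive point is that a bipartite graph has no odd cycle, so each player's input-subgraph is automatically $C_5$-free; consequently \emph{every} $C_5$ in the combined graph is forced to traverse the fixed ``connector'' edges joining the two blocks, i.e. to cross the Alice--Bob cut. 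This is precisely the feature that broke for $C_4$ (where $\ex(n,C_4) = o(n^2)$ and each side could independently contain $4$-cycles, so neither player even needed to communicate), and it is why the argument goes through for the odd cycle $C_5$: since $\chi(C_5) = 3 > 2$, the Erd\H{o}s--Stone theorem gives $\ex(n,C_5) = \Theta(n^2)$, so each side can still pack $\Theta(n^2)$ input-bits while remaining $C_5$-free. A complete bipartite graph on $\Theta(n)$ vertices already realises this density, which is what sustains the choice $k = \Theta(n^2)$.

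Concretely I would index $b = (i,j)$ and give Alice vertex sets $A = \{a_i\}$, $B = \{b_j\}$ with the input edge $\{a_i,b_j\}$ present iff $x_{i,j} = 1$, and symmetrically give Bob $A' = \{a_i'\}$, $B' = \{b_j'\}$ with $\{a_i',b_j'\}$ present iff $y_{i,j} = 1$. The fixed connectors are then designed so that the two input-edges sharing a common index close up into a $5$-cycle; for instance, the connectors $\{a_i',a_i\}$ together with a length-two connector $b_j \to c_j \to b_j'$ through a middle vertex $c_j$ yield the cycle $a_i \to b_j \to c_j \to b_j' \to a_i' \to a_i$ exactly when both $\{a_i,b_j\}$ and $\{a_i',b_j'\}$ are present. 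The structural claim to prove is that $G^{x,y}_k$ contains a $C_5$ if and only if $x \cap y \neq \emptyset$: the forward direction is the cycle just displayed, while the converse requires checking that the connectors alone (absent a matching pair of input-edges) generate no $C_5$, and that, because both blocks are bipartite, no $C_5$ can live entirely inside a single block.

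Given the structural claim, the simulation and counting steps are routine and follow the template of \Cref{theorem:bcc-counting}. From a $\bcc$ algorithm $\mathcal{A}$ that detects a $C_5$ with probability $2/3$, Alice and Bob jointly simulate $\mathcal{A}$ on $G^{x,y}_k$, each tracking the states of its own vertices and relaying to the other party the $\mathcal{O}(\log n)$-bit broadcasts emitted by its vertices; by the structural claim this decides $\disj_k$ with probability $2/3$. Hence \Cref{theorem:disj} forces $\Omega(k) = \Omega(n^2)$ bits across the Alice--Bob cut, and since each of the $\mathcal{O}(n)$ vertices broadcasts only $\mathcal{O}(\log n)$ bits per round, a single round carries $\mathcal{O}(n \log n)$ bits across the cut. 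This yields the claimed $\Omega\!\left(n^2/(n\log n)\right) = \widetilde{\Omega}(n)$ round-complexity lower bound.

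The main obstacle is the correctness of the gadget, i.e. establishing the ``only if'' direction of the structural claim: one must verify that the connector edges never spuriously create a $C_5$ for a non-intersecting index and never close an odd cycle using connectors and a single player's edges, while simultaneously keeping each block $C_5$-free. Here the bipartiteness of each block (equivalently, the Erd\H{o}s--Stone packing for a subgraph of chromatic number exceeding two) is the essential lever, since it rules out all same-side odd cycles and thereby pins every $C_5$ to a cut-crossing witness of an intersecting coordinate.
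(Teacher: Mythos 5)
Your proposal is correct and takes essentially the same route as the paper: the paper's treatment (which defers to Drucker--Kuhn--Oshman and its accompanying figure) is precisely this modification of the $C_4$-counting reduction, where each player's block is kept bipartite---hence odd-cycle-free, exactly the density that the Erd\H{o}s--Stone bound $\ex(n,C_5)=\Theta(n^2)$ licenses---so that every $5$-cycle must cross the Alice--Bob cut, after which the standard simulation plus the $\Omega(k)$ set-disjointness bound yields $\Omega\left(n^2/(n\log n)\right)=\widetilde{\Omega}(n)$ rounds. Your explicit gadget (matching connectors $\{a_i,a_i'\}$ together with subdivided connectors $b_j - c_j - b_j'$, so that a $C_5$ closes exactly at an intersecting coordinate) is a correct concrete instantiation of what the paper leaves to the citation and figure.
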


The same theorem applies for detecting any $(2r+1)$-cycle when $r = \mathcal{O}(1)$. It should be noted that these connections were also made and articulated in \cite{DBLP:conf/podc/DruckerKO13}, but we state them here for completeness.

\begin{figure}[!ht]
\centering
\begin{subfigure}{.5\textwidth}
  \centering
  \includegraphics[scale=0.4]{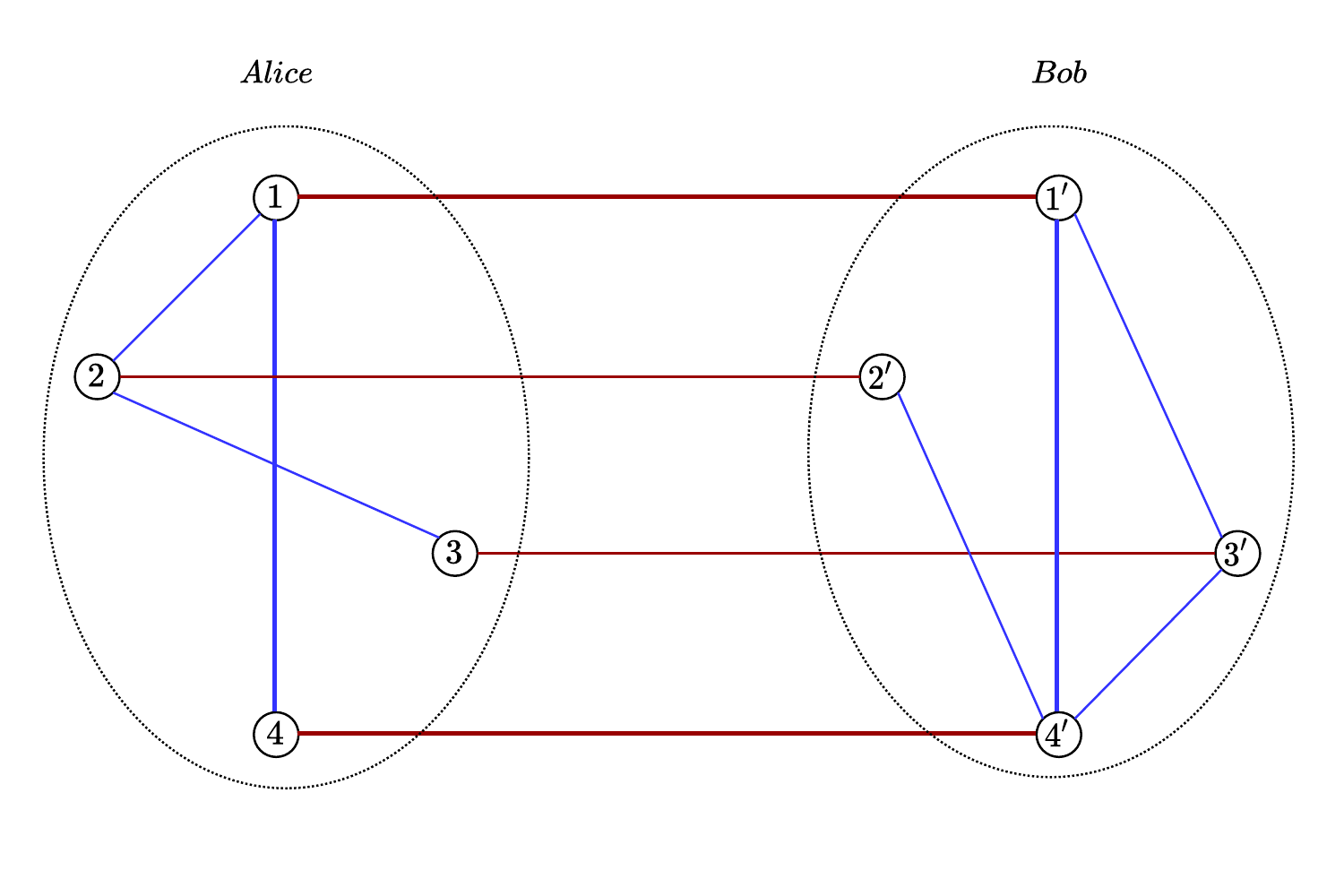}
  \caption{Reducing set disjointness to counting $4$-cycles.}
  \label{fig:subb1}
\end{subfigure}%
\begin{subfigure}{.5\textwidth}
  \centering
  \includegraphics[scale=0.35]{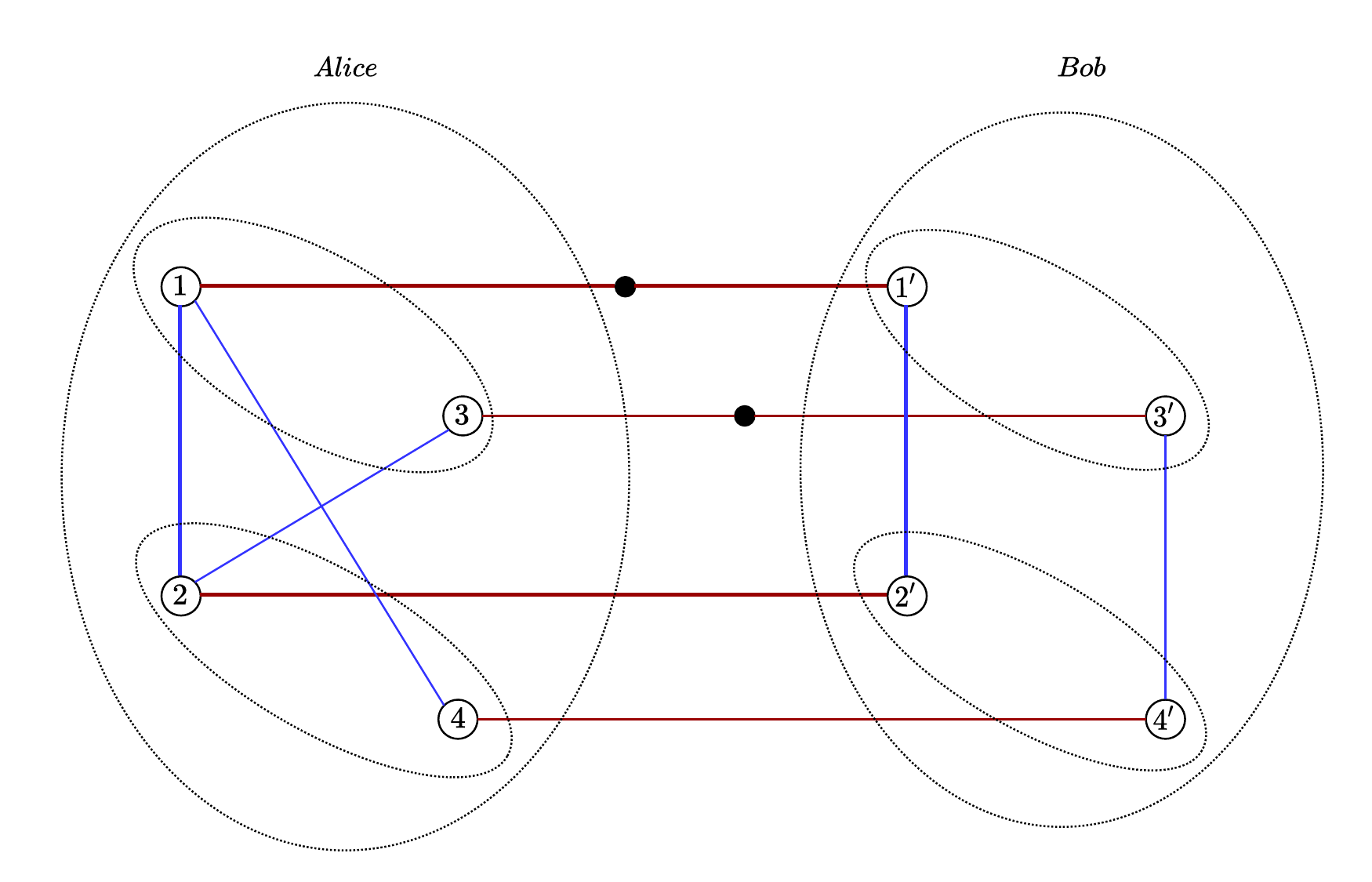}
  \caption{Reducing set disjointness to detecting $5$-cycles}
  \label{fig:subb2}
\end{subfigure}
\end{figure}

\section{Minimum Spanning Tree}
\label{appendix:MST}

Here we present an implementation of the classical algorithm of Boruvka for computing a minimum spanning tree in $\mathcal{O}(\log^2 n)$ rounds of $\congest + \ncc$. We stress that for this section the local edges are restricted to transfer $\mathcal{O}(\log n)$ bits per round, i.e. the local network is modeled with $\congest$. We also remark that the following implementation mainly uses primitives developed for the $\ncc$ in \cite{10.1145/3323165.3323195}, but our observation is that the local mode allows for a substantial speed-up in a key component of the algorithm, truncating the round complexity from $\mathcal{O}(\log^4 n)$ to $\mathcal{O}(\log^2 n)$ rounds.

In the MST problem the endpoints of every edge have to know at the end of the distributed algorithm whether the incident edge belongs to the MST. We let $V = \{u_1, \dots, u_n\}$ be the set of nodes of the corresponding graph, while we assume that the (edge) weights are unique, so that the MST is unique; note that this assumption is without any loss of generality given that ties can be broken based on the IDs of the incident edges. The standard Boruvka's algorithm with Heads/Tails clustering works as follows: At the beginning of the algorithm every node $u_i$ belongs to a separate connected component (or cluster) $C_i$. For every iteration, the nodes of every component $C_k$ determine the minimum-weight edge $\{u_i, u_j\}$ such that $u_i \in C_k$ and $u_j \in V \setminus C_k$. Next, every component flips a coin, and the minimum-weight edge $\{u_i, u_j\}$ is added to the MST only if the component of $u_i$ has flipped Heads and the component of $u_j$ has flipped Tails. Moreover, whenever an edge is added to the MST the corresponding components "merge". This idea is not part of the original Boruvka's algorithm, but it was instead introduced in \cite{DBLP:conf/soda/GhaffariH16,DBLP:conf/podc/GhaffariKS17}; observe that under Heads/Tails clustering all the merges are of "star" shape, which---among others---facilitates the design of fast merging protocols. The above process is repeated until only a \emph{single} connected component emerges. It is easy to see that this algorithm outputs with high probability the MST in $\mathcal{O}(\log n)$ iterations; indeed, although the Heads/Tails clustering reduces the merges occurring per iteration, it is innocuous as it only affects the round complexity by a constant factor (with high probability).

\subsection{Implementation in \texorpdfstring{$\congest + \ncc$}{}}

We will explain how to efficiently implement the previously described algorithm in the $\congest + \ncc$ model. First of all, every connected component $C_i$ will have a single leader node, denoted with $\ell(C_i)$, corresponding to some node within the component; naturally, the leader of every initial component $C_i = \{u_i\}$ will be $\ell(C_i) = u_i$. We stress that the leader will not explicitly know the nodes comprising its component, as this would require an overly amount of communication; instead, the invariance we will maintain is that every node knows its leader. In this context, our proposed implementation works as follows.

Consider a component $C_i$ at some iteration of the algorithm. We first need to ensure that the leader node knows the minimum-weight edge which connects $C_i$ to some other component. To this end, every node $u \in C_i$ communicates with its neighbors in order determine the subset of its neighborhood which lies on a different component; this can be performed in a single round of $\congest$ given that all the adjacent nodes can simply send the IDs of their corresponding leaders, and $u$ can compare these IDs to its own leader $\ell(C_i)$. Then, $u$ can compute locally the minimum-weight edge from $u$ to a different component. The next step is to determine the minimum over all the derived numbers within each component; this can be done in $\mathcal{O}(\log n)$ rounds of $\ncc$ via the aggregation algorithm of \cite{10.1145/3323165.3323195}. Consequently, the leader of the component will know the minimum-weight edge; during this process, it will also be useful to broadcast the ID of the component's leader which corresponds to the minimum-weight edge. 

Afterwards, the leader of every component flips a coin and observes either Heads or Tails. If the outcome is Heads, then the leader has to broadcast to every node in $C_i$ the ID of the leader in the corresponding component, which will serve as the new leader in the augmented component. This step can be implemented again in $\mathcal{O}(\log n)$ rounds via the \emph{multicast} algorithm in \cite{10.1145/3323165.3323195}. Otherwise, if the leader observes Tails it does not have to disseminate any information to the nodes within the component since the leader will remain the same. As a result, we have established the following:

\begin{theorem}
    \label{theorem:MST-hybrid}
    There exists a distributed algorithm which determines with high probability a minimum spanning tree in $\mathcal{O}(\log^2 n)$ rounds of $\congest + \ncc$.
\end{theorem}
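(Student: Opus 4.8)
The plan is to analyze the Boruvka-with-Heads/Tails implementation described above, establishing separately that (a) the algorithm terminates after $\mathcal{O}(\log n)$ iterations with high probability, and (b) each iteration can be implemented in $\mathcal{O}(\log n)$ rounds of $\congest + \ncc$; multiplying the two bounds yields the claimed $\mathcal{O}(\log^2 n)$ round complexity. Throughout I would maintain the invariant that every node knows the identifier of its current leader $\ell(C_i)$, while crucially no leader ever needs to know the membership of its component explicitly---this is precisely what keeps the communication within the $\ncc$ budget and what enables the speed-up over a naive implementation.

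For the per-iteration cost I would decompose each iteration into its two communication-heavy subtasks. First, determining the minimum-weight outgoing edge of each component: a single round of $\congest$ lets every node learn the leaders of its neighbors and hence identify its cheapest edge leaving the component, after which an aggregation over the group of nodes sharing a common leader identifier computes the component-wide minimum; by the group-aggregation primitive of \cite{10.1145/3323165.3323195} (cf.\ \Cref{lemma:part-wise-ncc}) this takes $\mathcal{O}(\log n)$ rounds of $\ncc$, and along the way the leader also learns the identifier of the leader on the far side of this edge. Second, once each leader flips its coin, a Heads component that adopts a Tails neighbor's leader must propagate the new leader identifier to all of its nodes; this is a single group multicast, again $\mathcal{O}(\log n)$ rounds in $\ncc$. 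Here the Heads/Tails rule is essential: because every merge is star-shaped, the leader update is a single multicast rather than a chain of relabelings, and the number of distinct identifiers each node must receive per round stays $\mathcal{O}(\log n)$, so the node-capacity of $\ncc$ is never exceeded.

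For the iteration count I would argue that each surviving component merges with at least constant probability per phase. Conditioned on a component having a well-defined minimum-weight outgoing edge $e$, that edge is added to the forest precisely when the two endpoint-components flip Heads and Tails in the correct orientation, which happens with probability $1/4$; hence the number of components decreases by a constant factor in expectation, and a standard concentration argument shows that after $\mathcal{O}(\log n)$ phases a single component remains with high probability. Correctness is the usual Boruvka guarantee: since the edge weights are unique (ties broken by incident identifiers), every minimum-weight outgoing edge lies in the unique MST, and the Heads/Tails filtering merely defers some of these edges to later phases without ever adding a non-tree edge or closing a cycle, so the union of added edges is exactly the MST.

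The main obstacle---and the crux of the improvement from $\mathcal{O}(\log^4 n)$ to $\mathcal{O}(\log^2 n)$ rounds---is implementing the group-aggregation and group-multicast over components that are represented only implicitly (by a shared leader identifier) within $\mathcal{O}(\log n)$ rounds of $\ncc$, together with verifying that the local exchange of neighbors' leader identifiers, which in a pure $\ncc$ setting would itself be costly for high-degree nodes, can be offloaded to a single round of $\congest$. I would also take particular care over the concentration argument for the number of phases, since components of widely varying sizes merge simultaneously and one must argue that the \emph{count} of components, not merely its expectation, drops geometrically; coupling this with a union bound over the $\mathcal{O}(\log n)$ phases then delivers the high-probability guarantee.
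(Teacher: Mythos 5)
Your proposal is correct and follows essentially the same route as the paper's own proof: Boruvka with Heads/Tails clustering, the invariant that nodes know only their leader's identifier, one round of $\congest$ to compare neighbors' leaders, and the $\ncc$ aggregation and multicast primitives of \cite{10.1145/3323165.3323195} giving $\mathcal{O}(\log n)$ rounds per phase and $\mathcal{O}(\log n)$ phases with high probability. The only difference is that you spell out the constant-probability merging and concentration argument that the paper leaves implicit, which is a welcome but not substantive addition.
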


\end{document}